\documentclass[11pt,a4paper,twoside,reqno]{amsart}
%%%%%%%%%%%%%%%%%%%%%%%%%%%%%%%%%%%%%%%%%%%%%%%%%%%%%%%%%%%%%%%%%%%%
% ================== Margins & Layout =================================
\addtolength{\voffset}{0cm} 
\addtolength{\textheight}{1cm} 
\addtolength{\hoffset}{-2cm}
\addtolength{\textwidth}{4cm}

\setlength{\parskip}{1mm}
\linespread{1.2}

%\pagestyle{fancy}
%\lhead{}
%\chead[]{}
%\rhead{}

%======== Theorem Environments ===============================

% ====== Change section title ===============================
\makeatletter
\def\@settitle{\begin{center}%
		\baselineskip14\p@\relax
		\normalfont\LARGE\scshape\bfseries
		%\uppercasenonmath\@title
		\@title
	\end{center}%
}

\def\@setauthors{%
  \begingroup
  \def\thanks{\protect\thanks@warning}%
  \trivlist
  \centering\footnotesize \@topsep30\p@\relax
  \advance\@topsep by -\baselineskip
  \item\relax
  \author@andify\authors
  \def\\{\protect\linebreak}%
%  \MakeUppercase{\authors}%
  \authors%
  \ifx\@empty\contribs
  \else
    ,\penalty-3 \space \@setcontribs
    \@closetoccontribs
  \fi
  \endtrivlist
  \endgroup
}

\makeatother

\makeatletter

\def\subsection{\@startsection{subsection}{2}%
	\z@{.5\linespacing\@plus.7\linespacing}{.5\linespacing}%
	{\normalfont\large\bfseries}}

\makeatother

\makeatother

% ================== Packages =================================
\usepackage[usenames, dvipsnames]{color}
\definecolor{darkblue}{rgb}{0.0, 0.0, 0.45}

\usepackage[colorlinks	= true,
raiselinks	= true,
linkcolor	= darkblue, %MidnightBlue,
citecolor	= Mahogany,
urlcolor	= ForestGreen,
pdfauthor	= {Peyman Mohajerin Esfahani},
pdftitle	= {},
pdfkeywords	= {},
pdfsubject	= {},
plainpages	= false]{hyperref}

\usepackage{dsfont,amssymb,amsmath,subfigure, graphicx,enumitem,multirow} %,etaremune,savesym}
\usepackage{amsfonts,dsfont,mathtools, mathrsfs,amsthm} 
\usepackage[amssymb, thickqspace]{SIunits}
\usepackage{algorithm,fancyhdr}
\usepackage{algorithmicx}
\usepackage{algpseudocode}

% ================== Others =================================
\allowdisplaybreaks
\date{\today}
%======== Theorem Environments ===============================
 %{\Roman{section}}
 %\Alph{subsection}}

\theoremstyle{theorem}
\newtheorem{Thm}{Theorem}[section]
\newtheorem{Prop}[Thm]{Proposition}

\newtheorem{Lem}[Thm]{Lemma}
\newtheorem{Cor}[Thm]{Corollary}
\newtheorem{As}[Thm]{Assumption}

\newtheorem{Rem}[Thm]{Remark}

\newtheorem{Def}[Thm]{Definition}

%======== Abbreviations ==========================
\newcommand{\mPr}{\mathbf{Pr}}

\newcommand{\bdw}{\boldsymbol{w}}
\newcommand{\mfq}{\mathfrak{q}}

\newcommand{\F}{\mathds{F}}
\newcommand{\T}{\mathds{T}}
\newcommand{\Lto}{\ell_2}
\newcommand{\Tr}{\textup{Trace}}
\newcommand{\Hto}{\mathcal{H}_2}

\newcommand{\Hmin}{\mathcal{H}_{\_}}
\newcommand{\Hinf}{\mathcal{H_{\infty}}}

\newcommand{\R}{\mathbb{R}}

\newcommand{\N}{\mathbb{N}}

\newcommand{\diff}{\mathrm{d}}

\DeclareMathOperator{\e}{e}

%======== More definitions (Optimization) ==========================

\newcommand{\M}{\mathcal{M}}

%======== Others ==========================

 %{x_{\rm DD}^\star}
 %{J_{\rm DD}^\star}

 %{\mathbb{B}\big(#1; #2\big)}

%{\prescript{}{#1}{\succeq}}

%\newcommand{\Fn}{F_{\wh \theta}}

\newcommand{\E}{\textbf{\textup{E}}}

\usepackage{epsfig} % for postscript graphics files
\usepackage{graphicx}
\usepackage{grffile}
\usepackage{amsmath}
\usepackage{enumitem}
\usepackage{booktabs}
\usepackage{multirow}
\usepackage{array}
\usepackage{cite}
\usepackage{bm}
\usepackage{pdfpages} 
\usepackage{subfigure}
\usepackage{booktabs}
\usepackage{xcolor}
\usepackage{algorithm}
\usepackage{algpseudocode}
\usepackage{wrapfig}
\usepackage{caption}

%\date{July 2021}

\begin{document}

\title[Multivariate FDE over finite-frequency range]{Robust Multivariate Detection and Estimation with Fault Frequency Content Information}
\author{Jingwei Dong, Kaikai Pan, S\'ergio Pequito and Peyman Mohajerin Esfahani}
\thanks{The authors are with the Delft Center for Systems and Control, Delft University of Technology, The Netherlands (\{J.Dong-6, P.MohajerinEsfahani\}@tudelft.nl), the College of Electrical Engineering, Zhejiang University, China (pankaikai@zju.edu.cn), and the Department of Electrical and Computer Engineering, University of Lisbon, Lisbon (sergio.pequito@tecnico.ulisboa.pt). This work is partially supported by the ERC grant TRUST-949796, CSC (China Scholarship Council) with funding number 201806120015, the National Natural Science Foundation of China under Grant 62201501 and Grant 52161135201.}
\maketitle

\begin{abstract}
    This paper studies the problem of fault detection and estimation (FDE) for linear time-invariant (LTI) systems with a particular focus on frequency content information of faults, possibly as multiple disjoint continuum ranges, and under both disturbances and stochastic noise.
    To ensure the worst-case fault sensitivity in the considered frequency ranges and mitigate the effects of disturbances and noise, an optimization framework incorporating a mixed $\Hmin/\Hto$ performance index is developed to compute the optimal detection filter. 
    Moreover, a thresholding rule is proposed to guarantee both the false alarm rate (FAR) and the fault detection rate~(FDR).
    Next, shifting attention to fault estimation in specific frequency ranges, an exact reformulation of the optimal estimation filter design using the restricted~$\Hinf$ performance index is derived, which is inherently non-convex. However, focusing on finite frequency samples and fixed poles, a lower bound is established via a highly tractable quadratic programming (QP) problem.
    This lower bound together with an alternating optimization (AO) approach to the original estimation problem leads to a suboptimality gap for the overall estimation filter design.
    The effectiveness of the proposed approaches is validated through applications of a non-minimum phase hydraulic turbine system and a multi-area power system.  
\end{abstract}

\section{Introduction}
Fault diagnosis has been the focus of research in the past decades due to its critical importance in ensuring the safety and reliability of various engineering systems, such as power networks, vehicle dynamics, and aircraft systems~\cite{hwang2009survey,gao2015survey}.
Timely and accurate FDE of faults while a system is still operating in a controllable condition, can help prevent further damage and reduce losses.
However, FDE performance is inevitably affected in practice by model uncertainties, disturbances, and stochastic noise, which can result in false alarms, missing detection, and large estimation errors. Hence, it is essential to consider these interferences when designing FDE methods.

In recent years, there also has been growing recognition of the need to address faults in specific frequency ranges.
This stems from the fact that many practical faults (or cyber-attack signals~\cite{pan2019static}) exhibit distinct frequency characteristics, e.g., incipient faults in low-frequency ranges and actuator stuck faults with zero frequency~\cite{wang2008finite}. 
Existing FDE methods developed for the entire frequency range can cause conservatism when dealing with these faults. 
Motivated by the above issues, this study focuses on the FDE problem in specific frequency ranges, considering both disturbances and stochastic noise.

\textbf{Fault detection:} 
A number of model-based fault detection methods have been developed for dynamical systems with disturbances and noise. 
The basic idea is to design residual generators using observer-based or parity-space approaches~\cite{gao2015survey}.
The outputs of residual generators (called residuals), that are used to indicate the occurrence of faults, should be sensitive to faults and robust to disturbances and noise, simultaneously.  
To this end, performance indices, such as~$\Hinf$ and~$\Hto$ norms are employed to measure the robustness against disturbances and noise. 
The~$\Hmin$ index, representing the worst-case fault sensitivity, is incorporated into the design of residual generators.
For instance, the authors in~\cite{hou1996lmi} first proposed the~$\Hmin/\Hinf$ observer. 
Another residual generation method~\cite{nyberg2006residual} developed in the framework of differential-algebraic equations (DAE) has attracted attention these years. 
This method can find residual generators of the possibly lowest order compared to conventional \mbox{observer-based} or parity-space approaches. 
Moreover, it offers much design freedom due to the ability to characterize all possible residual generators for systems represented by DAE.
As a result, different fault detection methods have been developed in the DAE framework, such as accounting for nonlinear terms~\cite{esfahani2015tractable} and modeling uncertainties~\cite{pan2021dynamic}.

Note that the above methods all consider the entire frequency range, where conservatism exists and the~$\Hmin$ index will be zero for strictly proper systems.
The authors in~\cite{liu2005lmi} addressed this issue by introducing a weighting function to enhance the $\Hmin$ index in a specific frequency range, and further provided the existing condition of a non-zero~$\Hmin$ index. 
However, finding an appropriate weighting function is complex.
In contrast, the generalized Kalman-Yakubovich-Popov (GKYP) lemma in~\cite{iwasaki2005generalized} provides a way to directly constrain the~$\Hmin$ index in a frequency range.
Based on the GKYP lemma, the authors in~\cite{wang2008finite} employed the~$\Hmin/\Hinf$ index to design a Luenberger observer for fault detection of LTI systems with enhanced fault sensitivity in a specific frequency range.
Furthermore, the integration of~$\Hmin/\Hinf$ index and the GKYP lemma has been incorporated into the design of fault detection approaches for linear parameter-varying descriptor systems~\cite{wang2017h} and nonlinear systems~\cite{li2020fault}.
Considering that the~$\ell_{\infty}$ norm representing the peak value of a signal is more suitable for residual evaluation compared to the $\Hinf$ norm, the authors in~\cite{tang2020fault} chose the~$\Hmin / \ell_{\infty}$ index to design the fault detection observer for linear descriptor systems.
For a more comprehensive analysis of different indices used in fault detection problems, such as~$\Hmin/\Hinf$, $\Hinf/\Hinf$, and~$\Hto/\Hinf$, see~\cite{liu2007optimal}.

In addition, the~$\Hmin$ index has been investigated in the time domain as well, where fault sensitivity in a finite or infinite time horizon is maximized, see for example~\cite{khan2014fault,han2022fault}.
It is worth mentioning that the aforementioned methods using the~$\Hinf$ and~$\ell_{\infty}$ norms typically consider disturbances or noise with bounded energy or peak values, which results in conservative diagnosis results. Moreover, the deterministic bounds are generally difficult to obtain in practical scenarios~\cite{boem2018plug}. 
Therefore, exploiting the stochastic nature of these signals can be a promising alternative.
Moreover, to our knowledge, little attention has been paid to designing residual generators for fault detection within specific frequency ranges, accounting for both disturbances and stochastic noise.

\textbf{Fault estimation:} 
Accurate fault estimation that provides the size and shape of faults is a fundamental task in the fault diagnosis area.
Many model-based fault estimation methods are based on observers~\cite{gao2015unknown,ghanipoor2023robust}, which generally require fault signals to be finitely differentiable.
Different from \mbox{observer-based} methods, fault estimation filters do not require estimates of system states and assumptions regarding the derivatives of fault signals, such as the \mbox{system-inversion-based} fault estimation filters developed in~\cite{wan2017fault}.
However, the existence of a stable \mbox{system-inversion-based} estimation filter cannot be ensured when there are unstable zeros (i.e., in non-minimum-phase systems).
Another approach to designing fault estimation filters is directly minimizing the difference between the transfer function of the fault subsystem and the identity matrix in the~$\Hinf$ optimization framework, as presented in~\cite{niemann2000design}.
Once again, the above estimation methods are for the entire frequency range. 

The existing methods for fault estimation in the frequency domain are primarily built on observer-based methods and the GKYP lemma.
The authors in~\cite{wei2010robust} designed a fault estimation observer for LTI systems, where the $\Hinf$ norm defined in a specific frequency range was employed to mitigate the effects of disturbances and faults on estimation errors.
The result was then extended and applied to Takagi–Sugeno fuzzy systems~\cite{zhang2014analysis} and descriptor systems~\cite{wang2019robust}.
However, the design of fault estimation filters considering fault frequency content information has received considerably less attention.
To the best of our knowledge, only~\cite{ding2008model} and~\cite{stefanovski2019input} investigated this problem. 
In particular, the authors in~\cite[Theorem 14.6]{ding2008model} incorporated a weighting function into the~$\Hinf$ optimization framework to improve fault estimation performance in a specific frequency range. However, as mentioned before, the selection process of a proper weighting function is complex.
The recent result~\cite{stefanovski2019input} designed the fault estimation filter represented by a rational matrix with constant inertia in the frequency region to attenuate disturbances, but it only considered fault estimation in the steady-state.
Therefore, developing a tractable design method for fault estimation filters in the frequency domain capable of dealing with disturbances, stochastic noise, and a broader class of faults is meaningful.

\textbf{Main contributions:}
In view of the existing results mentioned above, this study pioneers the design of FDE filters exploiting fault frequency content information in the DAE framework.
Compared to the existing results focusing on FDE in the frequency domain, the proposed design framework offers the following key features: 
(i) it can deal with disturbances and stochastic noise and does not require assumptions on the derivatives of fault signals, thus applicable to a larger class of fault diagnosis problems; 
(ii) it produces FDE filters of the possibly lowest order compared to observer-based methods; 
(iii) it offers design flexibility by allowing for residuals of arbitrary dimensions and enabling the simultaneous design of both the numerator and denominator of FDE filters, while other fault diagnosis methods developed within the DAE framework typically design one-dimensional residuals with fixed denominators~\cite{esfahani2015tractable,pan2021dynamic}; 
(iv) the design of FDE filters, which considers fault frequency content spanning multiple disjoint continuum ranges, is formulated into a unified optimization framework using the GKYP lemma. This approach significantly simplifies the design process of FDE filters in the frequency domain. 
Note that the derived optimization problems for filter design are inherently non-convex, for which an efficient approach is developed to approximate a suboptimal solution along with explicit performance bounds.
The contributions of this paper are summarized as follows: 
\begin{itemize}
    \item {\bf Optimal detection with fault frequency content:} 
    The design of the fault detection filter, utilizing~$\mathcal{H}_{\_} / \mathcal{H}_2$ index in the DAE framework, is formulated as a finite optimization problem (Theorem~\ref{thm:FD filter design}). This enables the derived filter to handle disturbances and stochastic noise while enhancing fault sensitivity across the set of disjoint continuum frequency ranges.  

    \item {\bf Thresholding with false alarm rate and fault detection rate guarantees:} 
    A thresholding rule that provides guarantees on FAR and FDR (Theorem~\ref{Thm:performance certificates}) is developed, which improves the current literature (e.g.,~\cite{boem2018plug,dong2023multimode}) by extending the setting to multivariate residuals and ensuring FAR and FDR simultaneously.
     
    \item{\bf Optimal estimation with fault frequency content:} 
    Shifting attention from detection to estimation, the $\mathcal{H}_{\_}$ index is replaced with the "restricted"~$\mathcal{H}_{\infty}$ norm in specific frequency ranges. 
    The fault estimation filter design is then reformulated in the DAE framework as a finite optimization problem (Theorem~\ref{Thm: exact FE}).
    In contrast to the existing estimation results that focus on faults represented by either step signals~\cite{van2022multiple} or polynomials~\cite{ghanipoor2023robust}, this study considers a larger class of faults with frequency content containing multiple disjoint continuum ranges.
 
    \item {\bf Convex approximation with suboptimality gap:} 
     By relaxing frequency ranges to finitely many samples, the estimation problem is lower bounded by a QP problem (Theorem~\ref{Thm:FE filter design}), whose solution can be approximated by a \mbox{closed-form} formula~(Corollary~\ref{cor:Analytical solution}). 
    Combining this with an AO approach to the original estimation problem yields a suboptimality gap for the overall design with given fixed filter poles (Proposition~\ref{prop: opt gap}). 
\end{itemize}

The rest of the paper is organized as follows. 
The problem formulation is introduced in Section~\ref{sec:problem description}. Section~\ref{sec:FD} presents design methods for the fault detection filter and the thresholding rule. 
In Section~\ref{sec:FE}, design methods for the fault estimation filter and the derivation of the suboptimality gap are developed.
To improve the flow of the paper and its accessibility, some technical proofs are relegated to Section~\ref{sec:proofs}.
The proposed approaches are applied to a non-minimum phase system and a multi-area power system in Section~\ref{sec:simulation} to demonstrate their effectiveness.
Finally, Section~\ref{sec:conslusion} concludes the paper with future directions.  

\paragraph{\bf Notation} Sets~$\N$,~$\R~(\R_+)$, and~$\R^n$ denote \mbox{non-negative} integers, (positive) reals, and the space of~$n$ dimensional \mbox{real-valued} vectors, respectively. 
The set of symmetric and Hermitian matrices are denoted by~$\mathcal{S}$ and~$\mathbb{H}$, respectively.
The identity matrix with an appropriate dimension is denoted by~$I$. 
For a vector~$v = [v_1,\dots,v_{n}]^{\top}$, the $\infty$-norm and $2$-norm of~$v$ are~$\|v\|_{\infty} = \max_{i\in\{1,\dots,n\}} |v_i|$ and~$\|v\|_2 = \sqrt{\sum^n_{i=1} v^2_i}$, respectively. 
For a matrix~$A$, the 2-norm and Frobenius norm are denoted by~$\|A\|_2$ and~$\|A\|_F$, respectively.
For a random variable~$\chi$, the probability law and its expectation are denoted by~$\mPr[\chi]$ and~$\E[\chi]$, respectively.
Given a discrete-time signal~$u = \{u(k)\}_{k\in\N}$ and a transfer function~$\T$, the notation~$\T[u]$ denotes the output in response to~$u$. 
The~$\Lto$-norm of~$u$ is~$\|u\|^2_{\Lto} = \sum^{\infty}_{k=0} u^{\top}(k)u(k)$.
With a slight abuse of notation, $*$ is used to denote the off-diagonal elements in symmetric (or Hermitian) matrices to avoid clutter, and~$A^*$ to denote the complex conjugate transpose of the matrix~$A$. 
The transpose of~$A$ is denoted by~$A^{\top}$. A positive definite (semi-definite) matrix is denoted by~$A \succ 0 (\succeq 0)$.

%%%%%%%%%%%%%%%%%%%%%%%%%%%%%%%%%%%%%%%%%%%%%%%%%%%%%%%%%%%%%%%%%%%%%%%%%%%
%%%%%%%%%%%%%%%%%%%%%%%%%%%%%%%%%%%%%%%%%%%%%%%%%%%%%%%%%%%%%%%%%%%%%%%%%%%
%                        Problem formulation
%%%%%%%%%%%%%%%%%%%%%%%%%%%%%%%%%%%%%%%%%%%%%%%%%%%%%%%%%%%%%%%%%%%%%%%%%%%
\section{Model Description and Problem Statement}\label{sec:problem description}
Consider the following discrete-time LTI system 
\begin{equation}\label{eq:SS model}
    \left\{ \begin{array}l
         x(k+1) = Ax(k) +Bu(k) +B_d d(k) + B_{\omega} \omega(k) +B_f f(k)  \\
         y(k) = Cx(k) + Du(k) + D_{\omega}\omega(k) + D_f f(k),
    \end{array}
    \right.
\end{equation}
where~$x(k) \in \R^{n_x}$,~$u(k) \in \R^{n_u}$,~$d(k) \in \R^{n_d}$, and~$y(k) \in \R^{n_y}$ are the state, control input, disturbance, and measurement output, respectively.
The signal~$\omega(k) \in \R^{n_{\omega}}$ denotes the independent and identically distributed (i.i.d.) white noise with zero mean. 
The signal~$f(k) \in \R^{n_f}$ denotes the fault.
System matrices in~\eqref{eq:SS model} are all known with appropriate dimensions. Throughout this study, our filter design is restricted to a subclass of fault signals with the following frequency content. 

\begin{As}[Fault frequency content]\label{As:fault regularity}
        The fault signal frequency content, also referred to as the signal spectrum, is the union of the disjoint intervals~$\bar{\Theta} := \cup_{m \in \{1,\dots,n_{\theta}\}} \Theta_m$ where $\Theta_{m} \subset [-\pi, \pi]$ and~$\Theta_{m_1} \cap \Theta_{m_2} = \varnothing$ for all~$m_1\neq m_2$. 
        In other words, the fault signal can be fully characterized in the frequency domain via $f(t) = \int_{\bar{\Theta}} F(\e^{j\theta}) \e^{j\theta t} \diff \theta$ where~$F(\e^{j\theta})$ is the \mbox{Discrete-Time} Fourier Transform.  
        This class of fault signals is denoted by~$\mathcal{F}(\bar{\Theta})$.
\end{As}

The objective of this work is to design filters that can detect and estimate faults with frequency content~$\bar{\Theta}$ through the control input $u$ and the measurement $y$.
To this end, we consider filters in the DAE framework and introduce the time-shift operator~$\mfq$, i.e.,~$x(k+1)=\mfq x(k)$. 
Then, the state-space model~\eqref{eq:SS model} is transformed into the DAE format
\begin{equation}\label{eq:DAE sys}
     H(\mfq)\begin{bmatrix}x \\d \end{bmatrix} 
     + L \begin{bmatrix} y \\ u \end{bmatrix} 
     + W [\omega]  + G [f] + \begin{bmatrix} x_0 \\ 0\end{bmatrix} = 0,
\end{equation}
where~$x(0) = x_0$ is the unknown initial condition, the polynomial matrices~$H(\mfq)$,~$L$,~$W$ and~$G$ are given by
\begin{align*}
&H(\mfq) =  H_1 \mfq +H_0 
=\begin{bmatrix}
	-\mfq I+A    &B_d\\
	C                &0
\end{bmatrix}, 
    ~H_0 = 
    \begin{bmatrix}
        A &B_d \\ C &0   
    \end{bmatrix}, \\
    &H_1 = 
    \begin{bmatrix}
        -I &0 \\ 0 &0   
    \end{bmatrix}, 
    ~L=\begin{bmatrix} 0 &B\\ -I &D \end{bmatrix},
	~W = \begin{bmatrix} B_{\omega} \\ D_{\omega} \end{bmatrix},
	~\text{and}~G = \begin{bmatrix} B_f \\D_f \end{bmatrix}.
\end{align*} 
Given the DAE format of the system, the filter is defined as
\begin{equation}\label{eq: FD filter}
r = \F (\mfq)L \begin{bmatrix}  y\\u \end{bmatrix}, \quad \F(\mfq) := \frac{\mathcal{N}(\mfq)}{a(\mfq)},
\end{equation}
where $r\in \R^{n_r}$ is the residual, $\mathcal{N}(\mfq)= \sum^{d_N}_{i=0} N_i \mfq^i$ is a polynomial matrix with coefficients $N_i \in \R^{n_r \times (n_x +n_y)}$ and degree~$d_N$. 
The denominator is~$a(\mfq) = \sum^{d_a}_{i=0} a_i \mfq^i + \mfq^{d_a+1}$, where~$a_i\in\R$ and~$d_a+1$ is the degree of~$a(\mfq)$ with~$d_a \geq d_N$ to ensure that the filter is strictly proper. Note that the parameters of $\F(\mfq)$, i.e.,~$N_i$ and~$a_i$, are the filter variables to be determined.

Multiplying~\eqref{eq:DAE sys} from the left side by~$\F(\mfq)$ , the residual~$r$ becomes
\begin{align}\label{eq:residual format}
    r = \F(\mfq) L \begin{bmatrix} y \\ u \end{bmatrix} 
      =-\F(\mfq) H(\mfq) [X]  
       - \F(\mfq)W [\omega]  - \F(\mfq)G [f] 
       - \F(\mfq) \begin{bmatrix} x_0 \\0\end{bmatrix}, 
\end{align}
where~$X = [x^{\top} ~d^{\top}]^{\top}$. 
The right-hand side of~\eqref{eq:residual format} indicates the input-output relations from~$X,~\omega$, and~$f$ to~$r$, based on which one can design~$\F(\mfq)$ such that desired mapping relations are satisfied for different diagnosis purposes.
Subsequently, for the sake of exposition, these mapping relations are denoted as
\begin{align*}
    ~\T_{Xr}(\mfq)=-\F(\mfq)H(\mfq),
    ~\T_{\omega r}(\mfq)=-\F(\mfq)W, 
    ~\T_{fr}(\mfq)=-\F(\mfq)G.
\end{align*}

\begin{As}[Initial condition dependency]\label{As: initial con}
    The contribution of the initial condition, i.e., the last term in~\eqref{eq:residual format}, vanishes exponentially fast under appropriate stability conditions.
\end{As}
Assumption~\ref{As: initial con} is commonly adopted in fault detection literature~\cite{wan2016data,shang2021distributionally}. 
Next, the two problems studied in this work are presented, including (i) fault detection (Section~\ref{ss:FD}), and (ii) fault estimation (Section~\ref{ss:FE}).

\subsection{Problem 1: Fault detection} \label{ss:FD}
In order to formally introduce the fault detection problem, the~$\Hto$ norm and $\Hmin$ index of a transfer function, e.g.,~$y = \T(\mfq) [u], ~\T(\mfq)=\mathcal{C}(\mfq I-\mathcal{A})^{-1}\mathcal{B}$, are introduced as follows.

\begin{Def}[$\Hto$ norm~\cite{scherer1997multiobjective}] \label{def:Hto}
Assume~$\mathcal{A}$ is stable. The $\Hto$ norm of~$\T(\mfq)$ is defined as
\begin{align*}
    \left\|\T (\mfq) \right\|^2_{\Hto} = \frac{1}{2\pi} \int^{\pi}_{-\pi} \Tr \left(\T^{*}(\e^{j \theta})\T(\e^{j \theta})\right) \textup{d} \theta ,
\end{align*}
and corresponds to the asymptotic variance of the output when the system is driven by the white noise with zero mean.
\end{Def}

\begin{Def}[{$\Hmin$ index~{\cite{liu2005lmi}}}]\label{def:Hmin}
The $\Hmin$ index of~$\T(\mfq)$ in a single continuum frequency range~$\Theta$ is defined as
\begin{align*}
    \left\|\T (\mfq) \right\|_{\Hmin(\Theta)} = \inf\limits_{\theta \in \Theta, u \neq 0} \frac{\left\|\T(\e^{j \theta}) u \right\|_{\Lto}}{\|u\|_{\Lto}},
\end{align*}
which can also be rewritten as~$\left\|\T (\mfq) \right\|_{\Hmin(\Theta)} = \inf\limits_{\theta \in \Theta} \underline{\sigma} \left(\T (\e^{j \theta})\right)$ with~$\underline{\sigma}(\cdot)$ denoting the minimum singular value. 
\end{Def}

Let us look into the right-hand side of~\eqref{eq:residual format}. For fault detection problem, the residual~$r$ is expected to be insensitive to~$d$, robust to~$\omega$, and sensitive to~$f$ in~$\mathcal{F}(\bar{\Theta})$. First, to decouple $d$ from $r$, it needs to guarantee that
\begin{subequations}\label{eq:mapping relations}
    \begin{align}
    \T_{Xr}(\mfq)=-\F(\mfq)H(\mfq) = 0.  \label{eq:mapping 1}
\end{align}
Second, an upper bound~$\eta_1 \in \R_+$ is set on the~$\Hto$ norm of~$\T_{\omega r}(\mfq)$, to suppress the contribution of~$\omega$ to $r$, as
\begin{align}
    \|\T_{\omega r}(\mfq)\|^2_{\Hto} = \|-\F(\mfq)W\|^2_{\Hto} \leq \eta_1, \label{eq:mapping 2}
\end{align}
which also ensures the stability of the filter based on the classical result of $\Hto$ norm. Finally, the~$\Hmin$ index of~$\T_{fr}(\mfq)$ in~$\bar{\Theta}$ is enforced to be larger than some positive value~$\eta_2\in \R_+$ to guarantee the worst-case fault sensitivity, i.e.,  
\begin{align}
    \|\T_{fr}(\mfq)\|^2_{\Hmin(\Theta_m)}= \|-\F(\mfq)G\|^2_{\Hmin(\Theta_m)} \geq \eta_2, ~\forall \Theta_m \subset \bar{\Theta}.  \label{eq:mapping 3}
\end{align}
\end{subequations}

In view of the desired mapping conditions~\eqref{eq:mapping relations}, the design of the fault detection filter is formulated as the following optimization problem.

\paragraph{{\bf Problem 1a} \textup{(Fault detection filter design)}\bf}
\textit{
Consider the system~\eqref{eq:SS model}, the filter to be designed in~\eqref{eq: FD filter}, and the expression of the residual~\eqref{eq:residual format}. Given a scalar~$\alpha \in [0,1]$, find~$\F(\mfq)$ via the minimization program:
\begin{align*}
    \min_{\eta_1,\eta_2 \in \R_+, ~\F(\mfq)} \quad \{\alpha \eta_1-(1-\alpha)\eta_2:~\eqref{eq:mapping 1},~\eqref{eq:mapping 2},~\eqref{eq:mapping 3}\}.
\end{align*}}

The following assumption is introduced to guarantee the feasibility of Problem~1a.
\begin{As}[Feasibility condition]\label{As: feasibility}
    The pair~$(A,C)$ is observable. For~$\mfq = \varphi \e^{j\theta}$ with~$|\varphi|>1$ and~$\theta \in \bar{\Theta}$, the following rank condition holds
    \begin{small}
    \begin{align*}
        n_x+n_y \geq \textup{Rank}\left(\begin{bmatrix}
        -\mfq I + A  &B_d &B_f \\
        C              &0 &D_f
        \end{bmatrix}\right) 
        = n_x + \textup{Rank} \left(\begin{bmatrix}
        B_d\\ 0 
        \end{bmatrix}\right) + n_f.
    \end{align*}
    \end{small}
\end{As}
Denote the transfer functions from $d$ to $y$ and $f$ to $y$ by~$\T_{dy}(\mfq) = C(\mfq I-A)^{-1}B_d$ and~$\T_{f y}(\mfq) = C(\mfq I-A)^{-1}B_f + D_f$, respectively. 
It readily follows 
\begin{align*}
    n_y \geq \textup{Rank}[\T_{dy}(\mfq) \quad\T_{fy}(\mfq)] = \textup{Rank}\left(\begin{bmatrix}
    B_d \\0
\end{bmatrix}\right) + n_f,
\end{align*}
if Assumption~\ref{As: feasibility} holds \cite[Theorem 6.2]{ding2008model}. 
Therefore, Assumption~\ref{As: feasibility} ensures simultaneously the following: (i) the disturbance~$d$ can be decoupled, and (ii) the fault~$f$ satisfies input observability condition in~$\bar{\Theta}$, which also indicates that there are no unstable invariant zeros in~$\bar{\Theta}$. 
The second term is necessary for a nonzero~$\Hmin$ index~\cite[Lemma 5]{liu2005lmi}. 
Note that the fault frequency content information is incorporated into the analysis, which is derived from the classical result on the input observability condition in~\cite[Theorem 3]{hou1998input} and~\cite[Corollary 14.1]{ding2008model}.

Additionally, a solution to Problem 1a ensures that the residual $r$ can be written as
\begin{align*}
    r = \T_{\omega r}(\mfq) [\omega] + \T_{f r}(\mfq)[f],
\end{align*}
where no dependency on~$X$ is present because it is decoupled. 
In practice, the residual $r$ will oscillate around zero as a response to the noise~$\omega$ in the absence of~$f$. In contrast, the residual will ideally be away from zero when a fault happens.
Subsequently, let us take the average $2$-norm of~$r$ over a time interval~$\mathcal{T} \in \N$ as the evaluation function, i.e.,
\begin{align}\label{eq: Jr}
    J(r) = \frac{1}{\mathcal{T}}  \sum^{k_1+\mathcal{T}}_{k=k_1} \|r(k)\|_2,
\end{align}
where $k_1\in\N$. Given a threshold~$J_{th} \in \R_+$, the following fault detection logic is introduced:
\begin{align*}
    \left\{\begin{array}{ll}
        J(r) \leq J_{th} &\Rightarrow \quad\textup{no fault alarm},  \\
        J(r) > J_{th}    &\Rightarrow \quad\textup{fault alarm}.
    \end{array} \right.
\end{align*}
Note that false alarms and missing detection of faults are inevitable due to the random nature of noise. 
To tackle these issues, a threshold $J_{th}$ that can provide guarantees on FAR and FDR is considered in the following problem.

\paragraph{{\bf Problem 1b} \textup{(Thresholding with guarantees on FAR and FDR)}\bf} 
\textit{
Given the fault detection filter constructed from Problem 1a, an acceptable FAR~$\varepsilon_1 \in (0,1]$, and a set of fault signals of interest $\Omega_f:=\{f: \|f(k)\|_2 \geq \underline{f},~\underline{f} \in \R_+,~f \in \mathcal{F}(\bar{\Theta})\}$, determine the threshold~$J_{th}$ such that:
\begin{subequations}
    \begin{align}
    &\textup{FAR:} ~\mPr\left\{J(r) > J_{th} \big| f = 0 \right\} \leq \varepsilon_1, \label{eq:FAR prob} \\
    &\textup{FDR:} ~\mPr\left\{J(r) > J_{th} \big| f\in\Omega_f \right\} \geq \varepsilon_2,\label{eq:FDR prob}
\end{align}
\end{subequations}
where $\varepsilon_2$ is the lower bound on FDR to be computed.}

\begin{Rem}[Difficulty in FDR computation]
    There are fewer results in the literature on FDR computation because different elements of multivariate fault signals may cancel out each other’s contributions to the residual~\cite{pan2019static}. 
    As a result, there is no guarantee that FDR even exists.
    By assuming that a set of faults is detectable, authors in~\cite[Section 12.1]{ding2008model} propose a computation method of FDR in the norm-based framework.
    In this work, the~$\Hmin(\Theta)$ index is employed to ensure fault sensitivity, which paves the path for FDR computation in a stochastic way. 
\end{Rem}

\subsection{Problem 2: Fault estimation} \label{ss:FE}
In certain scenarios, it becomes essential not just to identify the occurrence of faults, but also to estimate them precisely.
For instance, incorporating fault estimates into fault-tolerant controllers is a common practice to counteract the effects of faults~\cite{gao2015fault}. 
Here, to ensure that the residual follows fault signals within~$ \mathcal{F}(\bar{\Theta})$, a stable~$\F(\mfq)$ in~\eqref{eq: FD filter} is determined such that the subsequent relation holds
\begin{align}\label{eq: estimation condition}
    \frac{\| \T_{fr}(\mfq) f - f\|^2_{\Lto}}{\|f\|^2_{\Lto}} \leq \eta_3, \quad \forall ~f \in \mathcal{F}(\bar{\Theta}),
\end{align}
where $\eta_3 \in \R_+$ is an upper bound. 
The estimation condition~\eqref{eq: estimation condition} is consistent with the format of the restricted~$\Hinf$ norm in a specific frequency range. 

\begin{Def}[Restricted $\Hinf$ norm~\cite{gao2011h}]\label{def:Hinf}
The restricted $\Hinf$ norm of a transfer function~$\T(\mfq)$ in a single continuum frequency range~$\Theta$ is defined as
\begin{align*}
    \left\|\T (\mfq) \right\|_{\Hinf(\Theta)} = \sup\limits_{\theta \in \Theta, u \neq 0} \frac{\left\|\T(\e^{j \theta}) u \right\|_{\Lto}}{\|u\|_{\Lto}},
\end{align*}
which can also be rewritten as~$\left\|\T (\mfq) \right\|^{}_{\Hinf(\Theta)} = \sup\limits_{\theta \in \Theta} \overline{\sigma} \left(\T (\e^{j \theta})\right)$
with~$\overline{\sigma}(\cdot)$ denoting the maximum singular value. 
\end{Def}

As a result, based on Definition~\ref{def:Hinf}, the condition~\eqref{eq: estimation condition} can be equivalently written as
\begin{align}\label{eq:Fault estimation}
    \left\|\T_{f r}(\mfq) - I\right\|^2_{\Hinf(\Theta_m)} \leq \eta_3,~\forall \Theta_m \subset \bar{\Theta}.
\end{align}
As shown in~\eqref{eq:Fault estimation}, the transfer function~$\T_{f r}(\mfq)$ is designed to approximate the identity matrix~$I$ over~$\bar{\Theta}$, so that~$r$ can be viewed as an estimate of~$f$ if~$\T_{f r}(\mfq)$ is sufficiently close to~$I$.
This is different from the system-inversion-based estimation approaches~\cite{dong2011identification,wan2016data} which require~$\T_{f r}(\e^{j\theta})  \equiv I$ (known as the perfect estimation condition). 
We would like to point out that the perfect estimation condition is demanding and generally impossible to achieve because it contains infinite equality constraints, especially when there are disturbances, noise, or unstable zeros.
 
With the condition~\eqref{eq:Fault estimation}, our second problem is to design the fault estimation filter through the following optimization problem, where conditions~\eqref{eq:mapping 1} and~\eqref{eq:mapping 2} are maintained to address~$d$ and~$\omega$, respectively.

\paragraph{{\bf Problem 2} \textup{(Fault estimation filter design)}\bf} 
\textit{
Consider the system~\eqref{eq:SS model}, the filter to be designed in~\eqref{eq: FD filter}, and the expression of the residual~\eqref{eq:residual format}. Given a scalar~$\beta \in [0,1]$, find~$\F(\mfq)$ via the minimization program:
\begin{align*}
    \min_{ \eta_{1},\eta_{3} \in \R_+, ~\F(\mfq)} \quad \{\beta  \eta_{1}+ (1-\beta) \eta_{3}: ~\eqref{eq:mapping 1},~\eqref{eq:mapping 2},~\eqref{eq:Fault estimation} \}.
\end{align*} }

\begin{Rem}[Differences between Problem 1a and 2]
    The condition~\eqref{eq:Fault estimation} for fault estimation is more stringent compared to the condition~\eqref{eq:mapping 3} used for fault detection. 
    In particular, it suffices to let the minimum singular value of $\T_{fr}(\mfq)$ be positive for fault detection, whereas $\T_{fr}(\mfq)$ needs to be as close to~$I$ as possible to obtain satisfactory estimation performance.
    Additionally, filters that satisfy condition~\eqref{eq:Fault estimation} with a sufficiently small~$\Hinf(\Theta)$ norm can provide a positive $\Hmin(\Theta)$ index, but the opposite is not true.
\end{Rem}

%%%%%%%%%%%%%%%%%%%%%%%%%%%%%%%%%%%%%%%%%%%%%%%%%%%%%%%%%%%%%%%%%%%%%
%%%%%%%%%%%%%%%%%%%%%%%%%%%%%%%%%%%%%%%%%%%%%%%%%%%%%%%%%%%%%%%%%%%%%
%                        Main results
%%%%%%%%%%%%%%%%%%%%%%%%%%%%%%%%%%%%%%%%%%%%%%%%%%%%%%%%%%%%%%%%%%%%%
\section{Fault Detection: Optimal Design and Thresholding} \label{sec:FD}
This section presents design methods for the fault detection filter and the thresholding rule that provides guarantees on FAR and FDR. 
To improve the clarity of presentation, some proofs are relegated to Section~\ref{sec:proofs}.

\subsection{Fault detection filter design}
Let us start by considering~$\F(\mfq)$ to be designed in~\eqref{eq: FD filter}. 
In~$\F(\mfq)$, the degrees~$d_N$,~$d_a$, the residual dimension~$n_r$, and coefficients of~$\mathcal{N}(\mfq)$ and~$a(\mfq)$ are all design parameters. For simplicity, let~$n_r$ and~$d_N$ be fixed, and set~$d_N = d_a$ throughout the subsequent analysis.
To compute the $\Hto$ norm and~$\Hmin(\Theta)$ index, the mapping relations~$\T_{\omega r}(\mfq) = -\F(\mfq)W$ and~$\T_{fr}(\mfq) = -\F(\mfq)G$ are represented in the observable canonical forms denoted by~$(\mathcal{A}_r,\mathcal{B}_{\omega r},\mathcal{C}_r)$ and~$(\mathcal{A}_r,\mathcal{B}_{fr},\mathcal{C}_r)$, respectively.
Let $N_{i,j}$ denote the~$j$-th row of~$N_{i}$ for~$i\in\{0,1,\dots,d_N\}$ and~$j\in\{1,\dots,n_r\}$. 
Then, the matrices~$\mathcal{A}_r,~\mathcal{B}_{\omega r},~\mathcal{B}_{fr}$, and~$\mathcal{C}_r$ are given by
\begin{subequations}\label{eq: FD variables}
\begin{align}
    &\mathcal{A}_r = \text{diag}(\underbrace{A_r,\dots,A_r}_{n_r}),
    ~\mathcal{C}_r = \text{diag}(\underbrace{C_r,\dots,C_r}_{n_r}), \\
    &\mathcal{B}_{\omega r} = -[B_{\omega r,1}^{\top},\dots,B_{\omega r,n_r}^{\top}]^{\top},
    ~\mathcal{B}_{f r} = -[B_{fr,1}^{\top},\dots,B_{fr,n_r}^{\top}]^{\top},      
\end{align}
where the subblock matrices are defined as
\begin{align*}
    &A_r = \begin{bmatrix}
            0  &\dots &0 &-a_{0} \\
            1  &\dots &0 &-a_{1} \\
            \vdots  &\ddots &\vdots &\vdots \\
            0  &\dots &1 &-a_{d_N} \\
            \end{bmatrix}, 
    ~B_{\omega r,j} = \begin{bmatrix}
            N_{0,j} \\ N_{1,j} \\ \vdots \\ N_{d_N,j}
        \end{bmatrix} W, 
    ~B_{fr,j} = \begin{bmatrix}
            N_{0,j} \\ N_{1,j} \\ \vdots \\ N_{d_N,j}
      \end{bmatrix} G, ~C_r = \begin{bmatrix} 0 \dots 0~1 \end{bmatrix}.
\end{align*}  
Here, the dimension of the filter state is~$n_{x_r} = n_r(d_N+1)$. The following notations are also introduced for filter design
\begin{align}\label{eq: NbarHbar}
    \bar{\mathcal{N}}= [N_0 ~N_1 ~\dots ~N_{d_N}] ~\text{and} 
    ~\bar{H}= \begin{bmatrix} 
                    H_0 &H_1  &\dots    &0 \\
	                \vdots   &\ddots    &\ddots   &\vdots\\
	                0        &\dots     &H_0 &H_1
               \end{bmatrix}.
\end{align}
\end{subequations} 
Note that the parameters~$a_i$ and~$N_{i}$ to be determined are reformulated into~$\mathcal{A}_r,~\mathcal{B}_{\omega r}$,~$\mathcal{B}_{fr}$, and~$\bar{\mathcal{N}}$. 
An advantage of such a transformation is that all the design parameters are decoupled from each other. 
This allows us to exactly formulate the design of the fault detection filter into a bilinear optimization problem as stated in the following theorem.

\begin{Thm} [Optimal detection: exact finite reformulation]\label{thm:FD filter design}
Consider the system \eqref{eq:SS model}, the structure of the filter \eqref{eq: FD filter}, and the state-space realizations~$(\mathcal{A}_r,\mathcal{B}_{\omega r},\mathcal{C}_r)$ and~$(\mathcal{A}_r,\mathcal{B}_{fr},\mathcal{C}_r)$. 
Given the degree~$d_N$,~$d_a=d_N$, the dimension of the residual~$n_r$, a scalar~$\alpha \in [0,1]$, a sufficiently small~$\vartheta \in \R_+$, and the fault frequency content information~$\bar{\Theta}$, the minimization program in Problem~1a can be equivalently stated as follows 
\begin{subequations}\label{eq:FD opt problem}
\begin{align}
    \min ~&\alpha \eta_1 - (1-\alpha) \eta_2 \notag\\
    \textup{s.t.} ~&\eta_1,~\eta_2 \in \mathbb{R}_+,~a_i\in\R,~N_i\in\R^{n_r\times(n_x+n_y)}, ~i\in\{0,1,\dots,d_N\}, \notag\\
    &P_1 \in \mathcal{S}^{n_{x_r}},~Q_1\in\mathcal{S}^{n_r},~\mathcal{A}_r,~\mathcal{B}_{\omega r},~\mathcal{B}_{f r},~\bar{\mathcal{N}},~\bar{H} ~\text{in} ~\eqref{eq: FD variables}, \notag\\
    ~&P_{2m},~Q_{2m}\in\mathbb{H}^{n_{x_r}}, ~V_m\in\R^{n_{x_r}\times (2n_{x_r}+n_f)}, ~m\in\{1,\dots,n_{\theta}\},   \notag\\
    &\bar{\mathcal{N}}\bar{H} = 0, \label{eq:FD const1}\\
    &\begin{bmatrix}
    P_1 &\mathcal{A}_rP_1  &\mathcal{B}_{\omega r} \\
    *   &P_1 &0\\
    *   &*   &I
    \end{bmatrix} \succeq \vartheta I,
    ~\begin{bmatrix}
    Q_1 &\mathcal{C}_rP_1\\ * &P_1 
    \end{bmatrix} \succeq \vartheta I, ~\textup{Trace}(Q_1)\leq \eta_1-\vartheta, \label{eq:FD const2}\\
    &\begin{bmatrix}
    -P_{2m} &\delta_m Q_{2m}    &0 \\
    *       &\Xi_m    &0\\
    *       &*       &\eta_2 I
    \end{bmatrix} 
    + \begin{bmatrix}
    -I \\ \mathcal{A}^{\top}_r \\\mathcal{B}^{\top}_{fr}
    \end{bmatrix}V_m +V_m^{\top} \begin{bmatrix}     
    -I &\mathcal{A}_r &\mathcal{B}_{fr}\end{bmatrix}    
    \preceq -\vartheta I, \notag \\ 
    & Q_{2m} \succeq \vartheta I, ~m\in\{1,\dots,n_{\theta}\}, \label{eq:FD const3} 
\end{align} 
\end{subequations}
where for each frequency range~$\Theta_m = \{\theta_f : \theta_{1_m} \leq \theta_f \leq \theta_{2_m} \}$, the variables~$\delta_m=\e^{j\theta_{c_m}}$ and~$\Xi_m  = P_{2m}-2\cos(\theta_{d_m})Q_{2m}-\mathcal{C}_r^{\top}\mathcal{C}_r$ with~$\theta_{c_m} = (\theta_{1_m}+\theta_{2_m})/2$
and~$\theta_{d_m} = (\theta_{2_m}-\theta_{1_m})/2$.
\end{Thm}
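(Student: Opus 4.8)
The plan is to decompose the claimed equivalence into three independent reformulations, one for each mapping condition, since the objective $\alpha\eta_1-(1-\alpha)\eta_2$ already coincides with that of Problem~1a and only the constraints and decision variables need to be rewritten. As a preliminary step I would record that the block-companion construction in~\eqref{eq: FD variables} is a genuine state-space realization: every row of $\F(\mfq)=\mathcal{N}(\mfq)/a(\mfq)$ shares the common denominator $a(\mfq)$ whose companion matrix is $A_r$, so the $n_r$-fold block-diagonal pair $(\mathcal{A}_r,\mathcal{C}_r)$ together with $\mathcal{B}_{\omega r}$ (resp.\ $\mathcal{B}_{fr}$) realizes $\T_{\omega r}(\mfq)=\mathcal{C}_r(\mfq I-\mathcal{A}_r)^{-1}\mathcal{B}_{\omega r}$ (resp.\ $\T_{fr}$). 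Because $d_a=d_N$ makes $a(\mfq)$ strictly higher in degree than $\mathcal{N}(\mfq)$, the realization is strictly proper, which lets me take a vanishing feedthrough term throughout.

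For the decoupling condition~\eqref{eq:mapping 1}, since $a(\mfq)\neq0$ the identity $-\F(\mfq)H(\mfq)=0$ is equivalent to the polynomial identity $\mathcal{N}(\mfq)H(\mfq)=0$ with $\mathcal{N}(\mfq)=\sum_{i=0}^{d_N}N_i\mfq^i$ and $H(\mfq)=H_0+H_1\mfq$. I would expand the product and collect the coefficient of each power $\mfq^{k}$, $k=0,\dots,d_N+1$, obtaining $N_0H_0=0$, $N_{k-1}H_1+N_kH_0=0$, and $N_{d_N}H_1=0$. Stacking these is exactly the single matrix equation $\bar{\mathcal{N}}\bar{H}=0$ once $\bar{H}$ is read as the block-banded Toeplitz matrix of~\eqref{eq: NbarHbar}; this is a direct bookkeeping verification yielding~\eqref{eq:FD const1}.

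For the $\Hto$ bound~\eqref{eq:mapping 2}, I would invoke the classical Gramian characterization: with $\mathcal{A}_r$ Schur-stable, $\|\T_{\omega r}\|^2_{\Hto}\leq\eta_1$ holds iff there exist $P_1\succ0$ and $Q_1$ satisfying the Lyapunov inequality $\mathcal{A}_rP_1\mathcal{A}_r^{\top}-P_1+\mathcal{B}_{\omega r}\mathcal{B}_{\omega r}^{\top}\preceq0$, the bound $Q_1\succeq\mathcal{C}_rP_1\mathcal{C}_r^{\top}$, and $\Tr(Q_1)\leq\eta_1$. Applying the Schur complement to the Lyapunov inequality—so as not to form the product $\mathcal{A}_rP_1\mathcal{A}_r^{\top}$ of decision variables—produces the first block LMI of~\eqref{eq:FD const2}, and the Schur complement of $Q_1-\mathcal{C}_rP_1\mathcal{C}_r^{\top}\succeq0$ produces the second; the existence of $P_1\succ0$ simultaneously certifies stability of the filter, matching the remark after~\eqref{eq:mapping 2}. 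The $\vartheta I$ margins are the standard strictification needed to pass between the non-strict norm bound and a closed LMI feasibility set, recovered as $\vartheta\downarrow0$.

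The main work, and the main obstacle, is the $\Hmin$ condition~\eqref{eq:mapping 3}. By Definition~\ref{def:Hmin}, $\|\T_{fr}\|^2_{\Hmin(\Theta_m)}\geq\eta_2$ is the frequency-domain inequality $\T_{fr}(\e^{j\theta})^*\T_{fr}(\e^{j\theta})\succeq\eta_2 I$ for all $\theta\in\Theta_m$, i.e.\ $\begin{bmatrix}\T_{fr}(\e^{j\theta})\\I\end{bmatrix}^*\Pi\begin{bmatrix}\T_{fr}(\e^{j\theta})\\I\end{bmatrix}\preceq0$ with supply matrix $\Pi=\mathrm{diag}(-I,\eta_2 I)$. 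I would then apply the discrete-time generalized KYP lemma for the finite arc $\Theta_m=[\theta_{1_m},\theta_{2_m}]$, whose frequency weights are precisely $\delta_m=\e^{j\theta_{c_m}}$ and $2\cos(\theta_{d_m})$ with $\theta_{c_m},\theta_{d_m}$ the center and half-width, converting the infinite family of inequalities into a single LMI in Hermitian $P_{2m}$ and $Q_{2m}\succeq0$. Because $\mathcal{A}_r$ and $\mathcal{B}_{fr}$ are themselves decision variables, the raw GKYP LMI contains the quadratic term $\mathcal{A}_r^{\top}P_{2m}\mathcal{A}_r$; the key device is to eliminate it by Finsler's (projection) lemma, introducing the multiplier $V_m$ and the dynamics annihilator $[-I~~\mathcal{A}_r~~\mathcal{B}_{fr}]$, which yields exactly the dilated inequality~\eqref{eq:FD const3} with central block $\Xi_m=P_{2m}-2\cos(\theta_{d_m})Q_{2m}-\mathcal{C}_r^{\top}\mathcal{C}_r$. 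The computation I would carry out in detail is the sign analysis: restricting the quadratic form to the kernel of the annihilator, i.e.\ to on-dynamics triples $(\e^{j\theta}x,x,u)$, cancels the $P_{2m}$ terms and leaves $2[\cos(\theta-\theta_{c_m})-\cos(\theta_{d_m})]\,x^*Q_{2m}x-\|\mathcal{C}_rx\|^2+\eta_2\|u\|^2$; since $\cos(\theta-\theta_{c_m})\geq\cos(\theta_{d_m})$ exactly on $\Theta_m$ and $Q_{2m}\succeq0$, dropping the nonnegative first term gives $\|\T_{fr}(\e^{j\theta})u\|^2\geq\eta_2\|u\|^2$, hence~\eqref{eq:mapping 3}. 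The converse is the necessity half of GKYP, which is clean here because the $\Hto$ feasibility already forces $\mathcal{A}_r$ stable, hence no poles on the unit circle, and the $\vartheta I$ and $Q_{2m}\succeq\vartheta I$ margins again supply the strictification that moves between the closed LMI and the singular-value bound.
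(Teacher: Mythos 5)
Your proposal is correct and follows essentially the same route as the paper's proof: the same three-way decomposition (polynomial coefficient matching to get $\bar{\mathcal{N}}\bar{H}=0$, the classical extended $\Hto$ LMI characterization for \eqref{eq:FD const2}, and the GKYP middle-frequency case combined with Finsler's lemma and the multiplier $V_m$ for \eqref{eq:FD const3}). The only cosmetic difference is directional: you go from the frequency-domain inequality with $\Pi=\mathrm{diag}(-I,\eta_2 I)$ to the dilated LMI (adding an inline kernel-restriction verification of the sufficiency half), whereas the paper starts from \eqref{eq:FD const3}, undilates via Finsler's lemma, and then invokes the GKYP equivalence as a black box.
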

\begin{proof}
The proof is relegated to Section~\ref{subsec: FD proof}.
\end{proof}

Theorem~\ref{thm:FD filter design} builds on the celebrated GKYP lemma~\cite{iwasaki2005generalized}, which provides three reformulations depending on the desired frequency regimes (low, middle, and high-frequency; see also Lemma~\ref{lem:GKYP} in the proof section). It is worth noting that the assertion of Theorem~\ref{thm:FD filter design} leverages only the middle-frequency part of this lemma, as it covers all the cases required in this study.  
In addition, note that the optimization problem~\eqref{eq:FD opt problem} is nonlinear because of the bilinear terms $\mathcal{A}_rP_1$ in \eqref{eq:FD const2}, and~$\mathcal{A}^{\top}_r V_m$, $\mathcal{B}^{\top}_{fr}V_m$ and their transpose in~\eqref{eq:FD const3}. 
To tackle this issue, the AO method is employed, which divides the decision variables in the bilinear terms into two sets and then optimizes over the two sets of variables alternatively. 
One way of division is 
\begin{align}\label{eq: variable sets}
\begin{split}
    &\mathcal{G}^{k}_1 := \left\{\eta^k_1,\eta^k_2,N^k_i,a^k_i,i\in\{0,1,\dots,d_N\}\right\} ~\text{and} \\ 
    &\mathcal{G}^{k}_2 := \left\{P^k_1, Q^k_1,\eta^k_1, \eta^k_2, P^k_{2m}, Q^k_{2m}, V_m^k, m\in\{1,\dots,n_{\theta}\} \right\},
\end{split}
\end{align} 
where~$k\in N$ serves as the iteration indicator.

The initial values for the optimization process are derived as follows. Initially, a stable denominator, denoted by~$a^0(\mfq)$ with coefficients $a^0_i$, is chosen. 
Next, the coefficients of $\mathcal{N}^0(\mfq)$, i.e.,~$N^0_i$, are determined by solving equation~\eqref{eq:FD const1} subject to the constraint~$\| \bar{\mathcal{N}} \|_{\infty} \geq 1$ to avoid the trivial solution.
Subsequently, the initial values of $\eta_1^0$ and $\eta^0_2$ are found via~\eqref{eq:FD const2} and~\eqref{eq:FD const3}, respectively. 
With these preparations completed, the AO process can be initiated to solve the filter. 
The whole procedure is summarized in Algorithm~\ref{alg:algorithm_1}.

\begin{algorithm}[h] 
	\caption{Solution to the optimization problem~\eqref{eq:FD opt problem}} \label{alg:algorithm_1} 
	\begin{algorithmic}
	\State \textit{Step 1}. \textbf{Initialization of Filter Parameters} 
	\begin{itemize}
	    \item[(a)] Set $d_N,~n_r$, fault frequency ranges~$\bar{\Theta}$, the iteration indicator $k=0$, and select a stable denominator $a^0(\mfq)$ 
	    \item[(b)] Compute $\mathcal{N}^0(\mfq)$ via~\eqref{eq:FD const1} with $\|\bar{\mathcal{N}}\|_{\infty} \geq 1$
	    \item[(c)] Compute $\eta^0_1$ and $\eta^0_2$ via~\eqref{eq:FD const2} and~\eqref{eq:FD const3}, respectively
	\end{itemize}
	\State \textit{Step 2}. \textbf{Optimization of Filter Parameters}
	\begin{itemize}
	    \item[(a)] Select~$\alpha \in [0,1]$, a sufficiently small $\vartheta > 0$
	    \item[(b)] While $|(\alpha\eta^{k+1}_1-(1-\alpha)\eta^{k+1}_2) - (\alpha\eta^{k}_1-(1-\alpha)\eta^{k}_2)| > \vartheta$, do 
	    \begin{itemize}
	        \item[(i)] With $a^k(\mfq)$ and $\mathcal{N}^k(\mfq)$, compute~$P^k_1$ and $V_m^k$ by solving~\eqref{eq:FD opt problem} over $\mathcal{G}^{k}_2$
	        \item[(ii)] With $P^k_1$ and $V_m^k$, compute $a^{k+1}(\mfq)$ and~$\mathcal{N}^{k+1}(\mfq)$ by solving~\eqref{eq:FD opt problem} over $\mathcal{G}^{k}_1$
	        \item[(iii)] Set $k = k+1$ 
	    \end{itemize}
	    \item[(c)] Return final results~$a^{\star}(\mfq)$ and~$\mathcal{N}^{\star}(\mfq)$ 
	\end{itemize}
	\end{algorithmic} 
\end{algorithm}

\begin{Rem}[The auxiliary matrix~$V_m$]
    When using the GKYP lemma to deal with condition~\eqref{eq:mapping 3}, an auxiliary matrices~$V_m$ is introduced to obtain the matrix inequalities in~\eqref{eq:FD const3}. 
    Different from previous results where~$V_m$ is predefined~\cite{wang2008finite,tang2020fault,han2022fault}, it is treated as a decision variable here. 
    This is motivated by the potentially large number of parameters that need determination in~$V_m$ for systems of large scale or dimension. 
    Improper selection of $V_m$ can result in poor $\Hmin(\Theta)$ indices or even render constraints infeasible.
   % By optimizing over~$V_m$ with the AO method, one can achieve better fault sensitivity. 
    Moreover, using relaxation techniques, e.g.,~\cite[Lemma 1]{chang2013new}, to transform~\eqref{eq:FD const3} into linear matrix inequalities easily leads to infeasible problems because multiple constraints restrict the feasible solution set. Therefore, the bilinear terms are retained and addressed using the AO approach.
\end{Rem}

\begin{figure}[t]    
    \centering
    \includegraphics[scale=0.45]{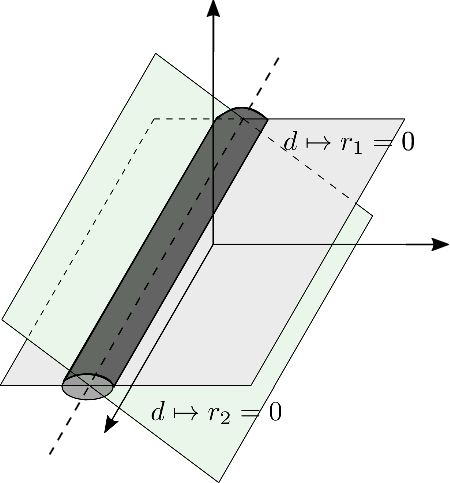} 
    \caption{\small Geometric illustration of the multi-dimensional residual.}
    \label{fig:MultiR}
\end{figure}

\begin{Rem}[Residuals with arbitrary dimensions]
    The proposed design approach enables the fault detection filter to have residuals of arbitrary dimensions.
    Compared to the results~\cite{esfahani2015tractable,pan2019static,pan2021dynamic} also developed in the DAE framework, which generate only one-dimensional residuals, our approach improves two deficiencies: 
\begin{enumerate} [label=(\roman*), itemsep = 0mm, topsep = 0mm, leftmargin = 6mm]
    \item Consider a two-dimensional residual depicted in Fig.~\ref{fig:MultiR} as an example. 
    The filters in~\cite{esfahani2015tractable,pan2019static,pan2021dynamic} cannot detect faults that lie on the same hyperplane as the disturbance, i.e.,~$d \mapsto r = 0$. 
    By considering the two-dimensional residual, faults that can bypass detection only exist at the intersection of two hyperplanes. This means that our approach reduces the size of the set containing undetectable faults;
    
    \item As indicated in~\cite{pan2019static}, different elements of faults may cancel out each other's contributions to the one-dimensional residual. Our approach circumvents this issue by ensuring fault sensitivity with a positive $\Hmin(\Theta)$ index.  
\end{enumerate}
\end{Rem}

\subsection{Thresholding rule}
With the fault detection filter constructed by solving the optimization problem~\eqref{eq:FD opt problem} and the residual evaluation function~$J(r)$ defined in~\eqref{eq: Jr}, the next is to determine the threshold~$J_{th}$ which provides probabilistic guarantees on FAR and FDR as outlined in Problem~1b. 
To proceed, let us first introduce the following lemma and assumption to be used hereafter.

\begin{Lem}[Sub-Gaussian concentration~{\cite[Proposition~2.5.2]{vershynin2018high}}]\label{lem:sub-Gaussian con}
Let $\omega \in \R^{n_{\omega}}$ be subject to a \mbox{sub-Gaussian} distribution with mean $\E[\omega]$ and parameter~$\lambda_{\omega} \in \R_+$, i.e.,
\begin{align*}
    \E\left[\e^{\phi \nu^{\top}(\omega - \E[\omega])} \right] \leq \e^{\lambda^2_{\omega}\phi^2/2},~\forall \phi \in \R~\text{and}~\nu \in \R^{n_{\omega}},
\end{align*}
where~$\|\nu\|_2 = 1$.
Then, the following inequality holds
\begin{align}\label{eq:sub_G con ineq}
    \mPr[\|\omega - \E[\omega]\|_{\infty} \leq \epsilon] \geq 1 - 2n_{\omega}\e^{-\frac{\epsilon^2}{2\lambda^2_{\omega}}}, \quad \forall \epsilon \in \R_+.
\end{align}
\end{Lem}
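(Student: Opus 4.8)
The plan is to establish the concentration bound through a standard Chernoff-and-union-bound argument, reducing the multivariate claim to a one-dimensional sub-Gaussian tail estimate applied coordinatewise. First I would center the random vector by setting $\tilde{\omega} := \omega - \E[\omega]$, so that the hypothesis reads $\E[\e^{\phi \nu^{\top} \tilde{\omega}}] \le \e^{\lambda_\omega^2 \phi^2/2}$ for every unit vector $\nu$ and every $\phi \in \R$. The key observation is that each standard basis vector $e_i \in \R^{n_\omega}$ satisfies $\|e_i\|_2 = 1$, so choosing $\nu = \pm e_i$ delivers control of the moment generating function of each individual coordinate $\tilde{\omega}_i = e_i^{\top}\tilde{\omega}$.

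Next I would bound the upper tail of a single coordinate. For any $\phi > 0$, Markov's inequality applied to the exponential gives
\begin{align*}
    \mPr[\tilde{\omega}_i > \epsilon] = \mPr\!\left[\e^{\phi \tilde{\omega}_i} > \e^{\phi \epsilon}\right] \le \e^{-\phi \epsilon}\, \E\!\left[\e^{\phi e_i^{\top}\tilde{\omega}}\right] \le \e^{-\phi \epsilon + \lambda_\omega^2 \phi^2/2}.
\end{align*}
Minimizing the exponent over $\phi > 0$ at the choice $\phi = \epsilon/\lambda_\omega^2$ yields $\mPr[\tilde{\omega}_i > \epsilon] \le \e^{-\epsilon^2/(2\lambda_\omega^2)}$. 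Repeating the argument with $\nu = -e_i$ (equivalently, taking $\phi < 0$) controls the lower tail $\mPr[\tilde{\omega}_i < -\epsilon]$ by the same quantity, so that $\mPr[|\tilde{\omega}_i| > \epsilon] \le 2\,\e^{-\epsilon^2/(2\lambda_\omega^2)}$.

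Finally I would assemble the coordinatewise estimates. Since $\|\tilde{\omega}\|_\infty > \epsilon$ holds precisely when $|\tilde{\omega}_i| > \epsilon$ for at least one index $i$, a union bound over the $n_\omega$ coordinates gives
\begin{align*}
    \mPr[\|\tilde{\omega}\|_\infty > \epsilon] \le \sum_{i=1}^{n_\omega} \mPr[|\tilde{\omega}_i| > \epsilon] \le 2 n_\omega\, \e^{-\epsilon^2/(2\lambda_\omega^2)},
\end{align*}
and passing to the complementary event establishes the claimed inequality~\eqref{eq:sub_G con ineq}. I do not anticipate a genuine obstacle here, as the result is a textbook consequence of the sub-Gaussian definition; the only points requiring slight care are verifying that the univariate tail bound holds for both signs of the coordinate and applying the union bound to the correct event $\{\|\tilde{\omega}\|_\infty > \epsilon\}$ before taking complements.
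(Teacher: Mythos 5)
Your proof is correct. Note that the paper itself does not prove this lemma at all — it is imported verbatim from the literature (Vershynin, Proposition 2.5.2) — and your Chernoff-plus-union-bound argument, specializing the moment-generating-function hypothesis to $\nu = \pm e_i$, optimizing at $\phi = \epsilon/\lambda_\omega^2$, and summing the coordinatewise tails, is precisely the standard derivation underlying the cited result, so there is nothing to reconcile between your route and the paper's.
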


\begin{As}[Sub-Gaussian noise]\label{As:sub_G}
The measurement noise $\omega$ follows the i.i.d. sub-Gaussian distribution with zero mean and a time-invariant parameter $\lambda_{\omega} \in \R_+$.
\end{As}

The class of sub-Gaussian distributions is board, containing Gaussian, Bernoulli, and all bounded distributions. Also, the tails of sub-Gaussian distributions decrease exponentially fast from~\eqref{eq:sub_G con ineq}, which is expected in many applications.
Given an acceptable FAR, the following theorem provides the determination method of the threshold~$J_{th}$ and FDR.  

\begin{Thm}[Thresholding with probabilistic performance certificates]\label{Thm:performance certificates}
Suppose Assumption~\ref{As:sub_G} holds. Consider the system~\eqref{eq:SS model}, the evaluation function~$J(r)$ in~\eqref{eq: Jr}, the fault detection filter obtained by solving~\eqref{eq:FD opt problem} with the derived values~$\eta^{\star}_1$ and~$\eta^{\star}_2$, and faults of interest~$f \in \Omega_f$. Given an acceptable FAR~$\varepsilon_1 \in (0,1]$, the probabilistic performance~\eqref{eq:FAR prob} in Problem 1b is achieved if the threshold~$J_{th}$ is set as
\begin{align}\label{eq:Jth}
    J_{th} = \lambda_{\omega} \sqrt{2 n_r \eta^{\star}_1 \ln{(2\mathcal{T}n_r / \varepsilon_1)}},
\end{align}
and, when~$\underline{f} > J_{th}\sqrt{n_r/\eta^{\star}_2}$, FDR in~\eqref{eq:FDR prob} satisfies
\begin{small}
\begin{align}\label{eq:FDR}
    ~\mPr\left\{J(r) > J_{th} \big| f \in \Omega_f \right\} \geq 
    \max\left\{0, 1-2\mathcal{T}n_r\e^{-\frac{\left(\underline{f}\sqrt{\eta^{\star}_2/n_r} - J_{th}\right)^2}{2\eta^{\star}_1\lambda^2_{\omega}}}\right\}.
\end{align}
\end{small}
\end{Thm}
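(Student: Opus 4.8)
The plan is to treat the false-alarm and detection guarantees separately, in both cases decomposing the residual as $r=\T_{\omega r}(\mfq)[\omega]+\T_{fr}(\mfq)[f]$ and exploiting that every scalar entry of the noise part is sub-Gaussian with a parameter controlled by the $\Hto$ bound $\eta^{\star}_1$. Concretely, letting $\{G(\tau)\}$ be the impulse response of $\T_{\omega r}(\mfq)$, the $i$-th entry of the noise residual is $r_{i,\omega}(k)=\sum_{\tau} g_i(\tau)^{\top}\omega(k-\tau)$, where $g_i(\tau)^{\top}$ is the $i$-th row of $G(\tau)$. Under Assumption~\ref{As:sub_G} ($\omega$ i.i.d., zero-mean, sub-Gaussian with parameter $\lambda_{\omega}$), independence across $\tau$ shows $r_{i,\omega}(k)$ is sub-Gaussian with parameter $\lambda_{\omega}\big(\sum_{\tau}\|g_i(\tau)\|_2^2\big)^{1/2}$. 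Summing over $i$ and using Parseval together with Definition~\ref{def:Hto} gives $\sum_i\sum_{\tau}\|g_i(\tau)\|_2^2=\|\T_{\omega r}(\mfq)\|^2_{\Hto}\leq\eta^{\star}_1$, so each entry has sub-Gaussian parameter at most $\lambda_{\omega}\sqrt{\eta^{\star}_1}$. This lets me apply the two-sided scalar tail implied by Lemma~\ref{lem:sub-Gaussian con} uniformly across all $n_r$ entries.

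For the FAR bound I set $f=0$, so $r=r_{\omega}$. Since $\|r(k)\|_2\leq J_{th}$ for every $k$ in the window forces $J(r)\leq J_{th}$, it suffices to control $\max_k\|r(k)\|_2$. Passing to the $\infty$-norm via $\|r(k)\|_2\leq\sqrt{n_r}\,\|r(k)\|_{\infty}$ and taking a union bound over the $\mathcal{T}$ time indices and $n_r$ entries, I bound $\mPr\{J(r)>J_{th}\}$ by $\mathcal{T}n_r$ times the single-entry tail $2\e^{-(J_{th}/\sqrt{n_r})^2/(2\lambda_{\omega}^2\eta^{\star}_1)}$. Substituting the prescribed $J_{th}$ from~\eqref{eq:Jth} collapses the exponent to $\ln(2\mathcal{T}n_r/\varepsilon_1)$, so the single-entry tail equals $\varepsilon_1/(\mathcal{T}n_r)$ and the union bound returns exactly $\varepsilon_1$, establishing~\eqref{eq:FAR prob}.

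For the FDR bound the key is a time-domain lower bound on the fault contribution $r_f=\T_{fr}(\mfq)[f]$. Because $\|\T_{fr}(\mfq)\|^2_{\Hmin(\Theta_m)}\geq\eta^{\star}_2$, Definition~\ref{def:Hmin} certifies $\underline{\sigma}\big(\T_{fr}(\e^{j\theta})\big)\geq\sqrt{\eta^{\star}_2}$ for every $\theta\in\bar{\Theta}$; since $f\in\mathcal{F}(\bar{\Theta})$ this yields $\|r_f(k)\|_2\geq\sqrt{\eta^{\star}_2}\,\|f(k)\|_2\geq\sqrt{\eta^{\star}_2}\,\underline{f}$, hence $\|r_f(k)\|_{\infty}\geq\underline{f}\sqrt{\eta^{\star}_2/n_r}$, so for each $k$ some entry carries signal of magnitude at least $\underline{f}\sqrt{\eta^{\star}_2/n_r}$. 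Using $\|r(k)\|_2\geq\|r(k)\|_{\infty}\geq\underline{f}\sqrt{\eta^{\star}_2/n_r}-|r_{i,\omega}(k)|$ for that entry, detection fails at time $k$ only if the noise there exceeds $\underline{f}\sqrt{\eta^{\star}_2/n_r}-J_{th}$, which is positive precisely under the stated condition $\underline{f}>J_{th}\sqrt{n_r/\eta^{\star}_2}$. A union bound over the $\mathcal{T}$ times and $n_r$ entries, with the single-entry tail of parameter $\lambda_{\omega}\sqrt{\eta^{\star}_1}$, bounds the non-detection probability by $2\mathcal{T}n_r\,\e^{-(\underline{f}\sqrt{\eta^{\star}_2/n_r}-J_{th})^2/(2\eta^{\star}_1\lambda_{\omega}^2)}$, and complementing gives~\eqref{eq:FDR}, where the outer $\max\{0,\cdot\}$ only records the regime in which the bound is vacuous.

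I expect the main obstacle to be the pointwise signal bound $\|r_f(k)\|_2\geq\sqrt{\eta^{\star}_2}\,\|f(k)\|_2$: the $\Hmin$ index is intrinsically a spectral ($\Lto$-gain) quantity, whereas the evaluation function $J(r)$ demands control at each time instant. This transfer is immediate for persistent narrowband (e.g.\ sinusoidal) faults in $\mathcal{F}(\bar{\Theta})$, where the steady-state gain applies with constant-in-time magnitude, but for a general superposition supported on $\bar{\Theta}$ one must argue in steady state and verify that the per-frequency minimum-singular-value bound survives the inverse transform. Making this step rigorous, or pinning down the precise subclass of faults for which it holds, is the crux; the remaining concentration and union-bound arguments are routine once the noise parameter is tied to $\eta^{\star}_1$ and the signal floor to $\eta^{\star}_2$.
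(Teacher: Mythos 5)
Your proposal follows essentially the same route as the paper's proof, and it is correct at the same level of rigor as the published argument. Two differences are cosmetic: (i) where you derive the sub-Gaussianity of each entry of the filtered noise directly from the convolution representation and Parseval, the paper simply invokes Lemma~\ref{lem:Linear trans} (a result imported from prior work) to get the parameter $\|\T_{\omega r}(\mfq)\|_{\Hto}\lambda_{\omega}\leq\sqrt{\eta^{\star}_1}\lambda_{\omega}$; (ii) for the FDR part, the paper argues via the window-summed triangle inequality $\sum_k\big(\|\E[r(k)]\|_{\infty}-\|r(k)\|_{\infty}\big)\leq\sum_k\|r(k)-\E[r(k)]\|_{\infty}$ and then chooses $\epsilon=\frac{1}{\mathcal{T}}\sum_k\|\E[r(k)]\|_{\infty}-J_{th}$, whereas you argue per time instant with a union bound; both organizations produce the identical prefactor $2\mathcal{T}n_r$ and exponent $\big(\underline{f}\sqrt{\eta^{\star}_2/n_r}-J_{th}\big)^2/(2\eta^{\star}_1\lambda^2_{\omega})$.

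On the obstacle you flag: you have identified precisely the step that the paper itself does not justify. The paper's proof asserts $\sum_k\|\T_{fr}[f(k)]\|_2\geq\sqrt{\eta^{\star}_2}\,\mathcal{T}\underline{f}$, citing only $\|\T_{fr}\|^2_{\Hmin(\Theta_m)}\geq\eta^{\star}_2$ from Theorem~\ref{thm:FD filter design} and $\|f(k)\|_2\geq\underline{f}$ for $f\in\Omega_f$ --- exactly the frequency-to-time transfer you worry about. As you note, the $\Hmin(\bar{\Theta})$ index is a spectral gain: via Parseval and Assumption~\ref{As:fault regularity} it certifies the energy bound $\|\T_{fr}[f]\|_{\Lto}\geq\sqrt{\eta^{\star}_2}\,\|f\|_{\Lto}$ over the whole time axis, not a per-sample or per-window bound; a phase shift alone can make the output small at instants where $\|f(k)\|_2\geq\underline{f}$. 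So your closing caveat is not a defect of your proposal relative to the paper --- it is a fair criticism of the published proof as well, and making that step rigorous (e.g., restricting to persistent narrowband faults and steady-state operation, as you suggest) would strengthen both.
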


\begin{proof}
The proof is relegated to Section \ref{subsec: FD proof}.
\end{proof}

From the concentration property of sub-Gaussian distributions, the threshold~$J_{th}$ in~\eqref{eq:Jth} depends logarithmically on FAR, i.e.,$\sqrt{\ln (1/ \varepsilon_1)}$. This improves the state-of-the-art results (e.g.,~\cite{boem2018plug} and~\cite[Section 10.2.1]{ding2008model}), which rely on Chebyshev's inequality and result in thresholds that scale polynomially with~$\sqrt{1/\varepsilon_1}$. The threshold~\eqref{eq:Jth} also extends our previous work~\cite[Theorem 3.8]{dong2023multimode} where the one-dimensional residual is considered. In addition, a lower bound for~$\underline{f}$ is derived to ensure that FDR can be achieved in~\eqref{eq:FDR}.

\section{Fault Estimation: Optimal Design and Suboptimality Gap}\label{sec:FE}
This section presents design methods for the fault estimation filter and the derivation process of a suboptimality gap for the original estimation problem. 
To improve the clarity of presentation, some proofs are relegated to Section~\ref{sec:proofs}.

\subsection{Fault estimation filter design}
The formulation of the fault estimation filter is provided in~\eqref{eq: FD filter}. 
Based on the desired mapping relations outlined in Problem~2, the design of the filter is formulated into a bilinear optimization problem in the following theorem. 

\begin{Thm}[Optimal estimation: exact finite reformulation]\label{Thm: exact FE}
    Consider the system~\eqref{eq:SS model}, the structure of the filter~\eqref{eq: FD filter}, and the state-space realizations~$(\mathcal{A}_r,\mathcal{B}_{\omega r},\mathcal{C}_r)$ and~$(\mathcal{A}_r,\mathcal{B}_{fr},\mathcal{C}_r)$. Given the filter order~$d_N$, $d_a=d_N$, the dimension of residual~$n_r=n_f$, a scalar~$\beta \in [0,1]$, a sufficiently small~$\vartheta \in \R_+$, and the fault frequency content information~$\bar{\Theta}$, the minimization program in Problem~2 can be equivalently stated as follows 
    \begin{align}\label{eq:FE opt}
    \min ~&\beta\eta_1 + (1-\beta) \eta_3 \notag\\
    \textup{s.t.} ~&\eta_1,\eta_3 \in \mathbb{R}_+,a_i\in\R,~N_i\in\R^{n_r\times(n_x+n_y)}, ~i\in\{0,1,\dots,d_N\}, \notag\\
    &P_1 \in \mathcal{S}^{n_{x_r}},Q_1\in\mathcal{S}^{n_r},\mathcal{A}_r,~\mathcal{B}_{\omega r},~\mathcal{B}_{f r},~\bar{\mathcal{N}},~\bar{H} ~\text{in} ~\eqref{eq: FD variables}, \notag\\
    ~&P_{2m},~Q_{2m}\in\mathbb{H}^{n_{x_r}},V_m\in\R^{n_{x_r}\times (2n_{x_r}+n_f)}, m\in\{1,\dots,n_{\theta}\}, \notag\\
    &\eqref{eq:FD const1}, ~\eqref{eq:FD const2}, \notag\\
    &\begin{bmatrix}
    -P_{2m}  &\delta_m Q_{2m}    &0 \\
    *       &\bar{\Xi}_m    &-\mathcal{C}_r^{\top}\\
    *      &*       &I-\eta_3 I
    \end{bmatrix} 
    + \begin{bmatrix}
    -I \\ \mathcal{A}^{\top}_r \\\mathcal{B}^{\top}_{fr}
    \end{bmatrix}V_m +V_m^{\top}\begin{bmatrix}-I &\mathcal{A}_r &\mathcal{B}_{fr}\end{bmatrix}
    \preceq -\vartheta I, \notag\\
    &Q_{2m} \succeq \vartheta I, ~m\in\{1,\dots,n_{\theta}\},
    \end{align} 
    where for each frequency range~$\Theta_m = \{\theta_f: \theta_{1_m} \leq \theta_f \leq \theta_{2_m} \}$, the variables $\delta_m=\e^{j\theta_{c_m}}$,~$\bar{\Xi}_m  = P_{2m}-2\cos(\theta_{d_m})Q_{2m}+\mathcal{C}_r^{\top}\mathcal{C}_r$ with $\theta_{c_m} = (\theta_{1_m}+\theta_{2_m})/2$ and~$\theta_{d_m} = (\theta_{2_m}-\theta_{1_m})/2$.
\end{Thm}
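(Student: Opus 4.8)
The plan is to reuse the three-constraint decomposition already exploited for Theorem~\ref{thm:FD filter design} and reformulate each constraint of Problem~2 separately. The decoupling constraint \eqref{eq:mapping 1} and the $\Hto$ constraint \eqref{eq:mapping 2} are identical to those in Problem~1a, so their reformulations into \eqref{eq:FD const1} and \eqref{eq:FD const2} transfer verbatim: the polynomial identity $\mathcal{N}(\mfq)H(\mfq)=0$ (the nonzero factor $a(\mfq)$ being irrelevant) is equivalent, by matching coefficients of $\mfq^i$ through the block-Toeplitz structure of $\bar H$, to $\bar{\mathcal N}\bar H=0$; and $\|-\F(\mfq)W\|_{\Hto}^2\le\eta_1$ follows from the standard discrete-time Lyapunov/trace characterization of the $\Hto$ norm applied to the realization $(\mathcal A_r,\mathcal B_{\omega r},\mathcal C_r)$. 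Hence the only genuinely new work is the restricted-$\Hinf$ condition \eqref{eq:Fault estimation}.

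First I would put $\T_{fr}(\mfq)-I$ in state space. Since $\F(\mfq)$ is strictly proper ($d_a=d_N$ forces $\deg a>\deg\mathcal N$), the map $\T_{fr}(\mfq)=\mathcal C_r(\mfq I-\mathcal A_r)^{-1}\mathcal B_{fr}$ has no feed-through, so $\T_{fr}(\mfq)-I$ admits the realization $(\mathcal A_r,\mathcal B_{fr},\mathcal C_r,-I)$ with direct term $D=-I$; this is exactly where the hypothesis $n_r=n_f$ enters, as it renders the subtraction of $I$ meaningful. By Definition~\ref{def:Hinf}, the bound $\|\T_{fr}(\mfq)-I\|_{\Hinf(\Theta_m)}^2\le\eta_3$ is the frequency-domain quadratic inequality $(\T_{fr}(\e^{j\theta})-I)^{*}(\T_{fr}(\e^{j\theta})-I)\preceq\eta_3 I$ for all $\theta\in\Theta_m$, i.e.\ a supply-rate inequality with $\Pi=\operatorname{diag}(I,-\eta_3 I)$.

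Next I would invoke the middle-frequency case of the GKYP lemma (Lemma~\ref{lem:GKYP}) with $\delta_m=\e^{j\theta_{c_m}}$, $\theta_{c_m}=(\theta_{1_m}+\theta_{2_m})/2$, and $\theta_{d_m}=(\theta_{2_m}-\theta_{1_m})/2$. The output factor carries the feed-through, so $\bigl[\begin{smallmatrix}\mathcal C_r & -I\\ 0 & I\end{smallmatrix}\bigr]^{*}\Pi\bigl[\begin{smallmatrix}\mathcal C_r & -I\\ 0 & I\end{smallmatrix}\bigr]$ expands to $\bigl[\begin{smallmatrix}\mathcal C_r^{\top}\mathcal C_r & -\mathcal C_r^{\top}\\ -\mathcal C_r & I-\eta_3 I\end{smallmatrix}\bigr]$, which is precisely what produces the $+\mathcal C_r^{\top}\mathcal C_r$ inside $\bar\Xi_m$, the off-diagonal block $-\mathcal C_r^{\top}$, and the $(I-\eta_3 I)$ corner of \eqref{eq:FE opt}; note the sign of the output term flips relative to the $\Hmin$ reformulation in Theorem~\ref{thm:FD filter design} because an $\Hinf$ upper bound reverses the inequality direction. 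The frequency-dependent term then contributes only $-P_{2m}$, $\delta_m Q_{2m}$, and the $P_{2m}-2\cos(\theta_{d_m})Q_{2m}$ portion of $\bar\Xi_m$. Finally, rather than projecting out the state-transition relation, I would retain it and introduce the slack multiplier $V_m$ via a Finsler-type dilation (as in the detection argument), absorbing the constraint $\begin{bmatrix}-I & \mathcal A_r & \mathcal B_{fr}\end{bmatrix}\zeta=0$ into the dilated inequality and treating $V_m$ as a free decision variable; this yields the displayed matrix inequality, with $\vartheta$ a vanishing regularizer matching \eqref{eq:FD const3}.

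The main obstacle I anticipate is the sign- and block-bookkeeping in the GKYP expansion with the nonzero feed-through: one must check that the cross term between the $\mathcal C_r$ output channel and the $-I$ direct channel lands as $-\mathcal C_r^{\top}$, and that $D^{*}D=I$ combines with $-\eta_3 I$ into the $(1-\eta_3)I$ corner, while confirming the $(\zeta,\zeta)$ and $(\zeta,x)$ blocks receive exactly $-P_{2m}$ and $\delta_m Q_{2m}$. Establishing \emph{exactness} in both directions is the subtler point, namely that promoting $V_m$ to a decision variable neither tightens nor relaxes the original $\Hinf$ constraint; this rests on the projection/elimination structure underlying the GKYP lemma and mirrors the corresponding verification in the proof of Theorem~\ref{thm:FD filter design}.
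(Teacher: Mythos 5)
Your proposal is correct and follows essentially the same route as the paper's proof: reuse the equivalences \eqref{eq:FD const1}$\Leftrightarrow$\eqref{eq:mapping 1} and \eqref{eq:FD const2}$\Leftrightarrow$\eqref{eq:mapping 2} from Theorem~\ref{thm:FD filter design}, realize $\T_{fr}(\mfq)-I$ as $(\mathcal{A}_r,\mathcal{B}_{fr},\mathcal{C}_r,-I)$, apply the middle-frequency GKYP lemma with $\Pi=\operatorname{diag}(I,-\eta_3 I)$, and introduce $V_m$ via Finsler's lemma exactly as in the detection argument. Your block-by-block expansion of the output factor (yielding $+\mathcal{C}_r^{\top}\mathcal{C}_r$, $-\mathcal{C}_r^{\top}$, and $I-\eta_3 I$) and the remark on where $n_r=n_f$ enters are in fact more explicit than the paper's own proof, which simply defers these details to the proof of Theorem~\ref{thm:FD filter design}.
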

\begin{proof}
    It is proved in Theorem~\ref{thm:FD filter design} that~\eqref{eq:FD const1} and~\eqref{eq:FD const2} are equivalent to conditions~\eqref{eq:mapping 1} and~\eqref{eq:mapping 2}, respectively.   
    To demonstrate the equivalence between constraints~\eqref{eq:FE opt} and conditions~\eqref{eq:Fault estimation}, the state-space realization of~\mbox{$\T_{f r}(\mfq)-I$} is derived as~$(\mathcal{A}_r,\mathcal{B}_{fr},\mathcal{C}_r,-I)$. 
    By setting the matrix~$\Pi = \textup{diag}(I,-\eta_3 I)$ and using~$(\mathcal{A}_r,\mathcal{B}_{fr},\mathcal{C}_r,-I)$ in Lemma~\ref{lem:GKYP}, the equivalence between~\eqref{eq:FE opt} and~\eqref{eq:Fault estimation} is established. 
    The proof procedure of the equivalence is similar to that of~\eqref{eq:FD const3} in the proof of Theorem~\ref{thm:FD filter design}. This completes the proof.
\end{proof}

The optimization problem in Theorem~\ref{Thm: exact FE} can be solved using Algorithm~\ref{alg:algorithm_1} as well. 
However, the key to achieving satisfactory estimation results is to ensure that~$\|\T_{f r}(\mfq)-I\|_{\Hinf(\Theta_m)}$ is sufficiently small. 
This usually requires several iteration steps with Algorithm~\ref{alg:algorithm_1} and results in heavy computational loads when dealing with large-scale systems. 

\subsection{Convex approximation with suboptimality gap}
To reduce the computational complexity, the estimation condition~\eqref{eq:Fault estimation} is relaxed by letting~$\T_{fr}(\mfq)$ approximate the identity matrix at~$\kappa \in \N$ selected finite frequency points~$\theta_i \in \bar{\Theta}$ instead of considering all frequencies, i.e.,
\begin{align}\label{eq: relaxed FE con}
    \left\| \T_{fr}(\e^{j\theta_i})-I \right\|^2_{2} \leq \bar{\eta}_{3}, \quad  \forall i \in\{1,\dots,\kappa\},
\end{align}
where~$\bar{\eta}_{3} \in \R_+$. The relaxed version of Problem~2 is derived as follows.

\paragraph{{\bf Problem 2r} \textup{(Fault estimation with finite frequency content)}\bf} 
\textit{
Consider the system~\eqref{eq:SS model}, the filter to be designed in~\eqref{eq: FD filter}, and the expression of the residual~\eqref{eq:residual format}. Given a scalar~$\beta \in [0,1]$, find~$\F(\mfq)$ via the minimization program:
\begin{align*}
        \min_{ \eta_1,\bar{\eta}_{3}  \in \R_+, \F(\mfq)} ~\left\{\beta  \eta_{1}+ (1-\beta) \bar{\eta}_{3} : \eqref{eq:mapping 1},\eqref{eq:mapping 2},\eqref{eq: relaxed FE con} \right\}.
\end{align*}}

Before presenting the solution to Problem 2r, let us make some clarifications on~$\F(\mfq)$.
For simplicity, the poles of the filter are fixed. Specifically, roots of~$a(\mfq)$ are selected inside the unit disk and the order is set as~$d_a=d_N$, so that the fault estimation filter is stable and strictly proper. 
Thus, the coefficient matrices~$N_i$ for~$i \in \{0,1,\dots,d_N\}$ become the only parameters to be determined.
For clarity, by using the multiplication rule of polynomial matrices~\cite[Lemma 4.2]{esfahani2015tractable}, the transfer functions~$\T_{fr}(\mfq)$ and~$\T_{\omega r}(\mfq)$ outlined in~\eqref{eq:residual format} are written as
\begin{align}
    \T_{fr}(\mfq) = -\frac{\mathcal{N}(\mfq)G}{a(\mfq)} 
                  =  \bar{\mathcal{N}}\Psi_G(\mfq)~\text{and} 
    ~\T_{\omega r}(\mfq) = -\frac{\mathcal{N}(\mfq)W}{a(\mfq)} 
                        =  \bar{\mathcal{N}}\Psi_W(\mfq), 
\label{eq:Tran f2fest}
\end{align}
where
\begin{align*}
    &\Psi_G(\mfq) = -a^{-1}(\mfq) \text{diag}(G,\dots,G) [I,\mfq I, \dots, \mfq^{d_N}I]^{\top} ~\text{and} \\
    &\Psi_W(\mfq) =-a^{-1}(\mfq) \text{diag}(W,\dots,W) [I,\mfq I, \dots, \mfq^{d_N}I]^{\top}.
\end{align*}
Subsequently, the design method of the fault estimation filter with relaxed conditions depicted in Problem 2r is provided in the following theorem. 

\begin{Thm}[Optimal estimation: finite relaxation QP]\label{Thm:FE filter design}
    Consider the system~\eqref{eq:SS model}, the structure of the filter~\eqref{eq: FD filter}, and the reformulations of~$\T_{fr}(\mfq)$ and~$\T_{\omega r}(\mfq)$ in~\eqref{eq:Tran f2fest}. Given the order~$d_N$, the dimension~$n_r=n_f$, the stable denominator~$a(\mfq)$ with~$d_a = d_N$, ~$\kappa$ frequency points~$\theta_i \in \bar{\Theta}$, and the weight~$\beta \in [0,1]$, the optimization problem in Problem 2r can be reformulated as the following QP problem:
    \begin{subequations}\label{eq:fest opt}
    \begin{align}
    \min ~&\beta \eta_1+ (1-\beta) \bar{\eta}_{3} \notag \\ 
    \textup{s.t.} 
    ~&\bar{\mathcal{N}} ,\bar{H} ~\text{in}~\eqref{eq: NbarHbar}, ~\eta_{1},~\bar{\eta}_{3} \in \R_+, \notag \\
    & \bar{\mathcal{N}} \bar{H} = 0, \label{eq:fest opt1}\\
    & \textup{Trace}\left[\bar{\mathcal{N}} \Phi \bar{\mathcal{N}}^{\top}\right] \leq \eta_1, \label{eq:fest opt2} \\
    &\left\|\begin{bmatrix}
        \bar{\mathcal{N}}\mathcal{R}_i - I &-\bar{\mathcal{N}}\mathcal{I}_i\\
        \bar{\mathcal{N}}\mathcal{I}_i &\bar{\mathcal{N}}\mathcal{R}_i - I
    \end{bmatrix}\right\|^2_2 \leq \bar{\eta}_{3},~\forall i \in \{1,\dots,\kappa\} \label{eq:fest opt3}
    \end{align}
    \end{subequations} 
where~$\mathcal{R}_i = \textup{Real}\left(\Psi_G(\e^{j\theta_i})\right)$ and $\mathcal{I}_i = \textup{Imag}\left(\Psi_G(\e^{j\theta_i})\right)$ are the real and imaginary parts of $\Psi_G(\e^{j\theta_i})$, respectively, and~$\Phi = \frac{1}{2\pi} \int^{\pi}_{-\pi} \Psi_W(\e^{j\theta})  \Psi_W^{*}(\e^{j\theta}) \diff \theta$.
\end{Thm}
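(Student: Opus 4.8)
The plan is to show that Problem~2r and the program~\eqref{eq:fest opt} carry the identical linear objective $\beta\eta_1+(1-\beta)\bar{\eta}_3$, so that it suffices to prove the three feasibility conditions~\eqref{eq:mapping 1}, \eqref{eq:mapping 2}, and~\eqref{eq: relaxed FE con} are respectively equivalent to~\eqref{eq:fest opt1}, \eqref{eq:fest opt2}, and~\eqref{eq:fest opt3} once $a(\mfq)$ is fixed and $\bar{\mathcal{N}}$ is the only decision matrix. The decoupling condition~\eqref{eq:mapping 1}, i.e.\ $\F(\mfq)H(\mfq)=0$, is handled exactly as in Theorem~\ref{thm:FD filter design}: since $a(\mfq)\neq 0$ it is equivalent to the polynomial identity $\mathcal{N}(\mfq)H(\mfq)=0$, which by the multiplication rule for polynomial matrices~\cite[Lemma 4.2]{esfahani2015tractable} collapses to the linear equality $\bar{\mathcal{N}}\bar{H}=0$ in~\eqref{eq:fest opt1}, reusing the argument already given for~\eqref{eq:FD const1}.

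For the noise channel I would substitute the factorization $\T_{\omega r}(\mfq)=\bar{\mathcal{N}}\Psi_W(\mfq)$ from~\eqref{eq:Tran f2fest} into the definition of the $\Hto$ norm and compute
\begin{align*}
\|\T_{\omega r}(\mfq)\|^2_{\Hto}
&=\frac{1}{2\pi}\int^{\pi}_{-\pi}\Tr\!\left(\Psi_W^{*}(\e^{j\theta})\bar{\mathcal{N}}^{\top}\bar{\mathcal{N}}\Psi_W(\e^{j\theta})\right)\diff\theta \\
&=\Tr\!\left(\bar{\mathcal{N}}\left[\frac{1}{2\pi}\int^{\pi}_{-\pi}\Psi_W(\e^{j\theta})\Psi_W^{*}(\e^{j\theta})\diff\theta\right]\bar{\mathcal{N}}^{\top}\right)=\Tr\!\left(\bar{\mathcal{N}}\Phi\bar{\mathcal{N}}^{\top}\right),
\end{align*}
using $\bar{\mathcal{N}}^{*}=\bar{\mathcal{N}}^{\top}$ (the coefficients $N_i$ are real) and the cyclic property of the trace to pull the constant matrix $\Phi$ out of the integral. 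A short remark then records that $\Phi$ is real, symmetric, and positive semidefinite: the conjugate symmetry $\Psi_W(\e^{-j\theta})=\overline{\Psi_W(\e^{j\theta})}$ of a real-coefficient filter makes the imaginary part of the integrand odd in $\theta$, so it cancels over the symmetric interval. This turns~\eqref{eq:mapping 2} into the convex quadratic constraint~\eqref{eq:fest opt2}, with $\Phi$ precomputable once $a(\mfq)$ and $W$ are fixed.

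The crux is the finite-frequency estimation constraint. Writing $\Psi_G(\e^{j\theta_i})=\mathcal{R}_i+j\mathcal{I}_i$ and using $\T_{fr}(\e^{j\theta_i})=\bar{\mathcal{N}}\Psi_G(\e^{j\theta_i})$ gives
\begin{align*}
\T_{fr}(\e^{j\theta_i})-I=\big(\bar{\mathcal{N}}\mathcal{R}_i-I\big)+j\,\bar{\mathcal{N}}\mathcal{I}_i,
\end{align*}
so~\eqref{eq: relaxed FE con} is a bound on the spectral norm of a complex matrix $M_i=A_i+jB_i$ with $A_i=\bar{\mathcal{N}}\mathcal{R}_i-I$ and $B_i=\bar{\mathcal{N}}\mathcal{I}_i$. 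I would then invoke the standard real-representation identity that $A+jB\mapsto\left[\begin{smallmatrix}A&-B\\B&A\end{smallmatrix}\right]$ preserves singular values: one checks that the square of the real block form equals the real representation of the Hermitian matrix $M_i^{*}M_i$, whose eigenvalues therefore appear with doubled multiplicity, so the real block and $M_i$ share the same largest singular value and hence the same spectral norm. This establishes the equivalence between~\eqref{eq: relaxed FE con} and~\eqref{eq:fest opt3} and is the only step that is more than a direct substitution.

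Combining the three equivalences, the feasible sets of Problem~2r and~\eqref{eq:fest opt} coincide and their identical linear objectives agree, which proves the claimed reformulation. The main obstacle is the singular-value-preservation argument for the real representation of the complex constraint; the remaining two conditions follow routinely from the factorizations in~\eqref{eq:Tran f2fest} and the trace manipulation above.
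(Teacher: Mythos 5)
Your proposal is correct, and its overall architecture matches the paper's proof: both establish the three equivalences \eqref{eq:mapping 1}$\Leftrightarrow$\eqref{eq:fest opt1} (by reusing the polynomial-multiplication argument from Theorem~\ref{thm:FD filter design}), \eqref{eq:mapping 2}$\Leftrightarrow$\eqref{eq:fest opt2}, and \eqref{eq: relaxed FE con}$\Leftrightarrow$\eqref{eq:fest opt3} (via the fact that the singular values of $M_R+jM_I$ coincide with those of the augmented real block matrix, which the paper cites as known and you additionally prove through the homomorphism properties $T(M_1M_2)=T(M_1)T(M_2)$, $T(M^*)=T(M)^\top$ and eigenvalue doubling --- a harmless and welcome addition). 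The genuine difference is in the noise channel. The paper routes through Lemma~\ref{lem:cov fest}: it computes the covariance of the residual driven by white noise, invokes the sub-Gaussian bound $\E[\omega(k)\omega^*(k)]\preceq\lambda^2_\omega I$ to dominate that covariance by $\lambda^2_\omega\,\bar{\mathcal{N}}\Phi\bar{\mathcal{N}}^{\top}$, and only then remarks that the trace constraint ``coincides with the $\Hto$ norm.'' You instead substitute $\T_{\omega r}(\mfq)=\bar{\mathcal{N}}\Psi_W(\mfq)$ into Definition~\ref{def:Hto} and use cyclicity of the trace to obtain the exact identity $\|\T_{\omega r}(\mfq)\|^2_{\Hto}=\Tr\big(\bar{\mathcal{N}}\Phi\bar{\mathcal{N}}^{\top}\big)$, so the equivalence of \eqref{eq:mapping 2} and \eqref{eq:fest opt2} is immediate, deterministic, and needs no distributional assumption on $\omega$; your side remark that $\Phi$ is real symmetric positive semidefinite (by conjugate symmetry of $\Psi_W$, making the imaginary part of the integrand odd in $\theta$) also confirms that \eqref{eq:fest opt2} is a genuine real convex quadratic constraint. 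What the paper's longer route buys is the statistical reading of the same quantity --- a bound on the residual covariance under sub-Gaussian noise --- which connects this constraint to the thresholding analysis of Theorem~\ref{Thm:performance certificates}; as a pure proof of the stated reformulation, your direct computation is the tighter and more self-contained argument.
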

\begin{proof}
The proof is relegated to Section \ref{subsec:FE proof}.
\end{proof}

Compared to~\eqref{eq:FE opt}, the design of the fault estimation filter presented in Problem 2r stands out for its integration of more lenient conditions, as expounded in reference~\eqref{eq:fest opt}. Notably, this design exhibits computational tractability, owing to its formulation as a QP problem.
In addition, an approximate analytical solution to~\eqref{eq:fest opt} is given as follows.

\begin{Cor}[Approximate analytical solution]\label{cor:Analytical solution}
Consider the QP problem in~\eqref{eq:fest opt} with the $2$ norm replaced by the Frobenius norm. An approximate analytical solution to~\eqref{eq:fest opt} is: 
    \begin{align}
    \bar{\mathcal{N}}_{App}^{\star} 
    =  \begin{bmatrix}
        \frac{4(1-\beta)}{\kappa}\sum^{\kappa}_{i=1} \mathcal{R}^{\top}_i &0
    \end{bmatrix} 
    \begin{bmatrix}
        2\beta  \Phi +\frac{4(1-\beta)}{\kappa}\sum^{\kappa}_{i=1}\left( \mathcal{R}_i\mathcal{R}^{\top}_i  +  \mathcal{I}_i\mathcal{I}^{\top}_i\right) &\bar{H} \\
        \bar{H}^{\top} &0
    \end{bmatrix}^{\dagger} \begin{bmatrix}
        I \\0
    \end{bmatrix} , \label{eq: ApproSol}
\end{align}
where~$(\cdot)^{\dagger}$ denotes the pseudo-inverse. 
\end{Cor}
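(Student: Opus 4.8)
The plan is to collapse the (Frobenius-norm) version of the QP~\eqref{eq:fest opt} into a single equality-constrained convex quadratic program in the lone matrix variable~$\bar{\mathcal{N}}$, write out its KKT system, and read off the stated closed form from a symmetric saddle-point solve. First I would eliminate the scalar variables. Because $\eta_1$ enters the objective with the nonnegative weight $\beta$ and appears only through $\textup{Trace}[\bar{\mathcal{N}}\Phi\bar{\mathcal{N}}^{\top}]\le\eta_1$, at optimality this holds with equality. The variable $\bar{\eta}_3$ is constrained by all $\kappa$ inequalities, so exactly it equals the worst case $\max_i\|\cdot\|_F^2$; the \emph{approximate} qualifier of the corollary corresponds to replacing this maximum by the sample average $\tfrac1\kappa\sum_i\|\cdot\|_F^2$, which removes the nonsmooth $\max$ and is the first source of approximation.

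Next I would expand the block Frobenius norm. Exploiting the $2\times2$ block structure of the matrix in~\eqref{eq:fest opt3} gives $\big\|[\cdots]\big\|_F^2 = 2\|\bar{\mathcal{N}}\mathcal{R}_i-I\|_F^2 + 2\|\bar{\mathcal{N}}\mathcal{I}_i\|_F^2$, so after dropping the constant $\textup{Trace}$-of-$I$ terms the objective reduces, up to scaling, to the quadratic $\textup{Trace}[\bar{\mathcal{N}} M \bar{\mathcal{N}}^{\top}] - 2\,\textup{Trace}[\bar{\mathcal{N}}C^{\top}]$, where $M=2\beta\Phi+\tfrac{4(1-\beta)}{\kappa}\sum_i(\mathcal{R}_i\mathcal{R}_i^{\top}+\mathcal{I}_i\mathcal{I}_i^{\top})$ and $C=\tfrac{4(1-\beta)}{\kappa}\sum_i\mathcal{R}_i^{\top}$. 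Since $\Phi\succeq0$ and each $\mathcal{R}_i\mathcal{R}_i^{\top},\mathcal{I}_i\mathcal{I}_i^{\top}\succeq0$ while $\beta\in[0,1]$, we have $M\succeq0$; hence this is a convex QP with the affine constraint $\bar{\mathcal{N}}\bar H=0$, and its KKT conditions are both necessary and sufficient for global optimality.

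I would then attach a matrix multiplier $Y$ to $\bar{\mathcal{N}}\bar H=0$ and set the gradient of the Lagrangian to zero, obtaining the stationarity identity $\bar{\mathcal{N}} M - C + Y\bar H^{\top}=0$. Transposing it (using $M=M^{\top}$) and stacking with the primal feasibility $\bar H^{\top}\bar{\mathcal{N}}^{\top}=0$ yields the symmetric saddle-point linear system
\[
\begin{bmatrix} M & \bar H \\ \bar H^{\top} & 0 \end{bmatrix}
\begin{bmatrix} \bar{\mathcal{N}}^{\top} \\ Y^{\top} \end{bmatrix}
= \begin{bmatrix} C^{\top} \\ 0 \end{bmatrix}.
\]
Solving with the pseudo-inverse of the KKT matrix $K$, extracting the top block via $[\,I~~0\,]$, and transposing back gives $\bar{\mathcal{N}}=[\,C~~0\,]\,(K^{\dagger})^{\top}\begin{bmatrix}I\\0\end{bmatrix}$; because $K$ is symmetric, $(K^{\dagger})^{\top}=K^{\dagger}$, and substituting $C=\tfrac{4(1-\beta)}{\kappa}\sum_i\mathcal{R}_i^{\top}$ reproduces exactly~\eqref{eq: ApproSol}.

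The main obstacle is the legitimacy of the pseudo-inverse. The KKT matrix $K$ is typically singular (the constraint block $\bar H$ need not have full column rank and $M$ is only positive \emph{semi}definite), so $K^{\dagger}$ returns the least-norm solution of the saddle-point system rather than a unique one. I would therefore argue that the stationarity system is consistent and that the pair produced by $K^{\dagger}$ is a genuine KKT point; by convexity this certifies global optimality of the \emph{approximated} objective. Together with the earlier $\max\!\to\!$average relaxation and the switch from the $2$-norm to the Frobenius norm, this is precisely why the corollary advertises only an \emph{approximate} analytical solution.
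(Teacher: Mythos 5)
Your proposal is correct and follows essentially the same route as the paper: form the Lagrangian with the averaged Frobenius-norm penalty (the paper's implicit $\max\to$ average substitution, which you make explicit), expand the block Frobenius norm, set the gradient to zero, and stack stationarity with the constraint $\bar{\mathcal{N}}\bar{H}=0$ into the symmetric KKT system solved by the pseudo-inverse. Your additional remarks on convexity ($M\succeq 0$) and on the consistency/least-norm interpretation of $K^{\dagger}$ are sound refinements of, not departures from, the paper's argument.
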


\begin{proof}
The proof is relegated to Section \ref{subsec:FE proof}.
\end{proof}

It is worth mentioning that, for a filter with given poles (fixed denominator~$a(\mfq)$), a suboptimality gap for the original estimation problem stated in Problem 2 can be obtained by solving the optimization problems in Theorem~\ref{Thm: exact FE} and Theorem~\ref{Thm:FE filter design}. 
This result is presented in Proposition~\ref{prop: opt gap}.
To enhance readability, let us denote the optimal value of the objective function in Problem 2 as~$\mathcal{J}^{\star}$ with a given denominator~$a(\mfq)$, i.e.,
\begin{align*}
    \mathcal{J}^{\star}  = \min_{\mathcal{N}(\mfq)} \left\{ \beta \|\T_{\omega r}(\mfq) \|^2_{\mathcal{H}_2} + (1-\beta) \|\T_{fr}(\mfq) - I\|^2_{\mathcal{H}_{\infty}(\Theta)} :\T_{Xr}(\mfq) = 0 \right\}.
\end{align*}
Furthermore, Let~$\eta^{\star}_{1,AO}$ and $\eta^{\star}_{3,AO}$ denote the results obtained by solving the optimization problem~\eqref{eq:FE opt} using the AO approach.
Use~$\eta^{\star}_{1,RR}$ and~$\bar{\eta}^{\star}_{3,RR}$ to denote the optimal values obtained by solving the optimization problem~\eqref{eq:fest opt}.
Subsequently, the suboptimality gap for Problem 2 is presented in the next proposition.

\begin{Prop}[Suboptimality gap with fixed poles] \label{prop: opt gap}
     Given a stable denominator~$a(\mfq)$, the optimal value of the objective function in Problem 2 is bounded by
    \begin{align}\label{eq: opt gap}
       \beta \eta^{\star}_{1,RR} + (1-\beta) \bar{\eta}^{\star}_{3,RR}  
       \leq   \mathcal{J}^{\star} 
       \leq \beta \eta^{\star}_{1,AO} + (1-\beta) \eta^{\star}_{3,AO}.
    \end{align}
\end{Prop}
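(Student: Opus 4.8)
The plan is to establish the two inequalities in~\eqref{eq: opt gap} by two independent and essentially elementary arguments: the upper bound is a \emph{feasibility} statement, and the lower bound is a \emph{relaxation} statement. Throughout, the denominator~$a(\mfq)$ is held fixed (as in the definition of~$\mathcal{J}^\star$ and in both Theorem~\ref{Thm: exact FE} and Theorem~\ref{Thm:FE filter design}), so that the only free decision variable is the numerator~$\mathcal{N}(\mfq)$. Since~$\mathcal{J}^\star$ is by definition the global minimum of Problem~2 over all admissible~$\mathcal{N}(\mfq)$, the whole proof reduces to comparing feasible sets and objective values.

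For the upper bound~$\mathcal{J}^\star \leq \beta\eta^\star_{1,AO} + (1-\beta)\eta^\star_{3,AO}$, I would invoke Theorem~\ref{Thm: exact FE}, which certifies that the program~\eqref{eq:FE opt} solved by the AO scheme is an \emph{exact} reformulation of Problem~2. Hence any iterate returned by the AO method is feasible for~\eqref{eq:FE opt}, and the associated numerator yields a filter satisfying the decoupling condition~\eqref{eq:mapping 1}, the noise bound~\eqref{eq:mapping 2} with~$\|\T_{\omega r}(\mfq)\|^2_{\Hto} \leq \eta^\star_{1,AO}$, and the restricted-norm condition~\eqref{eq:Fault estimation} with~$\|\T_{fr}(\mfq)-I\|^2_{\Hinf(\Theta_m)} \leq \eta^\star_{3,AO}$ for every~$m$. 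Because the restricted~$\Hinf$ norm over~$\bar{\Theta}=\cup_m \Theta_m$ is the maximum of the norms over the individual ranges, this filter attains an objective value in Problem~2 no larger than~$\beta\eta^\star_{1,AO}+(1-\beta)\eta^\star_{3,AO}$, and minimality of~$\mathcal{J}^\star$ gives the bound. Crucially, this step needs only feasibility of the AO iterate, not global optimality of the AO scheme.

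For the lower bound~$\beta\eta^\star_{1,RR} + (1-\beta)\bar{\eta}^\star_{3,RR} \leq \mathcal{J}^\star$, the key observation is that Problem~2r is a \emph{relaxation} of Problem~2: the continuum condition~\eqref{eq:Fault estimation} over~$\bar{\Theta}$ is replaced by the finitely many sampled conditions~\eqref{eq: relaxed FE con} at~$\theta_i \in \bar{\Theta}$, while~\eqref{eq:mapping 1} and~\eqref{eq:mapping 2} are retained verbatim. I would take~$\mathcal{N}^\star(\mfq)$ to be an optimizer of Problem~2 with value~$\mathcal{J}^\star$, so that~$\eta_1^\star = \|\T_{\omega r}(\mfq)\|^2_{\Hto}$ and~$\eta_3^\star = \max_m \|\T_{fr}(\mfq)-I\|^2_{\Hinf(\Theta_m)}$. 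Setting~$\bar{\eta}_3 := \max_{i} \|\T_{fr}(\e^{j\theta_i}) - I\|^2_2$, the inclusion~$\{\theta_i\} \subset \bar{\Theta}$ forces~$\bar{\eta}_3 \leq \eta_3^\star$, so~$(\mathcal{N}^\star,\eta_1^\star,\bar{\eta}_3)$ is feasible for Problem~2r with objective at most~$\mathcal{J}^\star$. Finally, Theorem~\ref{Thm:FE filter design} identifies the optimal value of Problem~2r with that of the QP~\eqref{eq:fest opt}, namely~$\beta\eta^\star_{1,RR}+(1-\beta)\bar{\eta}^\star_{3,RR}$, yielding the lower bound.

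I expect the only delicate point to be the bookkeeping that ties the scalar surrogates~$\eta_3$ and~$\bar{\eta}_3$ to the worst-case quantities they bound, together with the identification — at each sample~$\theta_i$ — of the $2$-norm of the real block matrix in~\eqref{eq:fest opt3} with the maximum singular value of the complex matrix~$\T_{fr}(\e^{j\theta_i})-I$. This last equivalence is exactly what Theorem~\ref{Thm:FE filter design} already establishes, so I would cite it rather than re-derive it, and conduct the relaxation argument directly on the native condition~\eqref{eq: relaxed FE con}. Everything else rests on the two monotonicity facts that fewer constraints can only lower the optimal value and that a feasible point can only raise it, neither of which requires computation.
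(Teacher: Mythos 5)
Your proposal is correct, and it follows the same two-part skeleton as the paper's proof: the upper bound comes from feasibility of the AO solution in the exact reformulation of Theorem~\ref{Thm: exact FE}, and the lower bound from the observation that imposing the frequency constraint only at finitely many samples in $\bar{\Theta}$ is a relaxation, combined with the identification of Problem 2r with the QP~\eqref{eq:fest opt} in Theorem~\ref{Thm:FE filter design}. You deviate in two places, both to your advantage. First, for the upper bound the paper asserts that the AO iteration ``leads to the convergence of the objective function value to the optimal value $\mathcal{J}^{\star}$,'' a claim that is not justified for a nonconvex bilinear program (alternating optimization guarantees at best a stationary point); your argument correctly isolates the fact that only \emph{feasibility} of the terminal AO iterate is needed, which is all that $\mathcal{J}^{\star} \leq \beta\eta^{\star}_{1,AO}+(1-\beta)\eta^{\star}_{3,AO}$ requires, so your justification is sound where the paper's is overstated. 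Second, the paper proves the lower bound by contradiction, supposing the sampled min--max exceeds the restricted-$\Hinf$ minimum and deriving a violation of the optimality of $\mathcal{N}^{\star}_{RR}(\mfq)$, whereas you argue directly that the Problem-2 optimizer is feasible for Problem 2r with objective at most $\mathcal{J}^{\star}$; these are the same monotonicity fact stated contrapositively, but your direct version is shorter and avoids the paper's awkward quantifier bookkeeping (its displayed supposition uses ``$\geq$'' where the negation of the desired inequality would require a strict ``$>$''). The one technical caveat, shared by both proofs, is the tacit assumption that Problem 2 attains its infimum; if it does not, your feasibility argument applies verbatim to an $\epsilon$-optimal numerator and the bound follows by letting $\epsilon \to 0$.
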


\begin{proof}
The proof is relegated to Section \ref{subsec:FE proof}.
\end{proof}

In contrast to the immediate acquisition of the lower bound from the optimization problem's resolution in reference~\eqref{eq:fest opt}, the upper bound derived through the AO approach generally demands multiple iterative phases. This iterative nature can lead to substantial computational burdens unless the initial value is judiciously selected. Fortunately, a remedy lies in employing the solution from the more lenient design problem described in Theorem~\ref{Thm:FE filter design} as the starting point. This initial solution provides a solid foundation for refining the upper bound outlined in reference~\eqref{eq:FE opt} through the utilization of the AO approach in solving the optimization problem. The entire process is succinctly encapsulated in Algorithm~\ref{alg:algorithm_2}.

\begin{algorithm}[h]
	\caption{Computing the suboptimality gap in~\eqref{eq: opt gap}} \label{alg:algorithm_2} 
	\begin{algorithmic}
        \State \textit{Setp 1.} \textbf{Initialization} 
        \begin{itemize}
           \item[(a)] Select $d_N$,~$n_r = n_f$, and a stable denominator $a(\mfq)$ 
           \item[(b)] Select $\kappa$ frequency points uniformly from the frequency range~$\Theta$ and the weight $\beta$
        \end{itemize}        
	\State \textit{Step 2.} \textbf{Derivation of the lower bound} 
	\begin{itemize}
        \item[(a)] Compute the matrix~$\mathcal{R}_i$, $\mathcal{I}_i$, and~$\Phi$ for $i\in \{1,\dots,\kappa\}$
	    \item[(b)] Find the numerator $\mathcal{N}^{\star}_{RR}(\mfq)$ and the bounds~$\eta^{\star}_{1,RR}$ and~$\bar{\eta}^{\star}_{3,RR}$ by solving~\eqref{eq:fest opt}
	    \item[(c)] Output the lower bound: $\beta \eta^{\star}_{1,RR} + (1-\beta) \bar{\eta}^{\star}_{3,RR}$ 
	\end{itemize}
	\State \textit{Step 3}. \textbf{Derivation of the upper bound}
	\begin{itemize}
	    \item[(a)] Set $\mathcal{N}^{\star}_{RR}(\mfq)$ as the initial condition and fix $a(\mfq)$ for~\eqref{eq:FE opt}
	    \item[(b)] Optimize the numerator by solving~\eqref{eq:FE opt} with the AO approach, and obtain~$\eta^{\star}_{1,AO}$ and $\eta^{\star}_{3,AO}$ 
	    \item[(c)] Output the upper bound: $\beta \eta^{\star}_{1,AO} + (1-\beta) \eta^{\star}_{3,AO}$
	\end{itemize}
	\end{algorithmic} 
\end{algorithm}

This section is closed with the following remarks on the proposed design approaches to fault estimation filters.
\begin{Rem}[Trade-off analysis]
    There is a trade-off between decoupling the unknown signals $X$ (consisting of the unknown state~$x$ and disturbance~$d$), suppressing the noise~$\omega$, and estimating the fault~$f$ in~\eqref{eq:FE opt} and~\eqref{eq:fest opt}. First, the feasible solutions to~\eqref{eq:FE opt} and~\eqref{eq:fest opt} lie in the left null space of~$\bar{H}$, which restricts the choice of~$\bar{\mathcal{N}}$. 
    Second, increasing~$\beta$ improves the noise suppression capability of the filter. However, it reduces the estimation performance and vice versa. 
    The trade-offs can, therefore, be used as a guide for selecting appropriate weights.
\end{Rem}

\begin{Rem}[Selection of decision variable sets] 
    When using the AO approach to solve the bilinear optimization problems stated in Theorem~\ref{thm:FD filter design} and Theorem~\ref{Thm: exact FE}, it is essential to partition the decision variables in the bilinear terms into two sets, namely $\mathcal{G}^k_1$ and $\mathcal{G}^k_2$.
    We observe that, for different optimization problems, the choice of decision variable sets greatly influences the convergence speed of the AO approach. 
    In particular, when solving the optimization problem~\eqref{eq:FE opt}, if the decision variable sets are selected without overlap, i.e., $\{\eta^k_1, \eta_2^k, N^k_i,a^k_i,i\in\{0,1,\dots,d_N\}\}$ and $\{P^k_1, Q^k_1, P^k_{2m}, Q^k_{2m}, V_m^k,m\in\{1,\dots,n_{\theta}\}\}$, it leads to a more efficient solution compared to the way in~\eqref{eq: variable sets}.
\end{Rem}

\begin{Rem}[Fault estimation for non-minimum phase systems]
    For non-minimum phase systems, it is reported that the optimal distance between~$\T_{fr}$ and $I$ in the~$\mathcal{H}_{\infty}$ framework is~$1$~\cite[Theorem 14.5]{ding2008model}, i.e., ~$\min_{\bar{\mathcal{N}}} \|\T_{fr}(\mfq) - I \|_{\mathcal{H}_\infty} = 1$, which indicates that a satisfactory fault estimation over the whole frequency range is not achievable.
    Our methods proposed in Theorem~\ref{Thm: exact FE} and Theorem~\ref{Thm:FE filter design} can improve the estimation performance by limiting the frequency ranges.
    This assertion will be substantiated by supporting evidence from simulation results.
\end{Rem}

\begin{Rem}[Non-decoupled disturbances with frequency content information]
    For disturbances that cannot be completely decoupled, and supposing that the knowledge of disturbance frequency content is available, the restricted $\mathcal{H}_{\infty}(\Theta)$ norm can be employed to limit their impact on residuals.
    It is observed from off-line exhaustive simulations that expanding the frequency range of disturbances does not significantly affect fault sensitivity, while the ability to suppress disturbances degrades.
\end{Rem}

\begin{Rem}[Conservatism analysis] The conservatism of the fault estimation filter design method is summarized as follows:
\begin{enumerate} [label=(\roman*), itemsep = 0mm, topsep = 0mm, leftmargin = 6mm]
    \item To reduce computational complexity, a selective approach is adopted for the design of fault estimation filters in~\eqref{eq:fest opt}, where constraints are only imposed on a subset of frequency points in~$\bar{\Theta}$. 
    As a result, the estimation performance at the other frequency points in~$\bar{\Theta}$ may not be guaranteed. 
    However, as demonstrated by simulation results, the degradation of estimation performance at those points is minor.
    \item For simplicity, the denominator of the transfer function~$a(\mfq)$ is fixed in the optimization problem~\eqref{eq:fest opt}, which restricts the design freedom. However, including the simultaneous design of both~$a(\mfq)$ and $\mathcal{N}(\mfq)$ would result in a much more complex optimization problem, which might not be computationally tractable.
\end{enumerate}
\end{Rem}

%%%%%%%%%%%%%%%%%%%%%%%%%%%%%%%%%%%%%%%%%%%%%%%%%%%%%%%%%%%%%%%%%%%%%%%%%%%
%%%%%%%%%%%%%%%%%%%%%%%%%%%%%%%%%%%%%%%%%%%%%%%%%%%%%%%%%%%%%%%%%%%%%%%%%%%
\section{Technical Proofs of Main Results} \label{sec:proofs}

\subsection{Proofs of results in fault detection}\label{subsec: FD proof}
The following two lemmas are required for the proof of Theorem~\ref{thm:FD filter design}. 

\begin{Lem}[GKYP lemma \cite{iwasaki2005generalized}] \label{lem:GKYP}
Consider a transfer function defined as $\T(\mfq) =  \mathcal{C}(\mfq I- \mathcal{A})^{-1}\mathcal{B}+\mathcal{D}$. Given a symmetric matrix $\Pi$ and a frequency range~$\Theta$, the following statements are equivalent:
\begin{enumerate}[label=(\roman*), itemsep = 0mm, topsep = 0mm, leftmargin = 6mm]
    \item The inequality holds in the frequency range~$\theta \in \Theta$
    \begin{align*}
        \begin{bmatrix}
        \T(\e^{j\theta}) \\I
        \end{bmatrix}^{*} \Pi
        \begin{bmatrix}
        \T(\e^{j\theta}) \\I
        \end{bmatrix} \prec 0.
    \end{align*}
    \item There exist Hermitian matrices $\mathcal{P}$ and~$\mathcal{Q}$ with appropriate dimensions and $\mathcal{Q} \succ 0$ such that
    \begin{equation*}
        \begin{bmatrix} \mathcal{A} &\mathcal{B} \\ I &0 \end{bmatrix}^{\top} \Lambda
        \begin{bmatrix} \mathcal{A} &\mathcal{B} \\ I &0 \end{bmatrix} + 
        \begin{bmatrix} \mathcal{C} &\mathcal{D} \\ 0 &I
        \end{bmatrix}^{\top} \Pi
        \begin{bmatrix} \mathcal{C} &\mathcal{D} \\ 0 &I \end{bmatrix} \prec 0,
    \end{equation*} 
    where the following hold: \\
        a. For the low frequency range~$\Theta = \{\theta : 0 \leq \theta \leq \theta_l \}$,~$\Lambda = \begin{bmatrix}
        -\mathcal{P} &\mathcal{Q} \\
        \mathcal{Q} &\mathcal{P}-2\cos(\theta_l)\mathcal{Q}
        \end{bmatrix}$;  \\
        b. For the middle frequency range $\Theta = \{\theta : \theta_1 \leq \theta \leq \theta_2 \}$,~$\Lambda = \begin{bmatrix}
        -\mathcal{P} & \e^{j\theta_c} \mathcal{Q} \\
        \e^{-j\theta_c}\mathcal{Q} &\mathcal{P}-2\cos(\theta_d)\mathcal{Q}
        \end{bmatrix}$ , where $\theta_c = (\theta_1+\theta_2)/2$ and~$\theta_d = (\theta_2-\theta_1)/2$; \\
        c. For the high frequency range~$\Theta = \{\theta :\theta_h \leq \theta \leq \pi   \}$,~$\Lambda = \begin{bmatrix}
        -\mathcal{P} &-\mathcal{Q} \\
        -\mathcal{Q} &\mathcal{P}+2\cos(\theta_h)\mathcal{Q}
        \end{bmatrix}$ .
\end{enumerate}
\end{Lem}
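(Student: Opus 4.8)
The plan is to prove the two implications separately, using the block matrix $\Lambda$ as the device that algebraically encodes the frequency range $\Theta$. First I would record the key characterization: in each regime $\Lambda$ has the Kronecker-type structure $\Lambda = \Phi\otimes\mathcal{P} + \Psi\otimes\mathcal{Q}$, where $\Phi$ encodes the unit circle $|\lambda|^2=1$ and $\Psi$ the arc in question. For the middle-frequency case a direct computation yields
\begin{align*}
    \begin{bmatrix} \e^{j\theta} I \\ I \end{bmatrix}^{*} \Lambda \begin{bmatrix} \e^{j\theta} I \\ I \end{bmatrix} = 2\big(\cos(\theta-\theta_c) - \cos(\theta_d)\big)\,\mathcal{Q},
\end{align*}
which, since $\mathcal{Q}\succ 0$, is positive semidefinite exactly when $\cos(\theta-\theta_c)\ge\cos(\theta_d)$, i.e. when $\theta_1\le\theta\le\theta_2$. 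The low- and high-frequency cases reduce to the analogous scalar identities $2(\cos\theta-\cos\theta_l)\mathcal{Q}$ and $2(\cos\theta_h-\cos\theta)\mathcal{Q}$, pinning down the respective ranges; this explains the sign patterns in the three choices of $\Lambda$.

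For the direction (ii)$\Rightarrow$(i) I would argue by substitution. Fixing $\theta\in\Theta$ for which $\e^{j\theta}$ is not an eigenvalue of $\mathcal{A}$ (the boundary frequencies being handled by continuity of the strict inequalities) and any $u\neq 0$, set $x=(\e^{j\theta}I-\mathcal{A})^{-1}\mathcal{B}u$, so that
\begin{align*}
    \begin{bmatrix} \mathcal{A} & \mathcal{B} \\ I & 0 \end{bmatrix} \begin{bmatrix} x \\ u \end{bmatrix} = \begin{bmatrix} \e^{j\theta} x \\ x \end{bmatrix}, \qquad \begin{bmatrix} \mathcal{C} & \mathcal{D} \\ 0 & I \end{bmatrix} \begin{bmatrix} x \\ u \end{bmatrix} = \begin{bmatrix} \T(\e^{j\theta}) u \\ u \end{bmatrix}.
\end{align*}
Taking the congruence of the LMI in (ii) with $[x^{\top},u^{\top}]^{\top}$ splits it into the $\Lambda$-term, equal to $2(\cos(\theta-\theta_c)-\cos\theta_d)\,x^{*}\mathcal{Q}x\ge 0$ by the characterization above, plus the $\Pi$-term, namely the $\Pi$-quadratic form evaluated at $[\T(\e^{j\theta})u,\,u]^{\top}$. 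Since the total is strictly negative while the first summand is nonnegative, the $\Pi$-term is strictly negative for every $u\neq0$, which is precisely the frequency-domain inequality (i).

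The converse (i)$\Rightarrow$(ii) is the crux, and I expect it to be the main obstacle. Here I would invoke a lossless S-procedure / convex-duality argument: viewing (i) as a semi-infinite family of strict inequalities indexed by $\theta\in\Theta$, the goal is to produce Hermitian multipliers $\mathcal{P}$ and $\mathcal{Q}\succeq 0$ certifying all of them simultaneously. The argument hinges on the convexity and closedness of the matrix set generated, through the state-space map, by the rank-one ``moments'' $\begin{bmatrix}\e^{j\theta}x\\ x\end{bmatrix}\begin{bmatrix}\e^{j\theta}x\\ x\end{bmatrix}^{*}$ for $\theta\in\Theta$; once this is established, a separating hyperplane between that set and the cone of matrices rendering the $\Pi$-form negative can be represented precisely by a pair $(\mathcal{P},\mathcal{Q})$ with $\mathcal{Q}\succeq 0$. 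Unlike the classical KYP lemma for the whole unit circle, the \emph{arc} constraint contributes the extra quadratic form carried by $\mathcal{Q}$, and the delicate point is showing that no duality gap arises for a genuine frequency sub-interval. I would therefore lean on the rank-one separability (losslessness) result of Iwasaki--Hara~\cite{iwasaki2005generalized}, whose essential content is exactly this exactness of the S-procedure for such frequency arcs, to recover the required $\mathcal{P}$ and $\mathcal{Q}\succeq 0$ and close the equivalence.
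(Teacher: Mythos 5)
The paper does not prove this lemma at all: it is imported verbatim from Iwasaki and Hara \cite{iwasaki2005generalized} and used as a black box (only its middle-frequency case enters the proofs of Theorems~\ref{thm:FD filter design} and~\ref{Thm: exact FE}), so there is no internal argument to compare your proposal against. Judged on its own merits, your sketch is correct where it is explicit. The identity
\begin{align*}
    \begin{bmatrix} \e^{j\theta} I \\ I \end{bmatrix}^{*} \Lambda \begin{bmatrix} \e^{j\theta} I \\ I \end{bmatrix} = 2\big(\cos(\theta-\theta_c) - \cos(\theta_d)\big)\,\mathcal{Q}
\end{align*}
checks out, as do its low- and high-frequency analogues, and it is indeed the mechanism by which $\Lambda$ encodes membership of $\theta$ in the arc; note it is also exactly the expansion the paper carries out (in reverse) inside the proof of Theorem~\ref{thm:FD filter design} when it rewrites~\eqref{eq:FD const3} into the form required by statement (ii). Your direction (ii)$\Rightarrow$(i) is complete: with $x=(\e^{j\theta}I-\mathcal{A})^{-1}\mathcal{B}u$ the two block maps evaluate as you state, the $\Lambda$-term is nonnegative on the arc, and strictness of the LMI forces the $\Pi$-form at $[\T(\e^{j\theta})u,\,u]$ to be negative for every $u\neq 0$. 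One small correction: your parenthetical about handling ``boundary frequencies by continuity'' is misplaced. The relevant issue is frequencies $\theta\in\Theta$ at which $\e^{j\theta}$ is an eigenvalue of $\mathcal{A}$, where $\T(\e^{j\theta})$ is simply undefined and (i) asserts nothing; well-posedness on $\Theta$ is an implicit standing assumption of the lemma, and no continuity argument is needed (nor would one work, since $\T$ blows up near such points).

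The genuine gap, which you flag yourself, is the converse (i)$\Rightarrow$(ii): you delegate it wholesale to the rank-one separability / lossless S-procedure theorem of \cite{iwasaki2005generalized}. Be aware that this is not a routine step being outsourced --- the exactness of the S-procedure for quadratic constraints indexed by a frequency arc \emph{is} the mathematical content of the GKYP lemma, and the easy direction plus the $\Lambda$-identity would hold equally well for relaxations that are not tight. Since the paper itself cites the lemma rather than proving it, leaning on the same source is acceptable in this context; your proposal is a faithful account of how the proof is organized, but it is not a self-contained proof.
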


\begin{Lem}[Finsler's lemma \cite{boyd1994linear}] \label{lem:Finsler}
For matrices~$\mathcal{V}$ and~$\mathcal{Y}$ with appropriate dimensions, the following statements are equivalent:
\begin{enumerate}[label=(\roman*), itemsep = 0mm, topsep = 0mm, leftmargin = 6mm]
    \item $\mathcal{Y}^{\perp}\mathcal{V} \left(\mathcal{Y}^{\perp}\right)^{\top} \prec 0$, where $\mathcal{Y}^{\perp}$ denote the matrix satisfying $\mathcal{Y}^{\perp}\mathcal{Y} = 0$;
    \item There exists a matrix $\mathcal{U}$ such that $\mathcal{V} + \mathcal{Y}\mathcal{U} + \mathcal{U}^{\top}\mathcal{Y}^{\top} \prec 0$.
\end{enumerate}
\end{Lem}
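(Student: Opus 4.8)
The plan is to prove the two implications separately, reading $\mathcal{V}$ as symmetric (the standard convention, under which both matrix expressions are genuinely symmetric so that $\prec 0$ is meaningful; a non-symmetric $\mathcal{V}$ may be replaced by $\tfrac{1}{2}(\mathcal{V}+\mathcal{V}^{\top})$ since $x^{\top}\mathcal{V}x$ is unchanged) and reading $\mathcal{Y}^{\perp}$ as a full-row-rank left annihilator. Under this convention the rows of $\mathcal{Y}^{\perp}$ form a basis of $\ker(\mathcal{Y}^{\top})$; equivalently, the columns of $(\mathcal{Y}^{\perp})^{\top}$ span that kernel. This interpretation is what makes statement (i) a faithful description of $\mathcal{V}$ restricted to $\ker(\mathcal{Y}^{\top})$, and I would record it explicitly at the outset.

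For the implication (ii) $\Rightarrow$ (i) I would simply apply a congruence transform. Given $\mathcal{U}$ with $\mathcal{V}+\mathcal{Y}\mathcal{U}+\mathcal{U}^{\top}\mathcal{Y}^{\top}\prec 0$, I multiply on the left by $\mathcal{Y}^{\perp}$ and on the right by $(\mathcal{Y}^{\perp})^{\top}$. Since $\mathcal{Y}^{\perp}\mathcal{Y}=0$, and taking transposes also $\mathcal{Y}^{\top}(\mathcal{Y}^{\perp})^{\top}=0$, both cross terms $\mathcal{Y}^{\perp}(\mathcal{Y}\mathcal{U})(\mathcal{Y}^{\perp})^{\top}$ and $\mathcal{Y}^{\perp}(\mathcal{U}^{\top}\mathcal{Y}^{\top})(\mathcal{Y}^{\perp})^{\top}$ vanish, leaving $\mathcal{Y}^{\perp}\mathcal{V}(\mathcal{Y}^{\perp})^{\top}\prec 0$. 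Strictness is preserved because $\mathcal{Y}^{\perp}$ has full row rank, so the congruence maps negative-definite matrices to negative-definite matrices.

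For the reverse implication (i) $\Rightarrow$ (ii), the key idea is a structured, one-parameter ansatz: rather than searching over all $\mathcal{U}$, I would look for a witness of the special form $\mathcal{U}=-\tfrac{s}{2}\mathcal{Y}^{\top}$ with a scalar $s>0$. This collapses the target inequality to $\mathcal{V}-s\,\mathcal{Y}\mathcal{Y}^{\top}\prec 0$, i.e. to $x^{\top}\mathcal{V}x - s\|\mathcal{Y}^{\top}x\|_{2}^{2}<0$ for all nonzero $x$. On the kernel $\ker(\mathcal{Y}^{\top})$ this quadratic form equals $x^{\top}\mathcal{V}x$, which is already negative for every $s$: writing $x=(\mathcal{Y}^{\perp})^{\top}z$ and invoking (i) gives $x^{\top}\mathcal{V}x = z^{\top}\mathcal{Y}^{\perp}\mathcal{V}(\mathcal{Y}^{\perp})^{\top}z<0$. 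The penalty term $-s\|\mathcal{Y}^{\top}x\|_{2}^{2}$ is designed to dominate in the complementary directions, and the whole content of this direction is to show that a \emph{finite} $s$ makes the form globally negative.

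The step I expect to be the main obstacle is precisely this passage from $s\to\infty$ to a concrete finite $s$, which I would settle by a compactness argument on the unit sphere $S=\{x:\|x\|_{2}=1\}$. Suppose no finite $s$ worked; then for each integer $s$ there is $x_{s}\in S$ with $x_{s}^{\top}\mathcal{V}x_{s}\geq s\|\mathcal{Y}^{\top}x_{s}\|_{2}^{2}\geq 0$. Boundedness of the continuous map $x\mapsto x^{\top}\mathcal{V}x$ on the compact set $S$ forces $\|\mathcal{Y}^{\top}x_{s}\|_{2}^{2}\to 0$, and passing to a convergent subsequence $x_{s}\to x^{\star}\in S$ yields $\mathcal{Y}^{\top}x^{\star}=0$ together with $(x^{\star})^{\top}\mathcal{V}x^{\star}\geq 0$, i.e. a nonzero kernel vector on which $\mathcal{V}$ is nonnegative, contradicting (i). Hence some finite $s$ gives $\mathcal{V}-s\,\mathcal{Y}\mathcal{Y}^{\top}\prec 0$, and $\mathcal{U}=-\tfrac{s}{2}\mathcal{Y}^{\top}$ is the required witness for (ii), completing the equivalence.
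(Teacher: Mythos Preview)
Your proof is correct. The paper, however, does not supply its own proof of this lemma: it is stated as a cited result from \cite{boyd1994linear} and used as a tool in the proof of Theorem~\ref{thm:FD filter design}. Your argument is a standard and complete proof of Finsler's lemma---the congruence for (ii)$\Rightarrow$(i) and the one-parameter ansatz $\mathcal{U}=-\tfrac{s}{2}\mathcal{Y}^{\top}$ combined with a compactness argument for (i)$\Rightarrow$(ii) are exactly the classical route---so there is nothing to compare against here beyond noting that you have filled in what the paper treats as background.
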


\begin{proof}[{\textbf{Proof of Theorem~\ref{thm:FD filter design}}}]
First, according to the multiplication rule of polynomial matrices~\cite[Lemma 4.2]{esfahani2015tractable}, the constraint~\eqref{eq:FD const1} implies~$\mathcal{N}(\mfq)H(\mfq) = 0$, which means that~$X$ is completely decoupled from~$r$. Thus,~\eqref{eq:mapping 1} is satisfied. 
Second, from the expression of $r$ in~\eqref{eq:residual format}, the transfer function from~$\omega$ to~$r$ is~$-a^{-1}(\mfq)\mathcal{N}(\mfq)W$ when~\eqref{eq:FD const1} is satisfied, and its state-space realization is denoted as~$(\mathcal{A}_r,\mathcal{B}_{\omega r},\mathcal{C}_r)$. 
According to the classical result on~$\Hto$ norm~\cite[Lemma 1]{de2002extended}, the equivalence between~\eqref{eq:FD const2} and~\eqref{eq:mapping 2} can be obtained directly.

In the last part of the proof, the equivalence between~\eqref{eq:FD const3} and the mapping relation~\eqref{eq:mapping 3} for a single frequency range~$\Theta_m$ is established.
According to Lemma~\ref{lem:Finsler}, the first matrix inequality in~\eqref{eq:FD const3} is equivalent to 
\begin{align*}
    \begin{bmatrix}
    \begin{bmatrix}
    \mathcal{A}_r^{\top} \\ \mathcal{B}_{fr}^{\top}
    \end{bmatrix} &I
    \end{bmatrix}
    \left[\begin{array}{c|cc}
    -P_{2m}  &\delta_m Q_{2m}  &0 \\ \hline
    *        &\Xi_m  &0\\
    *        &*    &\eta_2 I
    \end{array} \right]
    \begin{bmatrix}
    \begin{bmatrix}
    \mathcal{A}_r &\mathcal{B}_{fr}
    \end{bmatrix} \\I
    \end{bmatrix} \preceq -\vartheta I,
\end{align*}
where~$\delta_m=\e^{j\theta_{c_m}}$ and~$\Xi_m  = P_{2m}-2\cos(\theta_{d_m})Q_{2m}-\mathcal{C}_r^{\top}\mathcal{C}_r$ . The above inequality can be expanded into
\begin{align}
    &\begin{bmatrix}
    \Xi_m  &0\\ 0  &\eta_2 I
    \end{bmatrix} - 
    \begin{bmatrix}
    \mathcal{A}_r^{\top} \\ \mathcal{B}_{fr}^{\top}
    \end{bmatrix}P_{2m}
    \begin{bmatrix}
    \mathcal{A}_r &\mathcal{B}_{fr}
    \end{bmatrix} +
    \begin{bmatrix}
    \mathcal{A}_r^{\top} \\ \mathcal{B}_{fr}^{\top}
    \end{bmatrix}
    \begin{bmatrix}  \e^{j\theta_{c_m}} Q_{2m} &0  \end{bmatrix} 
    +
    \begin{bmatrix}  \e^{-j\theta_{c_m}}Q_{2m} \\0    \end{bmatrix}
    \begin{bmatrix}
    \mathcal{A}_r &\mathcal{B}_{fr}
    \end{bmatrix} \notag\\
    =&\begin{bmatrix}
        \Xi_m-\mathcal{A}_r^{\top}P_{2m}\mathcal{A}_r+\e^{j\theta_{c_m}}\mathcal{A}_r^{\top}Q_{2m}+\e^{-j\theta_{c_m}}Q_{2m}\mathcal{A}_r &-\mathcal{A}_r^{\top}P_{2m}\mathcal{B}_{fr} + \e^{-j\theta_{c_m}}Q_{2m}\mathcal{B}_{fr} \\
        * &-\mathcal{B}_{fr}^{\top}P_{2m}\mathcal{B}_{fr}+\eta_2 I
    \end{bmatrix} \notag\\
    =&\begin{bmatrix} \mathcal{A}_r &\mathcal{B}_{fr}\\ I &0   \end{bmatrix}^{\top}
    \begin{bmatrix} -P_{2m} &\e^{j\theta_{c_m}}Q_{2m} \\ * &P_{2m}-2\cos(\theta_{{d_m}})Q_{2m}  \end{bmatrix}
    \begin{bmatrix} \mathcal{A}_r &\mathcal{B}_{fr}\\ I &0   \end{bmatrix} 
    + 
    \begin{bmatrix} \mathcal{C}_r &0 \\0 &I \end{bmatrix}^{\top}
    \begin{bmatrix}    -I &0 \\ 0 &\eta_2 I    \end{bmatrix}
    \begin{bmatrix} \mathcal{C}_r &0 \\0 &I \end{bmatrix} \notag \\
    \preceq &-\vartheta I. \label{eq:Proof-ineq2}
\end{align}
Recall that the transfer function from $f$ to $r$, denoted by ~$\T_{fr}(\mfq)$, has a state-space realization given by~$(\mathcal{A}_r,\mathcal{B}_{f r},\mathcal{C}_r)$.  
According to the middle-frequency case in Lemma~\ref{lem:GKYP}, the last equation of~\eqref{eq:Proof-ineq2} is equivalent to
\begin{align*}
        \begin{bmatrix} \T_{fr}(\e^{j\theta}) \\I \end{bmatrix}^{*} 
        \begin{bmatrix} -I &0 \\0 &\eta_2 I\end{bmatrix}
        \begin{bmatrix}\T_{fr}(\e^{j\theta}) \\I \end{bmatrix}
        \preceq -\vartheta I.
\end{align*}
Thus, it holds that $\|\T_{fr}(\e^{j\theta})\|^2_{\Hmin(\Theta_m)} \geq \eta_2$ for $\theta \in \Theta_m$. 
This completes the proof.
\end{proof}

The following lemma is introduced to prove Theorem~\ref{Thm:performance certificates}.

\begin{Lem}[Linear transformation of sub-Gaussian signals~{\cite[Lemma 4.3]{dong2023multimode}}] \label{lem:Linear trans}
Let $\T_{\omega r}$ be the transfer function from~$\omega$ to $r$. If $\omega$ follows the i.i.d. sub-Gaussian distribution with zero mean and parameter $\lambda_{\omega}$, the signal $r$ is also sub-Gaussian with zero mean and the respective parameter~$\lambda_r = \|\T_{\omega r}\|_{\Hto} \lambda_{\omega}$.
\end{Lem}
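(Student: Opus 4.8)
The plan is to pass to the time domain and reduce the vector sub-Gaussian claim to the scalar moment generating function (MGF) bound recorded in Lemma~\ref{lem:sub-Gaussian con}. Using the state-space realization $(\mathcal{A}_r,\mathcal{B}_{\omega r},\mathcal{C}_r)$ of $\T_{\omega r}$, its Markov parameters are $g(0)=0$ and $g(i)=\mathcal{C}_r\mathcal{A}_r^{\,i-1}\mathcal{B}_{\omega r}$ for $i\ge 1$, the $i=0$ term vanishing because the filter is strictly proper. The response is then the causal convolution $r(k)=\sum_{i=1}^{k} g(i)\,\omega(k-i)$, a finite linear combination of the noise samples $\omega(0),\dots,\omega(k-1)$; zero mean of $r(k)$ follows immediately from $\E[\omega]=0$ and linearity.

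To identify the parameter I would fix an arbitrary unit vector $\mu\in\R^{n_r}$ and $\phi\in\R$ and study the MGF of the scalar $\mu^{\top}r(k)$. Writing $v_i:=g(i)^{\top}\mu\in\R^{n_{\omega}}$ gives $\mu^{\top}r(k)=\sum_{i=1}^{k} v_i^{\top}\omega(k-i)$. Because the time indices $k-i$ are pairwise distinct and $\omega$ is i.i.d., the samples $\{\omega(k-i)\}_{i=1}^{k}$ are mutually independent, so the MGF factorizes as $\E[\e^{\phi\mu^{\top}r(k)}]=\prod_{i=1}^{k}\E[\e^{\phi v_i^{\top}\omega(k-i)}]$. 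For each factor I would normalize $v_i=\|v_i\|_2\,\hat v_i$ with $\hat v_i$ a unit vector and apply the scalar sub-Gaussian bound of Lemma~\ref{lem:sub-Gaussian con} with scalar argument $\phi\|v_i\|_2$, obtaining $\E[\e^{\phi v_i^{\top}\omega(k-i)}]\le\exp(\lambda_{\omega}^2\phi^2\|v_i\|_2^2/2)$ (the factor being $1$ when $v_i=0$).

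Taking the product collapses everything to $\E[\e^{\phi\mu^{\top}r(k)}]\le\exp\!\big(\tfrac{\lambda_{\omega}^2\phi^2}{2}\sum_{i=1}^{k}\|v_i\|_2^2\big)$, so it remains to bound $\sum_{i=1}^{k}\|v_i\|_2^2=\mu^{\top}\big(\sum_{i=1}^{k}g(i)g(i)^{\top}\big)\mu$ uniformly in $k$ and in the direction $\mu$. Since $\sum_i g(i)g(i)^{\top}$ is positive semidefinite, its quadratic form against a unit vector is at most its trace, and $\Tr\big(\sum_i g(i)g(i)^{\top}\big)=\sum_i\|g(i)\|_F^2=\|\T_{\omega r}\|_{\Hto}^2$ by Parseval's identity, which is the time-domain form of the frequency-domain expression in Definition~\ref{def:Hto}. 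This yields $\E[\e^{\phi\mu^{\top}r(k)}]\le\exp(\lambda_r^2\phi^2/2)$ with $\lambda_r=\|\T_{\omega r}\|_{\Hto}\lambda_{\omega}$, uniformly over $k$, $\phi$, and unit $\mu$, which is precisely the asserted sub-Gaussian property.

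I expect the delicate step to be the last one: transferring the coordinate-free $\Hto$ norm to the directional energy $\sum_i\|g(i)^{\top}\mu\|_2^2$ without introducing spurious dimensional constants, and ensuring the bound is uniform in the truncation length $k$ so that it applies to every finite-horizon residual entering the evaluation function $J(r)$. The trace-dominates-the-quadratic-form argument for the positive semidefinite Gramian is what delivers the exact constant, and the stability of $\mathcal{A}_r$ (enforced by the $\Hto$ constraint~\eqref{eq:mapping 2}) guarantees $\sum_i\|g(i)\|_F^2<\infty$, so that $\lambda_r$ is finite and the bound is nonvacuous.
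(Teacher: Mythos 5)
Your proof is correct. Note that the paper itself contains no proof of this lemma---it is imported by citation from~\cite[Lemma 4.3]{dong2023multimode}, which treats the one-dimensional residual case---so there is nothing in-paper to compare against; your argument is the natural self-contained one and, in addition, handles the multivariate extension that this paper actually needs. The chain of steps is sound: the causal convolution representation $r(k)=\sum_{i=1}^{k}g(i)\,\omega(k-i)$ with $g(0)=0$ (valid since $d_a\ge d_N$ makes the filter strictly proper, and Assumption~\ref{As: initial con} disposes of the initial condition), factorization of the moment generating function by independence of the i.i.d.\ samples, the scalar sub-Gaussian bound of Lemma~\ref{lem:sub-Gaussian con} applied along each direction $v_i=g(i)^{\top}\mu$, and Parseval's identity $\sum_{i\ge 1}\|g(i)\|_F^2=\|\T_{\omega r}\|^2_{\Hto}$ to close the loop, uniformly in the horizon $k$. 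The genuinely new ingredient beyond the scalar case---controlling the directional energy $\mu^{\top}\bigl(\sum_i g(i)g(i)^{\top}\bigr)\mu$---is handled correctly by the trace bound on the positive semidefinite Gramian; this yields a parameter that is conservative relative to the sharp directional constant (the maximum eigenvalue of the Gramian) but is exactly the $\Hto$-norm parameter the lemma asserts, so the proof establishes precisely the stated claim.
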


\begin{proof}[\textbf{Proof of Theorem~\ref{Thm:performance certificates}}]
Let us first show that the given FAR~$\varepsilon_1$ is guaranteed if $J_{th}$ is determined by~\eqref{eq:Jth} in the absence of faults. 
From the expression of the residual~\eqref{eq:residual format}, $r = \T_{\omega r}(\mfq) [\omega]$ since $X$ is decoupled and~$f=0$. 
According to Lemma~\ref{lem:Linear trans}, $r$ is sub-Gaussian and its parameter~$\lambda_r$ satisfies
\begin{align}\label{eq: lambda r}
    \lambda_r = \|\T_{\omega r}(\mfq)\|_{\Hto} \lambda_{\omega} \leq \sqrt{\eta^{\star}_1} \lambda_{\omega},
\end{align}
where~\eqref{eq: lambda r} holds by invoking Theorem~\ref{thm:FD filter design}. Then, we have
\begin{align*}
    \mPr[J(r) > J_{th}|f=0] 
    &= \mPr\left[ \frac{1}{\mathcal{T}}  \sum^{k_1+\mathcal{T}}_{k=k_1} \|r(k)\|_2 > J_{th} \bigg|f=0  \right] \notag\\
    &\stackrel{(a)}{\leq} \mPr\left[ \sum^{k_1+\mathcal{T}}_{k=k_1} \sqrt{n_r} \|r(k)\|_{\infty} > \mathcal{T}J_{th} \bigg|f=0  \right] \notag\\
    &\stackrel{(b)}{\leq} \sum^{k_1+\mathcal{T}}_{k=k_1}\mPr\left[\|r(k)\|_{\infty} > \frac{J_{th}}{\sqrt{n_r}} \bigg|f=0 \right] \notag\\
    &\stackrel{(c)}{\leq} 2\mathcal{T}n_r \e^{-\frac{(J_{th}/\sqrt{n_r})^2}{2  \lambda^2_{r} }}
    \stackrel{(d)}{\leq} 2\mathcal{T}n_r \e^{-\frac{J^2_{th}}{2 n_r \eta^{\star}_1 \lambda^2_{\omega} }}.
\end{align*}
The inequality (a) holds as a result of the equivalence between vector norms, i.e.,~$\|r(k)\|_2 \leq \sqrt{n_r}\|r(k)\|_{\infty}$. The inequality (b) holds due to the fact that~$\mPr[v_1 + v_2 > v_3] \leq \mPr[v_1  > v_3/2] + \mPr[v_2  > v_3/2]$ where~$v_1,v_2,v_3\in\R_+$.
The inequality~(c) is derived from the concentration inequality in Lemma~\ref{lem:sub-Gaussian con}. And the inequality (d) is obtained according to \eqref{eq: lambda r}. 
Substituting~\eqref{eq:Jth} into the last inequality yields~$\mPr[J(r) > J_{th}|f=0] \leq \varepsilon_1$.
This completes the first part of the proof. 

The second step is to demonstrate that~\eqref{eq:FDR} holds for~$f\in\Omega_f$. Consider the residual~$r = \T_{fr}[f] + \T_{\omega r} [\omega]$ in the presence of faults, whose expectation is~$\E[r] = \T_{fr}[f]$. 
Note that~$r - \E[r] = \T_{\omega r} [\omega]$ is sub-Gaussian with the parameter~$\sqrt{\eta^{\star}_1} \lambda_{\omega}$ as indicated above. Thus, for a positive scalar~$\epsilon \in \R_+$, it holds that 
\begin{align*}
    \mPr\left\{ \sum^{k_1+\mathcal{T}}_{k=k_1} \|r(k)-\E[r(k)]\|_{\infty} 
       > \mathcal{T} \epsilon \bigg| f \in \Omega_f \right\} 
    \leq 2\mathcal{T}n_r\e^{-\frac{\epsilon^2}{2\eta^{\star}_1\lambda^2_{\omega}}},
\end{align*}
which is equivalent to 
\begin{align*}
    \mPr\left\{  \sum^{k_1+\mathcal{T}}_{k=k_1} \|r(k)-\E[r(k)]\|_{\infty} 
       \leq \mathcal{T} \epsilon \bigg| f \in \Omega_f \right\} 
    \geq 1-2\mathcal{T}n_r\e^{-\frac{\epsilon^2}{2\eta^{\star}_1\lambda^2_{\omega}}}.
\end{align*}
Since~$\sum^{k_1+\mathcal{T}}_{k=k_1} (\|\E[r(k)]\|_{\infty} 
     -  \| r(k)\|_{\infty} )
     \leq \sum^{k_1+\mathcal{T}}_{k=k_1} \|r(k) - \E[r(k)]\|_{\infty}$,
we have
\begin{align*}
\mPr\left\{\sum^{k_1+\mathcal{T}}_{k=k_1} (\|\E[r(k)]\|_{\infty} 
     -  \| r(k)\|_{\infty} )  \leq \mathcal{T}\epsilon \bigg| f \in \Omega_f \right\} 
    \geq 1-2\mathcal{T}n_r\e^{-\frac{\epsilon^2}{2\eta^{\star}_1\lambda^2_{\omega}}}.
\end{align*}

\noindent Let $\mathcal{T}\epsilon = \sum^{k_1+\mathcal{T}}_{k=k_1}\|\E[r(k)]\|_{\infty} - \mathcal{T}J_{th} > 0$. The above inequality becomes
\begin{align}\label{eq:FDR ineq1}
   \mPr\left\{ \sum^{k_1+\mathcal{T}}_{k=k_1} \left\| r(k)\right\|_{\infty} \geq \mathcal{T}J_{th} \bigg| f \in \Omega_f \right\} 
    \geq 1-2\mathcal{T}n_r\e^{-\frac{\epsilon^2}{2\eta^{\star}_1\lambda^2_{\omega}}}.
\end{align}
Additionally, the following inequalities hold
\begin{align*}      
  \sum^{k_1+\mathcal{T}}_{k=k_1}\left\|\E[r(k)]\right\|_{\infty} 
    \geq \frac{1}{\sqrt{n_r}}\sum^{k_1+\mathcal{T}}_{k=k_1} \left\|\E[r(k)]\right\|_{2} 
    = \frac{1}{\sqrt{n_r}} \sum^{k_1+\mathcal{T}}_{k=k_1} \left\|\T_{fr} [f(k)]\right\|_{2}
    \geq \frac{\sqrt{\eta^{\star}_2}}{\sqrt{n_r}} \mathcal{T}\underline{f},
\end{align*}
where the first inequality is derived from the equivalence between vector norms and the second inequality follows from the result in Theorem~\ref{thm:FD filter design}, i.e.,~$\|\T_{fr}\|^2_{\Hmin(\Theta_m)} \geq \eta^{\star}_2$, and~$\|f(k)\|_2 \geq \underline{f}$ for~$f \in \Omega_f$.
To make sure that $\epsilon$ is positive, let 
\begin{align*}
    \epsilon = \frac{1}{\mathcal{T}}\sum^{k_1+\mathcal{T}}_{k=k_1}\|\E[r(k)]\|_{\infty} - J_{th} > \underline{f} \sqrt{\eta^{\star}_2/n_r} - J_{th}  > 0.
\end{align*}
Thus, the lower bound of $f$ should satisfy $\underline{f} > J_{th} \sqrt{n_r / \eta^{\star}_2}$.
Finally, from inequalities~\eqref{eq:FDR ineq1}, we obtain
\begin{align*}
    \mPr\left\{J(r) > J_{th} \big| f\in\Omega_f \right\} 
    &= \mPr \left\{\frac{1}{\mathcal{T}}  \sum^{k_1+\mathcal{T}}_{k=k_1} \|r(k)\|_2 > J_{th} \bigg| f\in\Omega_f \right\} \\
    &\geq \mPr\left\{\frac{1}{\mathcal{T}}\sum^{k_1+\mathcal{T}}_{k=k_1} \|r(k)\|_{\infty} > J_{th} \big| f\in\Omega_f \right\} \\
    &\geq 1-2\mathcal{T}n_r\e^{-\frac{\epsilon^2}{2\eta^{\star}_1\lambda^2_{\omega}}}
    \geq 1-2\mathcal{T}n_r\e^{-\frac{\left(\underline{f}\sqrt{\eta^{\star}_2/n_r} - J_{th}\right)^2}{2\eta^{\star}_1\lambda^2_{\omega}}} .
\end{align*}
This completes the proof.
\end{proof}

\subsection{Proofs of results in fault estimation}\label{subsec:FE proof}
To prove Theorem~\ref{Thm:FE filter design}, the covariance matrix of the output of an LTI system driven by white noise is computed through the following lemma.
\begin{Lem}[Covariance of the residual]\label{lem:cov fest}
Consider the expression of the residual in~\eqref{eq:residual format} with the unknown signal~$X$ decoupled. The noise~$\omega$ is assumed to be i.i.d. white noise and the fault~$f$ is considered to be deterministic. The covariance matrix of~$r$ is given by
\begin{align*}
    \E\left[(r(k)-\E[r(k)])(r(k)-\E[r(k)])^{*}\right] 
    = \frac{1}{2\pi} \int^{\pi}_{-\pi} \T_{\omega r}(\e^{j\theta}) \E\left[\omega(k) \omega^{*}(k) \right] \T^{*}_{\omega r}(\e^{j\theta}) \diff \theta.
\end{align*}
\end{Lem}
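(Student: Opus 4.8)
The plan is to use the decoupling of $X$ and the determinism of $f$ to isolate the stochastic part of the residual, and then translate the resulting time-domain series into the frequency-domain integral via Parseval's identity.

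First, I would observe that once $X$ is decoupled, the residual reads $r(k) = \T_{\omega r}(\mfq)[\omega](k) + \T_{fr}(\mfq)[f](k)$. Since $\omega$ has zero mean and $f$ is deterministic, $\E[r(k)] = \T_{fr}(\mfq)[f](k)$, so the centred residual is exactly the stochastic response $r(k) - \E[r(k)] = \T_{\omega r}(\mfq)[\omega](k)$. This reduces the claim to computing the covariance of the output of the stable filter $\T_{\omega r}$ driven by white noise.

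Second, writing $\T_{\omega r}$ through its impulse response $\{h(\ell)\}_{\ell\ge 0}$, so that $\T_{\omega r}(\mfq)[\omega](k) = \sum_{\ell\ge 0} h(\ell)\,\omega(k-\ell)$ and $\T_{\omega r}(\e^{j\theta}) = \sum_{\ell\ge 0} h(\ell)\e^{-j\theta\ell}$, I would substitute into the covariance and invoke the i.i.d.\ zero-mean white-noise property, namely $\E[\omega(k-\ell)\omega^{*}(k-m)] = \dir{\ell m}\,\Sigma$ with the time-invariant matrix $\Sigma := \E[\omega(k)\omega^{*}(k)]$. The cross terms vanish and the covariance collapses to the single series
\begin{align*}
    \E\big[(r(k)-\E[r(k)])(r(k)-\E[r(k)])^{*}\big] = \sum_{\ell \ge 0} h(\ell)\,\Sigma\,h^{*}(\ell).
\end{align*}
Third, I would recognise this series as the Parseval counterpart of the stated integral: expanding $\T_{\omega r}(\e^{j\theta})\,\Sigma\,\T_{\omega r}^{*}(\e^{j\theta})$ into the double series $\sum_{\ell,m} h(\ell)\,\Sigma\, h^{*}(m)\,\e^{j\theta(m-\ell)}$ and using the orthogonality relation $\tfrac{1}{2\pi}\int_{-\pi}^{\pi}\e^{j\theta(m-\ell)}\diff\theta = \dir{\ell m}$ recovers exactly $\sum_{\ell\ge0} h(\ell)\,\Sigma\, h^{*}(\ell)$, which matches the integral in the statement.

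The main obstacle I anticipate is purely technical: legitimising the interchanges of the infinite summations with the expectation and with the integral. This requires absolute summability of $\{h(\ell)\}_{\ell\ge 0}$, i.e.\ the (asymptotic) stability of $\T_{\omega r}$, which is guaranteed here by the finiteness of its $\Hto$ norm enforced through~\eqref{eq:mapping 2}; with this in hand, Fubini's theorem and dominated convergence justify all the swaps.
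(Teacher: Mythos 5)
Your proposal is correct and follows essentially the same route as the paper: expand the centred residual through the impulse response of $\T_{\omega r}$, use the i.i.d.\ white-noise property to collapse the double sum to $\sum_{\ell} h(\ell)\,\Sigma\,h^{*}(\ell)$, and identify this with the frequency-domain integral via orthogonality of the exponentials on the unit circle (the paper phrases this last step as an inverse $Z$-transform of the spectrum evaluated at lag zero, which is the same computation). Your version is slightly more streamlined, since it works at lag zero from the start instead of first deriving the spectrum $\Gamma_r(\mfq)$ for general lag, and it is more careful about justifying the interchange of sums, expectation, and integral via stability of $\T_{\omega r}$.
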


\begin{proof}
Let~$h_{\omega r}(k)$ be the impulse response of~$\T_{\omega r}(\mfq)$. The covariance function of~$r(k)$ denoted by~$\mathcal{V}_{r}(\tau)$ for~$\tau \in \N$ can be written as
\begin{align*}
    \mathcal{V}_{r}(\tau) &= \E\left[(r(k+\tau)-\E[r(k+\tau)])(r(k)-\E[r(k)])^{*}\right] \\
    &=\E\left[\left(\sum^{\infty}_{m=0} h_{\omega r}(m) \omega(k+\tau-m)\right) \left(\sum^{\infty}_{l=0}h_{\omega r}(l)\omega(k-l)\right)^{*}\right] \\
    &=\sum^{\infty}_{m=0}\sum^{\infty}_{l=0} h_{\omega r}(m) \E\left[\omega(k+\tau-m) \omega^{*}(k-l) \right] h^{*}_{\omega r}(l)\\
    &=\sum^{\infty}_{m=0}\sum^{\infty}_{l=0} h_{\omega r}(m) \mathcal{V}_{\omega}(\tau-m+l) h^{*}_{\omega r}(l),
\end{align*}
where~$\mathcal{V}_{\omega}(\tau-m+l)$ is the covariance function of $\omega$. 
By applying the $Z$-transform on $\mathcal{V}_{r}(\tau)$, the spectrum of~$r(k)$ denoted by~$\Gamma_{r}(\mfq)$ is derived as
\begin{align*}
    \Gamma_{r}(\mfq) &= \sum^{\infty}_{k=-\infty} \mathcal{V}_{r}(k) \mfq^{-k} \\
    &= \sum^{\infty}_{k=-\infty} \sum^{\infty}_{m=0}\sum^{\infty}_{l=0} h_{\omega r}(m) \mathcal{V}_{\omega}(k-m+l) h^{*}_{\omega r}(l) \mfq^{-(k-m+l)}\mfq^{-m}\mfq^{l} \\
    &=\sum^{\infty}_{m=0}h_{\omega r}(m)\mfq^{-m} \sum^{\infty}_{k=-\infty}\mathcal{V}_{\omega}(k-m+l)\mfq^{-(k-m+l)} \sum^{\infty}_{l=0} h^{*}_{\omega r}(l) \mfq^{l} \\
    &=\T_{\omega r}(\mfq) \Gamma_{\omega}(\mfq) \T^*_{\omega r}(\mfq^{-*}), 
\end{align*}
where $\Gamma_{\omega}(\mfq)$ is the spectrum of $\omega$. 
When $\tau=0$, since $\omega$ is an uncorrelated sequence, we have
\begin{align*}
    \mathcal{V}_{r}(0) &=\E\left[(r(k)-\E[r(k)])(r(k)-\E[r(k)])^{*}\right]\\
    &=\sum^{\infty}_{m=0}\sum^{\infty}_{l=0} h_{\omega r}(m) \E\left[\omega(k-m) \omega^{*}(k-l) \right] h^{*}_{\omega r}(l)\\
    &=\sum^{\infty}_{m=0} h_{\omega r}(m) \E\left[\omega(k) \omega^{*}(k) \right] h^{*}_{\omega r}(m) \\
    &=\frac{1}{2\pi j} \int^{\pi}_{-\pi} \Gamma_{r}(\mfq)\mfq^{-1} \diff \mfq \\
    &= \frac{1}{2\pi j} \int^{\pi}_{-\pi} \T_{\omega r}(\mfq) \E\left[\omega(k) \omega^{*}(k) \right] \T^*_{\omega r}(\mfq) \mfq^{-1} \diff \mfq,
\end{align*}
where the inverse $Z$-transform and the fact that $\mfq^{-*} = \mfq$ on the unit circle are used in the last two equations. Also, due to the derivative~$\diff \mfq /\diff \theta = j\e^{j\theta}$, it holds that
\begin{align*}
    \E\left[(r(k)-\E[r(k)])(r(k)-\E[r(k)])^{*}\right] 
    = \frac{1}{2\pi} \int^{\pi}_{-\pi} \T_{\omega r}(\e^{j\theta}) \E\left[\omega(k) \omega^{*}(k) \right] \T^*_{\omega r}(\e^{j\theta}) \diff \theta.
\end{align*}
This completes the proof.
\end{proof}

\begin{proof}[\textbf{Proof of Theorem~\ref{Thm:FE filter design}}]
First, it is demonstrated in Theorem~\ref{thm:FD filter design} that~\eqref{eq:fest opt1} is equivalent to condition~\eqref{eq:mapping 1}.
Second, to show that~\eqref{eq:fest opt2} implies the satisfaction of~\eqref{eq:mapping 2}, let us recall that~$r = \T_{f r}(\mfq)[f] + \T_{\omega r}(\mfq)[\omega]$, where~$\T_{\omega r}(\mfq) = \bar{\mathcal{N}}\Psi_W(\mfq)$ and~$f$ is assumed to be deterministic. According to Lemma~\ref{lem:cov fest}, the covariance of~$r$ satisfies
\begin{align}\label{eq:var fest}
    &\E\left[(r(k)-\E[r(k)])(r(k)-\E[r(k)])^{*}\right] \notag\\
    = &\frac{1}{2\pi} \int^{\pi}_{-\pi} \T_{\omega r}(\e^{j\theta}) \E[\omega(k) \omega^{*}(k)] \T^*_{\omega r}(\e^{j\theta}) \diff \theta \notag\\
    \preceq &\frac{\lambda^2_{\omega}}{2\pi} \int^{\pi}_{-\pi} \T_{\omega r}(\e^{j\theta})  \T^*_{\omega r}(\e^{j\theta}) \diff \theta \notag\\
    = &\bar{\mathcal{N}} \frac{\lambda^2_{\omega}}{2\pi}\int^{\pi}_{-\pi} \Psi_{W}(\e^{j\theta})  \Psi_{W}^{*}(\e^{j\theta}) \diff \theta \bar{\mathcal{N}}^{\top} = \lambda^2_{\omega} \bar{\mathcal{N}} \Phi \bar{\mathcal{N}}^{\top},
\end{align}
where the inequality holds due to its demonstration through Taylor series expansion and comparison of terms of the same power for~$\phi$ (defined in Lemma~\ref{lem:sub-Gaussian con}). 
It can be shown that for sub-Gaussian random variables,~$ \E[\omega(k) \omega^{*}(k)] \preceq \lambda^2_{\omega}I$. 
As a result, condition~\eqref{eq:mapping 2} which is introduced to suppress the effect of the noise on~$r$ can be achieved by bounding the trace of~$\bar{\mathcal{N}} \Phi \bar{\mathcal{N}}^{\top}$. This also coincides with the $\mathcal{H}_2$ norm.

The last part of the proof shows that the relaxed condition~\eqref{eq: relaxed FE con} can be realized through~\eqref{eq:fest opt3}. 
Note that the singular values of a complex matrix $M_C = M_R + jM_I$ are equal to those of the augmented matrix $\begin{bmatrix} M_R &-M_I \\M_I &M_R \end{bmatrix}$ derived from~$M_C$.
Therefore, constraining the $2$-norm of the augmented matrix in~\eqref{eq:fest opt3}, which is constructed using the real and imaginary parts of $T_{fr}(\e^{j\theta_i})-I$, i.e., $\bar{\mathcal{N}}\mathcal{R}_i-I$ and $\bar{\mathcal{N}}\mathcal{I}_i$, is equivalent to constraining $\left\| \T_{fr}(\e^{j\theta_i})-I \right\|^2_{2}$. 
This completes the proof.
\end{proof}

\begin{proof}[\textbf{Proof of Corollary~\ref{cor:Analytical solution}}]
The Lagrange function of~\eqref{eq:fest opt} is
\begin{align*}
    \mathcal{L} (\bar{\mathcal{N}},\gamma) 
    =  \beta \textup{Trace}\left[\bar{\mathcal{N}} \Phi \bar{\mathcal{N}}^{\top}\right] + \sum^{(d_N+2)(n_x+n_d)}_{i=1} \gamma^{\top}_i \bar{\mathcal{N}} \bar{H}_i  +\frac{1-\beta}{\kappa} \sum^{\kappa}_{i=1} \left\|\begin{bmatrix}
        \bar{\mathcal{N}}\mathcal{R}_i - I &-\bar{\mathcal{N}}\mathcal{I}_i\\
        \bar{\mathcal{N}}\mathcal{I}_i &\bar{\mathcal{N}}\mathcal{R}_i - I
    \end{bmatrix}\right\|^2_F, 
\end{align*}
where $\gamma = [\gamma_1 ~\dots ~\gamma_{(d_N+2)(n_x+n_d)}]$ with $\gamma_i \in \R^{n_f}$ is the Lagrange multiplier. $\bar{H}_i$ is the $i$-th column of $\bar{H}$. According to the definition of Frobenius norm 
\begin{align*}
     \left\|\begin{bmatrix}
        \bar{\mathcal{N}}\mathcal{R}_i - I &-\bar{\mathcal{N}}\mathcal{I}_i\\
        \bar{\mathcal{N}}\mathcal{I}_i &\bar{\mathcal{N}}\mathcal{R}_i - I
    \end{bmatrix}\right\|^2_F = 2\textup{Trace}\left[(\bar{\mathcal{N}}\mathcal{R}_i - I)(\bar{\mathcal{N}}\mathcal{R}_i - I)^{\top}+\bar{\mathcal{N}} \mathcal{I}_i\mathcal{I}^{\top}_i\bar{\mathcal{N}}^{\top}\right].
\end{align*}
Taking the partial derivative of~$\mathcal{L} (\bar{\mathcal{N}},\gamma)$ yields
\begin{align*}
     \frac{\partial \mathcal{L}(\bar{\mathcal{N}},\gamma)}{\partial \bar{\mathcal{N}}} 
     = 2\beta \bar{\mathcal{N}} \Phi +
      \frac{4(1-\beta)}{\kappa}\sum^{\kappa}_{i=1}\left(\bar{\mathcal{N}} \mathcal{R}_i\mathcal{R}^{\top}_i - \mathcal{R}^{\top}_i + \bar{\mathcal{N}} \mathcal{I}_i\mathcal{I}^{\top}_i\right) 
      + \sum^{(d_N+2)(n_x+n_d)}_{i=1} \gamma_i \bar{H}^{\top}_i.
\end{align*}
Then, setting the partial derivative to zero and considering the equality constraint~\eqref{eq:fest opt1} leads to
\begin{align*}
    &\begin{bmatrix}
        \bar{\mathcal{N}} &\gamma
    \end{bmatrix}
    \begin{bmatrix}
        2\beta  \Phi +\frac{4(1-\beta)}{\kappa}\sum^{\kappa}_{i=1}\left( \mathcal{R}_i\mathcal{R}^{\top}_i  +  \mathcal{I}_i\mathcal{I}^{\top}_i\right) &\bar{H} \\
        \bar{H}^{\top} &0
    \end{bmatrix}  = \begin{bmatrix}
        \frac{4(1-\beta)}{\kappa}\sum^{\kappa}_{i=1} \mathcal{R}^{\top}_i  &0
    \end{bmatrix}.
\end{align*} Solving this equation provides the analytical solution. 
This completes the proof.
\end{proof}

\begin{proof}[\textbf{Proof of Proposition~\ref{prop: opt gap}}]
   Let us first show that the upper bound holds. 
    Since the optimization problem~\eqref{eq:FE opt} is an exact reformulation of Problem 2, applying the AO approach to solve~\eqref{eq:FE opt} leads to the convergence of the objective function value to the optimal value~$\mathcal{J}^{\star}$ of Problem~2. 
    Thus, the derived objective function value, i.e.,~$\beta\eta^{\star}_{1,AO} + (1-\beta) \eta^{\star}_{3,AO}$, is an upper bound on~$\mathcal{J}^{\star}$.

    In the second part of the proof, the satisfaction of the lower bound is proved by contradiction. Suppose that 
    \begin{align*}
        \min_{\mathcal{N}(\mfq)} \max_{\theta_i} \| \T_{fr}(\e^{j\theta_i}) - I \|^2_2 \geq \min_{\mathcal{N}(\mfq)}  \|\T_{fr}(\e^{j\theta}) - I\|^2_{\mathcal{H}_{\infty}(\Theta_m)}, ~\forall \Theta_m \in \bar{\Theta}.
    \end{align*}
   Let~$\mathcal{N}^{\star}(\mfq)$ and $\mathcal{N}^{\star}_{RR} (\mfq)$ denote the optimal solutions to
   \begin{align*}
      \min_{\mathcal{N}(\mfq)}  \|\T_{fr}(\e^{j\theta}) - I\|^2_{\mathcal{H}_{\infty}(\bar{\Theta})} ~\text{and} ~\min_{\mathcal{N}(\mfq)} \max_{\theta_i}  \| \T_{fr}(\e^{j\theta_i}) - I \|^2_2, 
   \end{align*}
   respectively. 
    Recall the definition of the restricted $\mathcal{H}_{\infty}$ norm. For all sampling frequency points $\theta_i$, it holds that 
    \begin{align*}
        \max_{\theta_i} \| \T_{fr}(\e^{j\theta_i}, \mathcal{N}^{\star}_{RR} (\mfq)) - I \|^2_2 
        &\geq \sup_{\theta \in \bar{\Theta}} \|\T_{fr}(\e^{j\theta}, \mathcal{N}^{\star} (\mfq)) - I \|^2_2  \\
        &\geq \| \T_{fr}(\e^{j\theta_i},\mathcal{N}^{\star}(\mfq)) - I \|^2_2,
    \end{align*} 
    which contradicts the fact that $\mathcal{N}^{\star}_{RR} (\mfq)$ is the optimal solution to ~$\min_{\mathcal{N}(\mfq)} \max_{\theta_i} \| \T_{fr}(\e^{j\theta_i}) - I \|^2_2$. 
    Thus, we have $\min_{\mathcal{N}(\mfq)} \max_{\theta_i} \| \T_{fr}(\e^{j\theta_i}) - I \|^2_2 \leq \min_{\mathcal{N}(\mfq)}  \|\T_{fr}(\e^{j\theta}) - I\|^2_{\mathcal{H}_{\infty}(\bar{\Theta})}$. 
    Additionally, the constraints~\eqref{eq:mapping 1} and \eqref{eq:mapping 2} on noise suppression and disturbance decoupling are identical in both Problem~2 and Problem~2r. As a result, the optimal objective value of Problem~2r, obtained by solving~\eqref{eq:fest opt}, serves as a lower bound for~$\mathcal{J}^{\star}$. This completes the proof. 
\end{proof}

%%%%%%%%%%%%%%%%%%%%%%%%%%%%%%%%%%%%%%%%%%%%%%%%%%%%%%%%%%%%%%%%
%%%%%%%%%%%%%%%%%%%%%%%%%%%%%%%%%%%%%%%%%%%%%%%%%%%%%%%%%%%%%%%%
%                       Simulation results
%%%%%%%%%%%%%%%%%%%%%%%%%%%%%%%%%%%%%%%%%%%%%%%%%%%%%%%%%%%%%%%%
\section{Simulation results} \label{sec:simulation}
The effectiveness of the proposed FDE methods is validated on a non-minimum phase hydraulic turbine system and on a multi-area power system.

\subsection{A hydraulic turbine system}
Note that non-minimum phase systems are prevalent in a wide range of practical applications, such as aerospace engineering, power systems, etc. 
The ubiquity of non-minimum phase systems in the real-world underscores the critical importance of developing fault diagnosis methods for them. 
However, the inherent characteristics of non-minimum phase systems, particularly their unstable inverse response behavior, pose significant challenges in fault estimation, as discussed in Remark~\ref{rem: non minimum}.
To address this issue, we develop fault estimation filter design techniques that focus on specific frequency bands of interest, offering significant advantages in estimation performance compared to existing results. 
To verify the performance, a hydraulic turbine system from~\cite{wang2015disturbance} is considered as follows 
\begin{align*}
    y = \frac{-0.183s+1.4}{0.2136s^3+2.445s^2+5.911s+0.45}(u+f_u),
\end{align*}
where $u$ and $y$ are the turbine valve and the turbine speed, respectively. 
The fault on the turbine valve is denoted as $f_u$.
The system has an unstable zero at $7.65$. 
To facilitate diagnosis filter design, the transfer function of the hydraulic turbine system is transferred into the state-space representation and discretized with the sampling period $0.1$s. 
In addition, though modeling errors exist caused by discretization, their effects are negligible when the sampling interval is sufficiently small.

In this part, methods developed in Theorem~\ref{Thm: exact FE} (ER, exact reformulation) and Theorem~\ref{Thm:FE filter design} (RR, relaxed reformulation) are used to estimate the fault signal in the absence of disturbances and noise. 
In the simulation, the proposed estimation methods are compared with the UIO (unknown input observer) method~\cite{gao2015fault}, the LS (least square) method~\cite{wan2016data}, and the IUIE (\mbox{inversion-based} unknown input estimation) method~\cite{wan2017fault}. Both the UIO, LS, and IUIE methods are proven to be asymptotically unbiased estimation methods under certain conditions. 

The frequency range of interest is~$\Theta=[0,0.2]$ and the fault signal is~$f(k) = 0.05\sin(0.1k)+0.06\sin(0.15k)$ sampled from the corresponding continuous-time signal with the sampling time $0.1$s here.
First, a stable denominator is selected as~$a(\mfq) = (\mfq-0.1)^5$ and $6$ frequency points are chosen when using the RR method in Theorem~\ref{Thm:FE filter design} to design the fault estimation filter. 
By solving the optimization problem~\eqref{eq:fest opt}, the numerator~$\mathcal{N}^{\star}_{RR}(\mfq)$ and the optimal value $\bar{\eta}^{\star}_{3,RR} = 0.0534$ are obtained.
Then, the denominator~$a(\mfq)$ is fixed and~$\mathcal{N}^{\star}_{RR}(\mfq)$ is used as the initial condition to design the fault estimation filter when using the ER method in Theorem~\ref{Thm: exact FE} and Algorithm~\ref{alg:algorithm_1}. 
The obtained value of the objective function is $\eta^{\star}_{3,AO} = 0.0.0764$ after~$5$ iteration steps.
According to~\eqref{eq: opt gap}, the suboptimality gap is~$0.0534 \leq \mathcal{J}^{\star}  \leq 0.0764$.

Fig.~\ref{fig:NonminPhase fest} presents the fault signal and its estimates obtained by different methods, while errors of fault estimates are illustrated in Fig.~\ref{fig: NonminPhase ef}. 
As illustrated in Fig.~\ref{fig:NonminPhase fest}, both the IUIE and LS methods diverge, while the UIO methods produce high estimation errors. 
In comparison with the above methods, the proposed ER and RR methods offer better estimation performance.
In Fig.~\ref{fig:NonmindiffdN ef}, it is further demonstrated that increasing the degree of the RR filter can reduce the estimation error.

\begin{figure}[h]
    \centering
    \begin{minipage}{0.5\textwidth}
    \centering
    \includegraphics[scale=0.7]{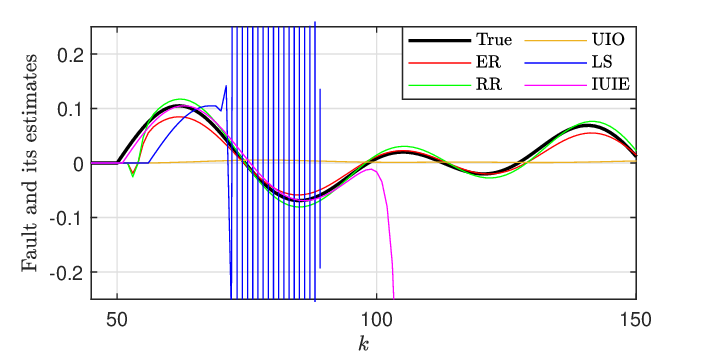} 
    \caption{\small Fault and its estimates generated using different methods.}\label{fig:NonminPhase fest}
    \end{minipage}
    \begin{minipage}{0.48\textwidth}
    \centering
    \includegraphics[scale=0.69]{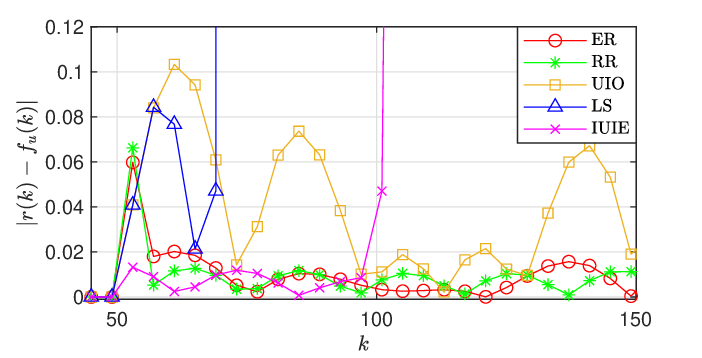} 
    \caption{\small Errors of fault estimates.}
    \label{fig: NonminPhase ef}
    \end{minipage}
\end{figure}

\begin{figure}[h]
    \centering
    \includegraphics[scale=0.7]{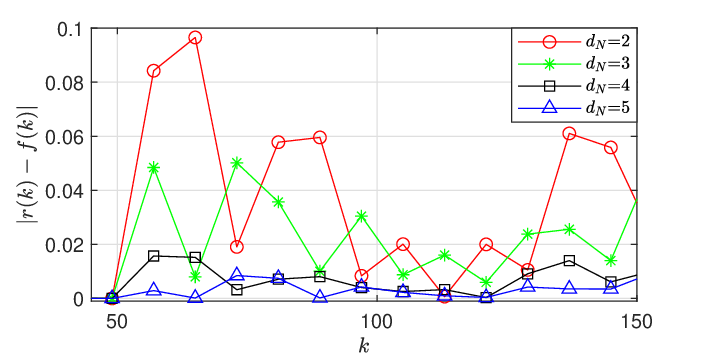} 
    \caption{\small Errors of fault estimates with different degrees.}\label{fig:NonmindiffdN ef}
\end{figure}

\subsection{Multi-area power systems}
\subsubsection{System description}
Consider a multi-area power system described in~\cite{pan2019static}. Suppose each area of the power system can be represented by a model with equivalent governors, turbines, and generators. Then, in area~$i$ for~$i \in \{1,2,3\}$, the dynamics of frequency~$\Delta \bdw_i$ can be written as 
\begin{align}\label{eq:power sys}
    \left\{ \begin{array}{ll}
         \Delta \dot{\bdw}_i &= \frac{\bdw_0}{2 h_i S_{B_i}} (\Delta p_{m_i} - \Delta p_{tie_i} - \Delta p_{d_i} - \frac{1}{D_{li}} \Delta \bdw_i) ,\\
         \Delta p_{m_i} &= \sum^{Gen_{i}}_{g=1} \Delta p_{m_{ig}}, 
        ~\Delta p_{tie_i} = \sum_{j \in N_{br_i}} \Delta p_{tie_{ij}}, \\
         \Delta \dot{p}_{m_{ig}} &= -\frac{1}{T_{ch_{ig}}} (\Delta p_{m_{ig}} + \frac{1}{S_{i}} \Delta \bdw_i - \rho_{ig} \Delta p_{agc_i}), \\
         \Delta \dot{p}_{tie_{ij}} &= 2\pi P_{T_{ij}} (\Delta \bdw_i - \Delta \bdw_j), \\
         ACE_i &= \zeta_i \Delta \bdw_i + \Delta p_{tie_i}, \\
         \Delta \dot{p}_{agc_i} &= -K_{I_i} ACE_i,
    \end{array}
    \right.
\end{align}
where~$h_i$ represents the equivalent inertia constant, $\bdw_0$ denotes the nominal frequency, $S_{B_i}$ is the power base, $\Delta p_{m_i}$ denotes the total generated power, $\Delta p_{tie_i}$ denotes the total tie-line power exchanges from area~$i$, $\Delta p_{d_i}$ denotes the deviation caused by the load, and~$1/D_{li} \Delta \bdw_i$ is the deviation caused by the frequency dependency of the load. 
Let~$G_{en_i}$ and~$N_{br_i}$ be the number of generators and the set of areas that connect to area~$i$, respectively. The term~$\Delta p_{m_{ig}}$ denotes the power generated by the $g$th generator,~$\Delta p_{tie_{ij}}$ is the power exchanges between area~$i$ and~$j$, and~$P_{T_{ij}}$ is the maximum transfer power on the line, which is assumed to be constant. It holds that~$\Delta p_{tie_{ij}} = -\Delta p_{tie_{ji}}$.
For the dynamics of~$\Delta p_{m_{ig}}$, $T_{ch_{ig}}$ is the governor-turbine's time constant, and $S_{i}$ is the drop coefficient. 
The term~$\Delta p_{agc_i}$ is the automatic generation control (AGC) signal and $\rho_{ig}$ is the participating factor, i.e.,~$\sum^{Gen_i}_{g=1} \rho_{ig}=1$.
The area control error signal is denoted by~$ACE_i$ and~$\zeta_i$ is the frequency bias factor. The AGC signal~$\Delta p_{agc_i}$ in the last line of~\eqref{eq:power sys} is in integration of~$ACE_i$ with the integral gain $K_{I_i}$. The parameters are provided in Table~\ref{tab:init}.

\begin{table}[t]
\begin{center}
\caption{Parameters of the multi-area power system.} \label{tab:init}   
        \begin{tabular}{ cc |cc}
        \hline 
        Name &Values &Name &Values\\
        \hline
        $\bdw_0$  & 60 Hz          &$D_{l1}$ &0.0064 Hz/MW\\
        $h_1$     &4.41 MW/MVA     &$D_{l2}$ &0.0045 Hz/MW \\
        $h_2$     &4.15 MW/MVA     &$D_{l3}$ &0.0056 Hz/MW\\   
        $h_3$     &3.46 MW/MVA     &$Gen_1$  &2\\
        $S_{B_1}$ &1500 MVA        &$Gen_2$  &3\\
        $S_{B_2}$ &2100 MVA        &$Gen_3$  &2\\
        $S_{B_3}$ &1700 MVA        &$\zeta_1$ &500.0064 Hz/MW\\
        $S_1$     &0.002  MW/Hz    &$\zeta_2$ &700.0045 Hz/MW\\
        $S_2$     &0.0014 MW/Hz    &$\zeta_3$ &566.6723 Hz/MW\\
        $S_3$     &0.0018 MW/Hz    &$K_{I_i}$   &0.65  \\
        $\rho_{1i}$,~$\rho_{3i}$ &1/2 &$\rho_{2i}$ &1/3\\
        $P_{T_{12}}$ &2100 MW  &$P_{T_{13}}$ &2100 MW \\
        $P_{T_{23}}$ &2100 MW  &$T_{ch_{ig}}$ &1.4950\\ 
        \hline
        \end{tabular}
  \end{center}
\end{table}

Note that different faults may happen due to the vulnerabilities of multi-area power systems. Here, the following fault scenarios are considered: 
\begin{enumerate}[label=(\roman*), itemsep = 0mm, topsep = 0mm, leftmargin = 8mm]
    \item faults on the tie line between areas that cause deviation in frequency, i.e.,~$\Delta \dot{p}_{tie_{ij}} = 2\pi P_{T_{ij}} (\Delta \bdw_i - \Delta \bdw_j + f_{tie_{ij}})$;
    \item faults on the AGC part of area~$i$, i.e.,~$\Delta \dot{p}_{agc_i} = -K_{I_i} (ACE_i+f_{agc_i})$;
    \item faults on the sensors of area~$i$, i.e.,~$y_i(t) = C_i x_i(t) + D_{f,i} f_{y_i}$, where~$y_i$,~$C_i$ and~$x_i$ are the output, output matrix, and state of area~$i$, respectively. The matrix~$D_{f,i}$ characterizes the sensors that are vulnerable.
\end{enumerate}

Based on the dynamics~\eqref{eq:power sys} and descriptions of the faults, the state-space model of area~$i$ in the presence of faults becomes
\begin{align*}
\left\{ 
    \begin{array}{ll}
     \dot{x}_i (t) &= A_{ii} x_i (t) + B_{d,i} \Delta p_{d_i}(t) + B_{\omega,i}\omega_i(t) + \sum_{j\in N_{br_i}} A_{ij} x_j(t) + B_{f,i} f_i(t)  \\
      y_i(t) &= C_i x_i(t) + D_{\omega,i} \omega_i(t) + D_{f,i} f_{y_i}(t), 
    \end{array}
\right.
\end{align*}
where the state~$x_i =\left[\Delta p_{tie_{i}} ~\Delta \bdw_i ~\{\Delta p_{m_{ig}}\}_{1:Gen_i} ~\Delta p_{agc_i}\right]^{\top}$, ~$f_i = [\{f_{tie_{ij}}\}_{j\in N_{br_i}} ~f_{agc_i}]^{\top}$ is the process fault signal.
Signal~$\omega$ denotes noise in the system. The matrices~$A_{ii}, ~B_{d,i}, ~A_{ij}, ~B_{f,i}, D_{f,i}$ can be obtained based on the dynamics~\eqref{eq:power sys} and the vulnerable parts of area~$i$. The output matrix~$C_i$ is a tall or square matrix with the full column rank, i.g.,~$C_i = I$. The matrices~$B_{\omega,i}$ and~$D_{\omega,i}$ indicate which signal is affected by the noise. 
Stacking the state of each area, i.e.,~$x = [x_1^{\top}, ~x_2^{\top}, ~x_3^{\top}]^{\top}$, and discretizing the system with sampling period~$0.1s$ results in the discrete-time \mbox{state-space} model for the whole three-area power system in the form of~\eqref{eq:SS model}. The system matrices are given by
\begin{align*}
    A &= \begin{bmatrix}
        A_{11} &A_{12} &A_{13}\\
        A_{21} &A_{22} &A_{23}\\
        A_{31} &A_{32} &A_{33}
    \end{bmatrix},
    ~B_{d} = \text{diag}(B_{d,1}, B_{d,2}, B_{d,3}), B = D = 0,\\
    B_{f} &= \text{diag}(B_{f,1}, B_{f,2}, B_{f,3}),
    ~B_{\omega} = \text{diag}(B_{\omega,1}, B_{\omega,2}, B_{\omega,3}),\\
    D_{\omega} &= \text{diag}(D_{\omega,1}, D_{\omega,2}, D_{\omega,3}),
    ~D_{f} = \text{diag}(D_{f,1}, D_{f,2}, D_{f,3}).
\end{align*}

Here, we consider faults in the tie-line of area~$1$, the AGC part of area~$2$, and the measurement of area~$1$. 
The corresponding faulty matrices are
\begin{align*}
    &B_{f,1} = [2\pi P_{T_{12}} ~0 ~0 ~0 ~0]^{\top},~B_{f,2} = [0 ~0 ~0 ~0 ~0 ~-K_{I_2}]^{\top},\\
    &D_{f,1}= [0 ~1 ~0 ~0 ~0]^{\top},~\text{and}~B_{f,3}=D_{f,2}=D_{f,3}=0.
\end{align*}
The unknown loads are~$\Delta p_{d_1}(k) = \Delta p_{d_2}(k)= \Delta p_{d_3}(k) = 1 + v(k)$ with~$v(k)$ denoting the uncertain signal. The signal~$\omega$ is white noise with zero mean and variance~$0.01$. 
The matrices~$B_{\omega}=0$ and~$D_{\omega} = \boldsymbol{1}$, where~$\boldsymbol{1}$ represents a column vector with all elements~$1$.

\subsubsection{Fault detection results}
Suppose that the frequency content of fault signals is $\bar{\Theta} = [0,0.3]$ in the fault detection problem. 
Let us consider process faults first, i.e.,~$f_{tie_{12}}$ and~$f_{agc_2}$, which are zero before~$k=50$ and then become
\begin{align*}
f_{tie_{12}}(k) &=  0.05\sin(0.2k)+0.06\sin(0.3k), ~k>=50 ~\text{and} \\
f_{agc_2}(k) &= 0.08\sin(0.15k)+0.03\sin(0.25k),~k>=50.
\end{align*}

The process of the fault detection task is summarized as: 

\textit{Step 1}. Set the residual dimension and filter degree to $n_r = 3$ and~$d_N = 2$.
Note that the dimension of the filter states is $n_r(d_N+1)=9$, which is smaller than that of the system $n_x=16$.

\textit{Step 2}. Solve the filter coefficients by using the optimization problem in Theorem~\ref{thm:FD filter design} with the AO approach in Algorithm~\ref{alg:algorithm_1}, where the weight~$\alpha = 0.5$. 

\textit{Step 3}. Compute the threshold $J_{th}$ for fault detection based on Theorem~\ref{Thm:performance certificates}, which is $J_{th} = 0.0153$ with the acceptable FAR $\varepsilon_1 = 0.001$ and time interval $\mathcal{T}=10$. 

\textit{Step 4}. Compare the value of the evaluation function~$J(r)$ to $J_{th}$ to render the diagnosis decision.

The fault detection filter developed in the DAE framework is compared with the Luenberger observer designed using fault frequency content information (LO($\Theta$))~\cite{wang2008finite} and the UIO approach designed for the entire frequency range~\cite{gao2015fault}.  
Since the dimensions of residuals generated by LO($\Theta$) and UIO methods are $n_r = n_y = 16$, while $n_r = 3$ in our approach, the evaluation function~$J(r)$ is divided by $n_r$ for comparison, as is the threshold.

\begin{figure}[t]
    \begin{minipage}{0.49\textwidth}
    \centering
    \captionsetup{justification=centering}
    \includegraphics[scale=0.7]{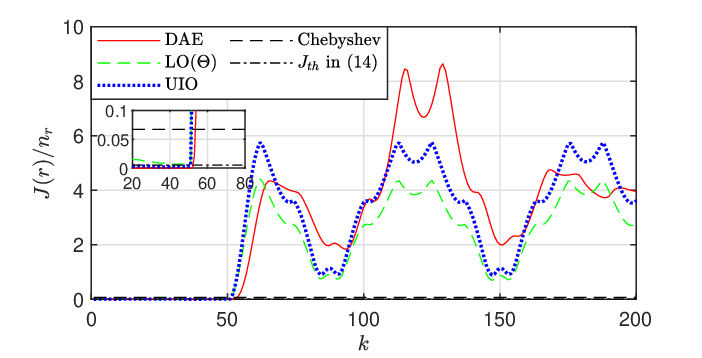} 
    \caption{\small Detection results for $f_{agc_2}$ and $f_{tie_{12}}$.}\label{fig: Fa residual}
    \end{minipage}
    \begin{minipage}{0.49\textwidth}
    \centering
    \captionsetup{justification=centering}
    \includegraphics[scale=0.7]{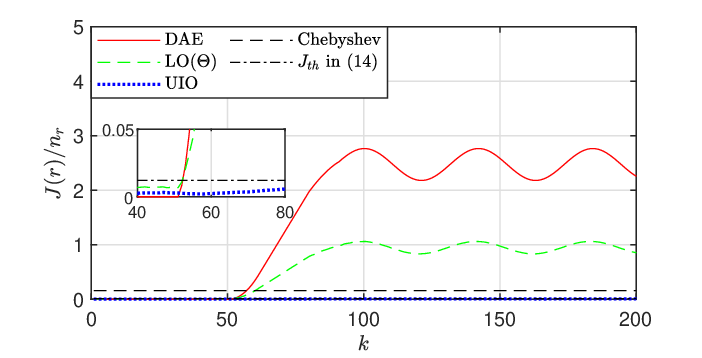} 
    \caption{\small Detection results for~$f_{y_1}$.}
    \label{fig: Fs residual}
    \end{minipage}
\end{figure}

Fig.~\ref{fig: Fa residual} presents the detection results for~$f_{tie_{12}}$ and~$f_{agc_2}$.
One can see that the values of~$J(r(k))/n_r$ remain below the threshold when~$k \leq 50$ and exceed the threshold immediately after faults happen at~$k=50$. Thus, all three approaches have successfully detected the process faults, wherein our proposed method has the best fault sensitivity. 
Moreover, the threshold derived using~\eqref{eq:Jth} is found to be less conservative than the threshold derived using Chebyshev's inequality, i.e.,~$\lambda_{\omega}\sqrt{\mathcal{T} n_r \eta^{\star}_1 /\epsilon_1} = 0.2010$. 

The process of sensor fault detection is the same as above. 
% Differently, the obtained $\Hto$-norm value is~$\|\T_{\omega r}\|_{\Hto} = 0.0159$, the~$\Hmin$ index is~$\| \T_{fr}\|_{\Hmin(\Theta)}=0.0845$ after $5$ iteration steps, and the detection threshold is~$J_{th} = 0.0364$.
The following fault signal is employed to test the detection ability of different methods for sensor faults: 
\begin{align*}
f_{y_1}(k) = \left\{ \begin{array}{l}
     0.005*(k-50), ~80 \geq K > 50,  \\
     0.15+0.02 \sin(0.15k), ~k \geq 80. 
\end{array} \right.
\end{align*}
Fig.~\ref{fig: Fs residual} shows the detection results for~$f_{y_1}$. It can be seen that the UIO approach fails to detect the occurrence of the sensor fault as the amplitude of the fault signal is quite small.
Nonetheless, the LO($\Theta$) method and our proposed method considering the fault frequency information successfully detect the fault. In addition, our method exhibits superior sensitivity to sensor faults compared to the LO($\Theta$) method.

\begin{figure}[t]
    \begin{minipage}{0.49\textwidth}
    \centering
    \captionsetup{justification=centering}
    \includegraphics[scale=0.7]{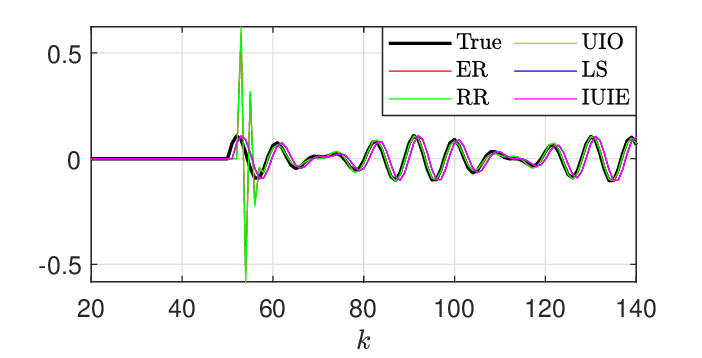} 
    \caption{\small Estimates of $f_{tie{12}}$ without~$\omega$.}
    \label{fig: ftie_now}
    \end{minipage}
    \begin{minipage}{0.49\textwidth}
    \centering
    \captionsetup{justification=centering}
    \includegraphics[scale=0.7]{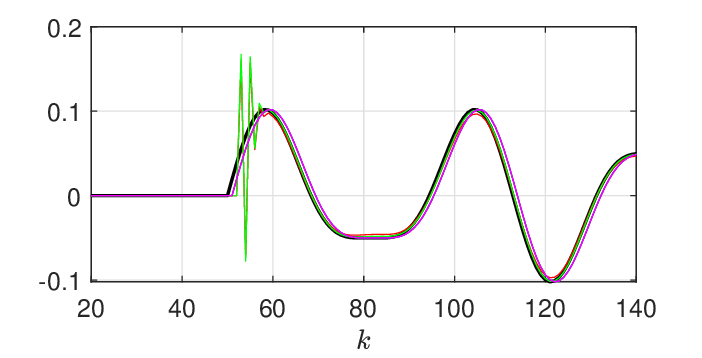} 
    \caption{\small Estimates of $f_{agc_2}$ without~$\omega$.}\label{fig: fagc_now}
    \end{minipage} \\
    \begin{minipage}{0.49\textwidth}
    \centering
    \captionsetup{justification=centering}
    \includegraphics[scale=0.7]{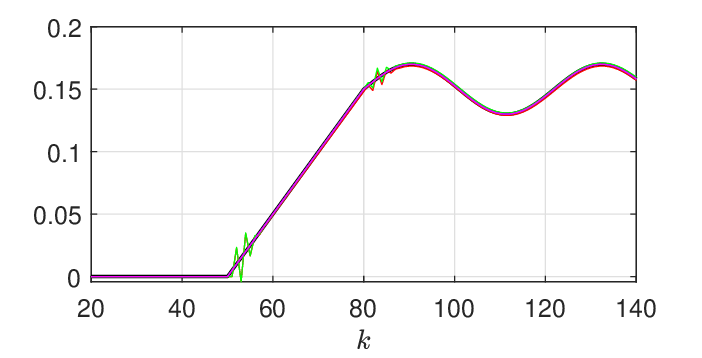} 
    \caption{\small Estimates of $f_{y_1}$ without~$\omega$.}
    \label{fig: fy_now}
    \end{minipage}    
    \begin{minipage}{0.49\textwidth}
    \centering
    \captionsetup{justification=centering}
    \includegraphics[scale=0.7]{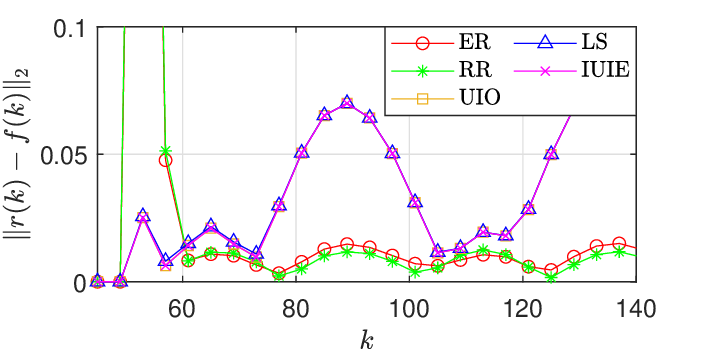} 
    \caption{\small Estimation errors without~$\omega$.}\label{fig: estimation error now}
    \end{minipage}
\end{figure}

\begin{figure}[t]
    \centering
    \includegraphics[scale=0.7]{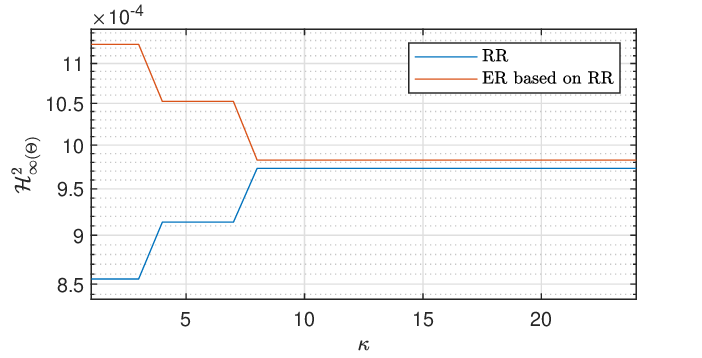}
    \caption{\small Suboptimality gap with different sampling number.}
    \label{fig: subopt gap}
\end{figure}

\begin{figure}[t]
    \begin{minipage}{0.49\textwidth}
    \centering
    \captionsetup{justification=centering}
    \includegraphics[scale=0.7]{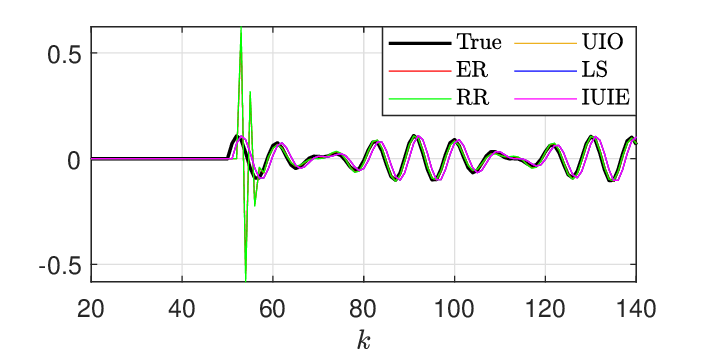} 
    \caption{\small Estimates of $f_{tie_{12}}$ with~$\omega$.}
    \label{fig: ftie_w}
    \end{minipage}
    \begin{minipage}{0.49\textwidth}
    \centering
    \includegraphics[scale=0.7]{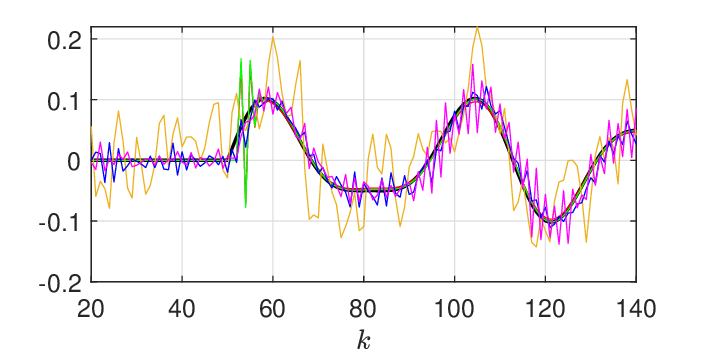} 
    \caption{\small Estimates of $f_{agc_2}$ with~$\omega$.}\label{fig: fagc_w}
    \end{minipage} \\
    \begin{minipage}{0.49\textwidth}
    \centering
    \captionsetup{justification=centering}
    \includegraphics[scale=0.7]{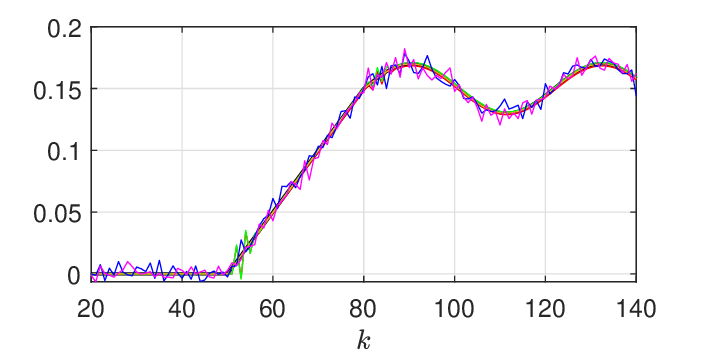} 
    \caption{\small Estimates of $f_{y_1}$ with~$\omega$.\newline}
    \label{fig: fy_w}
    \end{minipage}    
    \begin{minipage}{0.49\textwidth}
    \centering
    \captionsetup{justification=centering}
    \includegraphics[scale=0.7]{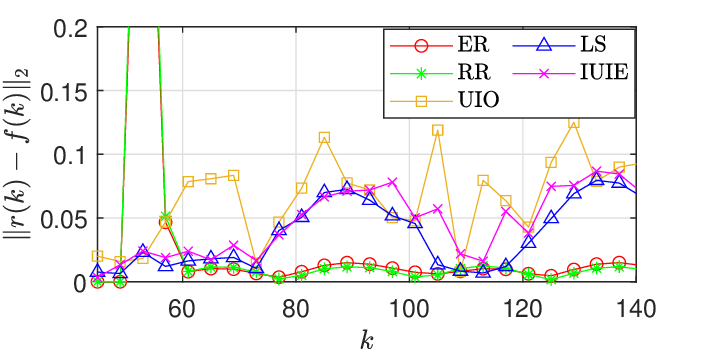}  
    \caption{\small Estimation errors with~$\omega$.}\label{fig: estimation error w}
    \end{minipage}
\end{figure}

\subsubsection{Fault estimation results}
In the fault estimation part, it is supposed that the fault frequency content consists of two disjoint ranges, i.e.,~$\Theta_1 = [0,0.3]$ and~$\Theta_2 = [0.6,0.9]$. The AGC fault signal~$f_{agc_2}$ and the sensor fault signal $f_{y_1}$ remain unchanged with frequencies in~$\Theta_1$. The tie-line fault~$f_{tie_{12}}$ is replaced with 
\begin{align*}
    f_{tie_{12}}(k) &=  0.05\sin(0.8k)+0.06\sin(0.65k), ~k>=50,
\end{align*}
whose frequency is in~$\Theta_2$. The process of the fault estimation task is as follows:

\textit{Step 1}. Set the residual dimension and filter degree to $n_r = n_f = 3$ and~$d_N=4$. 

\textit{Step 2}. Solve two fault estimation filters using the ER method in Theorem~\ref{Thm: exact FE} and the RR method in Theorem~\ref{Thm:FE filter design}, respectively. 
In the ER method, the AO approach is employed to solve~\eqref{eq:FE opt}.
When using the RR method, select a stable denominator~$a(\mfq)$ and some frequency points in~$[0,0.3]$ and~$[0.6,0.9]$ before solving the optimization problem~\eqref{eq:fest opt}.   

\textit{Step 3}. Feeding the control input $u$ and the measurement $y$ into the fault estimation filters yields estimates of fault signals.

To validate the performance of the proposed ER and RR methods, they are compared with the UIO, LS, and IUIE methods in the two cases of no noise and considering noise. 
First, the weight is set to~$\beta = 0$ in the optimization problems~\eqref{eq:FE opt} and~\eqref{eq:fest opt} in the noise-free case.  
The estimation results are presented in Fig.~\ref{fig: ftie_now}-\ref{fig: estimation error now}. 
Specifically, Fig.~\ref{fig: ftie_now}-\ref{fig: fy_now} show the estimates of the tie-line fault~$f_{tie_{12}}$, the AGC fault~$f_{agc_2}$, and the sensor fault~$f_{y_1}$ by different methods. Since the UIO, LS, and IUIE methods both obtain unbiased estimation results with a one-step delay, estimation errors of the three methods are the same as shown in Fig.~\ref{fig: estimation error now}. 
In contrast, the proposed ER and RR methods produce smaller estimation errors than the other three methods. 
Note that though the errors are large at the initial estimation phase, they decrease quickly.
Furthermore, Fig.~\ref{fig: subopt gap} shows the effect of the sampling number of frequency points in the RR method along with the suboptimality gap. For simplicity, a single frequency range~$[0,0.5]$ is considered. The number of frequency points increases from $2$ to $25$, where the new frequency point is added to the previous ones during the process. As a result, the lower bound increases monotonically because more constraints are included in~\eqref{eq:fest opt} when adding frequency points

In the case of considering noise, the weight is set to~$\beta=0.1$.
Since the effect of noise is ignored in the design of the UIO, LS, and IUIE methods, much smaller noise is considered for these three methods.
Fig.~\ref{fig: ftie_w}-\ref{fig: fy_w} depict the estimates of the fault signals in the presence of noise by different methods. 
One can see from Fig.~\ref{fig: fagc_w} that the estimates of the AGC fault signal obtained by the UIO, LS, and IUIE methods are corrupted by noise seriously. 
In contrast, thanks to the noise suppression and design in the specific frequency ranges, the ER and RR methods achieve smaller estimation errors than the other three methods under the effects of noise as illustrated in Fig.~\ref{fig: estimation error w}.   

%%%%%%%%%%%%%%%%%%%%%%%%%%%%%%%%%%%%%%%%%%%%%%%%%%%%%%%%%%%%%%%%%%%
%%%%%%%%%%%%%%%%%%%%%%%%%%%%%%%%%%%%%%%%%%%%%%%%%%%%%%%%%%%%%%%%%%%
%                         Conclusions
%%%%%%%%%%%%%%%%%%%%%%%%%%%%%%%%%%%%%%%%%%%%%%%%%%%%%%%%%%%%%%%%%%%

\section{Conclusions} \label{sec:conslusion}
This paper studies the design methods of FDE filters in the frequency domain for LTI systems with disturbances and stochastic noise.
Based on an integration of residual generation and norm approaches, the optimal design of FDE filters is formulated into a unified optimization framework. 
In future work, a potential research direction is to extend the results to nonlinear systems.

\bibliographystyle{elsarticle-num}
\bibliography{ref}
\end{document}